\newcommand{\defeq}{=}
\newcommand{\B}{\mathsf{B}}
\newcommand{\X}{\mathsf{X}}
\newcommand{\T}{\mathsf{T}}
\newcommand{\mat}[1]{\mathrm{#1}} 
\newcommand{\uarg}{\,\cdot\,}
\newcommand{\ud}{\mathrm{d}}
\providecommand{\abs}[1]{\lvert#1\rvert}
\providecommand{\norm}[1]{\lVert#1\rVert}
\newcommand{\R}{\mathbb{R}}
\renewcommand{\P}{\mathbb{P}}
\newcommand{\E}{\mathbb{E}}
\newcommand{\charfun}[1]{1\left(#1\right)}
\newcommand{\given}{\,:\,}
\newcommand{\var}{\mathrm{var}}
\newcommand{\cov}{\mathrm{cov}}
\newcommand{\bigmid}{\;\big|\;}
\newcommand{\transpose}{{\mathrm{\scriptscriptstyle T}}}
\newcommand{\term}{\mathcal{T}}
\newcommand{\btheta}{\breve{\theta}}
\newcommand{\bTheta}{\breve{\Theta}}
\newcommand{\bx}{\breve{x}}
\newcommand{\by}{\breve{y}}
\newcommand{\bz}{\breve{z}}
\newcommand{\bvartheta}{\breve{\vartheta}}
\newcommand{\pr}{\mathrm{pr}}
\newcommand{\genKernel}{L}
\newcommand{\abcmcmc}{\textsc{abc}-\textsc{mcmc}}
\newcommand{\simulationepsilon}{\delta}
\newcommand{\mref}[1]{\ref{#1}}
\newcommand{\meqref}[1]{(\ref{#1})}
\newtheorem{theorem}{Theorem}
\newtheorem{lemma}[theorem]{Lemma}
\theoremstyle{definition}
\newtheorem{definition}[theorem]{Definition}
\newtheorem{assumption}[theorem]{Assumption}
\newtheorem{algo}[theorem]{Algorithm}
\theoremstyle{remark}
\begin{document}

\title[On the use of ABC-MCMC with inflated tolerance and
post-correction]{On the use of approximate Bayesian computation Markov chain
  Monte Carlo with inflated tolerance and post-correction}
\author{Matti Vihola}
\author{Jordan Franks}
\address{Department of Mathematics and Statistics, University of
Jyväskylä P.O.Box 35, FI-40014 University of Jyväskylä, Finland}
\email{matti.s.vihola@jyu.fi,
  jordan.j.franks@jyu.fi}
\keywords{Adaptive,
  approximate Bayesian computation, 
  confidence interval,
  importance sampling,
  Markov chain Monte Carlo,
  tolerance choice}

\maketitle

\begin{abstract} 
Approximate Bayesian computation allows for inference of complicated
probabilistic models with intractable likelihoods using model
simulations. The Markov chain Monte Carlo implementation of
approximate Bayesian computation is often sensitive to the tolerance 
parameter: low tolerance leads to poor mixing and large tolerance
entails excess bias. We consider an approach using a relatively large
tolerance for the Markov chain Monte Carlo sampler to ensure its
sufficient mixing, and post-processing the output leading to
estimators for a range of finer tolerances. We introduce an
approximate confidence interval for the related post-corrected 
estimators, and propose an adaptive approximate Bayesian computation
Markov chain Monte Carlo, which finds a `balanced' tolerance level
automatically, based on acceptance rate optimisation. Our experiments
show that post-processing based estimators can perform better than
direct Markov chain targetting a fine tolerance, that our confidence
intervals are reliable, and that our adaptive algorithm leads to
reliable inference with little user specification.
\end{abstract} 

\section{Introduction} 
\label{sec:intro} 

Approximate Bayesian computation is a form of likelihood-free
inference \cite[see, e.g., the
reviews][]{marin-pudlo-robert-ryder,sunnaaker-busetto-numminen-corander-foll-dessimoz}
which is used when exact Bayesian inference of a parameter 
$\theta\in\mathsf{T}$ with posterior density $\pi(\theta) \propto
\mathrm{pr}(\theta)L(\theta)$ is impossible, where
$\mathrm{pr}(\theta)$ is the prior density and $L(\theta) \defeq
g(y^*\mid \theta)$ is an intractable likelihood with data
$y^*\in\mathsf{Y}$. More specifically, when the generative model of
observations $g(\uarg\mid \theta)$ cannot be evaluated, but allows for
simulations, we may perform relatively straightforward approximate inference 
based on the following (pseudo-)posterior:
\begin{equation}
    \pi_\epsilon(\theta) \propto \mathrm{pr}(\theta) L_\epsilon(\theta),
    \qquad \text{where}\qquad
    L_\epsilon(\theta) \defeq \E[ K_\epsilon(Y_\theta, y^*) ],
    \quad Y_\theta \sim g(\uarg\mid \theta),
    \label{eq:abc-post}
\end{equation}
where $\epsilon>0$ is a
`tolerance' parameter, and $K_\epsilon:\mathsf{Y}^2\to[0,\infty)$ is a `kernel' function,
which is often taken as a simple cut-off 
$K_\epsilon(y,y^*) = \charfun{\| s(y) - s(y^*)
  \| \le \epsilon}$, 
where $s:\mathsf{Y}\to\R^d$ extracts a vector of summary
statistics from the (pseudo) observations.

The summary statistics are often chosen based on the application at
hand, and reflect what is relevant for the inference task; see also
\citep{fearnhead-prangle,raynal-marin-pudlo-ribatet-robert-estoup}. 
Because $L_\epsilon(\theta)$ may be regarded as a smoothed version of
the true likelihood $g(y^*\mid \theta)$ using the kernel $K_\epsilon$,
it is intuitive that using a too large $\epsilon$ may blur the
likelihood and bias the inference. Therefore, it is generally
desirable to use as small a tolerance $\epsilon>0$ as possible, but
because the computational methods suffer from inefficiency with small
$\epsilon$, the choice of tolerance level is difficult 
\citep[cf.][]{bortot-coles-sisson,sisson-fan2018,tanaka-francis-luciani-sisson}.

We discuss simple post-processing procedure which allows for
consideration of a range of values for the tolerance $\epsilon\le
\simulationepsilon$, based on a single run of approximate Bayesian computation
Markov chain Monte Carlo \citep{marjoram-molitor-plagnol-tavare} with
tolerance $\simulationepsilon$. Such post-processing was suggested in
\citep{wegmann-leuenberger-excoffier} (in case of simple cut-off), and
similar post-processing has been suggested also with 
regression adjustment \citep{beaumont-zhang-balding} (in a rejection
sampling context).
The method, discussed further in
Section \ref{sec:method}, can be
useful for two reasons: A range of tolerances $\epsilon\le \simulationepsilon$ 
may be routinely inspected, which can reveal excess bias in 
the pseudo-posterior $\pi_{\simulationepsilon}$; and the Markov chain Monte Carlo
inference may be implemented with sufficiently large $\simulationepsilon$ to allow
for good mixing.

Our contribution is two-fold. We suggest straightforward-to-calculate
approximate confidence intervals for the posterior mean estimates
calculated from the
post-processing output, and 
discuss some theoretical properties related to it.
We also introduce an adaptive approximate
Bayesian computation Markov chain Monte Carlo
which finds a balanced $\simulationepsilon$ during
burn-in, using acceptance rate as a proxy, and detail a convergence
result for it.


\section{Post-processing over a range of tolerances}
\label{sec:method} 

For the rest of the paper, we assume that
the kernel function in \eqref{eq:abc-post} has the form
\begin{equation*}
    K_\epsilon(y, y^*)
    \defeq \phi\big( d(y,y^*)/\epsilon\big),
\end{equation*}
where $d:\mathsf{Y}^2\to[0,\infty)$ is any `dissimilarity' 
function and $\phi:[0,\infty)\to [0,1]$
is a non-increasing `cut-off' function. Typically 
$d(y,y^*) = \| s(y) - s(y^*) \|$, where $s:\mathsf{Y}^2\to\R^d$ are 
the chosen summaries, and in case of the simple cut-off discussed in
Section \ref{sec:intro}, $\phi(t) = \phi_{\mathrm{simple}}(t) \defeq \charfun{t\le
  1}$. We will implicitly assume that the pseudo-posterior $\pi_\epsilon$ given 
in \eqref{eq:abc-post} is well-defined for all $\epsilon>0$ of interest,
that is, $c_\epsilon = \int \mathrm{pr}(\theta) L_\epsilon(\theta)\ud \theta>0$.

The following summarises the approximate Bayesian computation Markov
chain Monte Carlo algorithm of \cite{marjoram-molitor-plagnol-tavare}, 
with proposal $q$ and tolerance $\delta>0$:
\begin{algo}[\abcmcmc($\simulationepsilon$)]
    \label{alg:abc-mcmc} 
Suppose $\Theta_0\in\mathsf{T}$ and $Y_0 \in\mathsf{Y}$ 
are any starting values, 
such that $\pr(\Theta_0)>0$
and $\phi( d(Y_0,y^*)/\simulationepsilon)>0$. For $k=1,2,\ldots$,
iterate:
\begin{enumerate}[(i)]
    \item Draw $\tilde{\Theta}_k \sim q(\Theta_{k-1},\uarg)$ 
      and $\tilde{Y}_k \sim g(\uarg\mid \tilde{\Theta}_{k})$.
    \item With probability $\alpha_{\delta}(\Theta_{k-1},Y_{k-1};
      \tilde{\Theta}_k,\tilde{Y}_k)$
     accept and set $(\Theta_k,Y_k) \gets
     (\tilde{\Theta}_k,\tilde{Y}_k)$; otherwise reject and set
     $(\Theta_k,Y_k) \gets
     (\Theta_{k-1},Y_{k-1})$, where
      \[
          \alpha_{\delta}(\theta,y;\tilde{\theta},\tilde{y}) \defeq 
          \min\bigg\{1,\frac{\pr(\tilde{\theta})
              q(\tilde{\theta},\theta)
              \phi\big(d(\tilde{y},y^*)/\delta\big) 
              }{
              \pr(\theta) 
                q(\theta,\tilde{\theta})
            \phi\big(d(y,y^*)/\delta\big)
            }
          \bigg\}.
     \]
\end{enumerate}
\end{algo} 
Algorithm \ref{alg:abc-mcmc} may be implemented by storing
only $\Theta_k$ and the related distances $T_k \defeq d(Y_k,y^*)$, 
and in what follows, we
regard either $(\Theta_k,Y_k)_{k\ge 1}$
or $(\Theta_k,T_k)_{k\ge 1}$ as the output of Algorithm
\ref{alg:abc-mcmc}. In practice, the initial values
$(\Theta_0,Y_0)$ should be taken as the state of the Algorithm
\ref{alg:abc-mcmc} run for a number of initial `burn-in' iterations.
We also introduce an adaptive 
algorithm for parameter tuning later (Section \ref{sec:ta}).

It is possible to consider a variant of Algorithm \ref{alg:abc-mcmc}
where many (possibly dependent) observations
$\tilde{Y}_k^{(1)},\ldots,\tilde{Y}_k^{(m)} \sim g(\uarg\mid
\tilde{\Theta}_k)$ are simulated in each iteration, and an average of
their kernel values is used in the accept-reject step
\citep[cf.][]{andrieu-lee-vihola-abc}. We focus here
in the case of single pseudo-observation per iteration, following the
asymptotic efficiency result of \cite{bornn-pillai-smith-woodard},
but remark that our method may be applied in a straightforward manner also 
with multiple observations.

\begin{definition}
    \label{def:abc-estim} 
Suppose $(\Theta_k,T_k)_{k=1,\ldots,n}$ is the output of
\abcmcmc($\simulationepsilon$) for some $\simulationepsilon>0$.
For any $\epsilon\in(0,\simulationepsilon]$ such that $\phi(T_k/\epsilon)>0$ for
some $k=1,\ldots,n$, and for any function $f:\mathsf{T}\to\R$, define
\begin{align*}
    U_k^{(\simulationepsilon,\epsilon)} &\defeq \phi(
        T_k/\epsilon)\big/\phi(T_k/\simulationepsilon), &
    \qquad W_k^{(\simulationepsilon,\epsilon)} &\defeq 
    U_k^{(\simulationepsilon,\epsilon)}\big/\textstyle\sum_{j=1}^n
      U_j^{(\simulationepsilon,\epsilon)}, \\
    E_{\simulationepsilon,\epsilon}(f) &\defeq \textstyle \sum_{k=1}^n W_k^{(\simulationepsilon,\epsilon)} f(\Theta_k), &
    S_{\simulationepsilon,\epsilon}(f) & \defeq \textstyle \sum_{k=1}^n 
    (W_k^{(\simulationepsilon,\epsilon)})^2 
    \big\{f(\Theta_k) - E_{\simulationepsilon,\epsilon}(f)\big\}^2.
\end{align*}
\end{definition} 
Algorithm \ref{alg:abc-estim} in Appendix details how
$E_{\simulationepsilon,\epsilon}(f)$ and $S_{\simulationepsilon,\epsilon}(f)$ can be
calculated in $O(n\log n)$ time simultaneously for all 
$\epsilon\le \simulationepsilon$ in case of simple cut-off.
The estimator $E_{\simulationepsilon,\epsilon}(f)$ approximates
$\E_{\pi_\epsilon}[f(\Theta)]$ and $S_{\simulationepsilon,\epsilon}(f)$ may be
used to construct a confidence interval; see Algorithm \ref{alg:ci}
below. 
Theorem \ref{thm:est-valid} details 
consistency of $E_{\simulationepsilon,\epsilon}(f)$, and relates
$S_{\simulationepsilon,\epsilon}(f)$ to the
limiting variance, in case the following (well-known) condition
ensuring a central limit theorem holds:
\begin{assumption}[Finite integrated autocorrelation]
  \label{a:iact} 
  Suppose that $\E_{\pi_{\epsilon}}[f^2(\Theta)]<\infty$ and
  $\sum_{k\ge 1} \rho_k^{(\simulationepsilon,\epsilon)}$ is finite, with
  $\rho_k^{(\simulationepsilon,\epsilon)} \defeq \mathrm{Corr}\big( h_{\simulationepsilon,\epsilon}(\Theta_0^{(s)},
  Y_0^{(s)}), h_{\simulationepsilon,\epsilon}(\Theta_k^{(s)}, Y_k^{(s)})\big)$, where
  $(\Theta_k^{(s)}, Y_k^{(s)})_{k\ge 1}$ is a stationary version
  of the \abcmcmc($\simulationepsilon$) chain, and 
  \[
      h_{\simulationepsilon,\epsilon}(\theta, y) \defeq 
      w_{\simulationepsilon,\epsilon}(y)
      f(\theta)
      \qquad\text{where}\qquad
      w_{\simulationepsilon,\epsilon}(y) \defeq 
      \phi\big(d(y,y^*)/\epsilon\big)/\phi\big(d(y,y^*)/\simulationepsilon\big).
  \]
\end{assumption}

\begin{theorem}
    \label{thm:est-valid} 
Suppose $(\Theta_k,T_k)_{k\ge 1}$ is the output of 
\abcmcmc($\simulationepsilon$),
and denote by $E_{\simulationepsilon,\epsilon}^{(n)}(f)$ and
$S_{\simulationepsilon,\epsilon}^{(n)}(f)$ 
the estimators in Definition \ref{def:abc-estim}.
If $(\Theta_k,T_k)_{k\ge 1}$ is $\varphi$-irreducible \citep{meyn-tweedie} 
then, for any $\epsilon\in(0,\simulationepsilon)$, we have 
as $n\to\infty$:
\begin{enumerate}[(i)]
\item \label{item:consistency} 
  $E_{\simulationepsilon,\epsilon}^{(n)}(f) \to
  \E_{\pi_\epsilon}[f(\Theta)]$ almost surely, whenever the
  expectation is finite.
\item \label{item:limiting-clt}
  Under Assumption \ref{a:iact}, 
  $n^{1/2}\big(E_{\simulationepsilon,\epsilon}^{(n)}(f) -
  \E_{\pi_\epsilon}[f(\Theta)]\big)\to N\big(0,v_{\simulationepsilon,\epsilon}(f)
  \tau_{\simulationepsilon,\epsilon}(f)\big)$ in
  distribution, where $\tau_{\simulationepsilon,\epsilon}(f) \defeq 
      \big(1 + 2\sum_{k\ge 1} \rho_k^{(\simulationepsilon,\epsilon)}\big)
      \in[0,\infty)$ and $n S_{\simulationepsilon,\epsilon}^{(n)}(f) \to
  v_{\simulationepsilon,\epsilon}(f)\in [0,\infty)$ almost surely.
\end{enumerate}
\end{theorem}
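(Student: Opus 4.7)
The plan is to view \abcmcmc($\simulationepsilon$) as a Metropolis--Hastings chain on the augmented space $\mathsf{T}\times\mathsf{Y}$ with stationary distribution $\tilde\pi_{\simulationepsilon}(\theta,y)\propto\pr(\theta) g(y\mid\theta)\phi(d(y,y^*)/\simulationepsilon)$, and to treat $E^{(n)}_{\simulationepsilon,\epsilon}(f)$ as a self-normalised importance-sampling estimator of $\mu \defeq \E_{\pi_\epsilon}[f(\Theta)]$ driven by this chain. Writing $w(y)\defeq w_{\simulationepsilon,\epsilon}(y)$, the fact that $\phi$ is non-increasing and $\epsilon\le\simulationepsilon$ forces $w\le 1$, and the form of $\tilde\pi_{\simulationepsilon}$ yields $\E_{\tilde\pi_{\simulationepsilon}}[w(Y)] = c_\epsilon/c_{\simulationepsilon}$ together with $\E_{\tilde\pi_{\simulationepsilon}}[w(Y) f(\Theta)] = (c_\epsilon/c_{\simulationepsilon})\mu$, so $\mu$ appears as the ratio of these two limits.

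For part~(\ref{item:consistency}), I would use that $\varphi$-irreducibility combined with the Metropolis--Hastings structure implies Harris recurrence by standard arguments, so that the Harris ergodic theorem applied separately to the numerator and denominator of $E^{(n)}_{\simulationepsilon,\epsilon}(f)$ gives $\tfrac{1}{n}\sum_k U_k^{(\simulationepsilon,\epsilon)}\to c_\epsilon/c_{\simulationepsilon}$ and $\tfrac{1}{n}\sum_k U_k^{(\simulationepsilon,\epsilon)} f(\Theta_k)\to (c_\epsilon/c_{\simulationepsilon})\mu$ almost surely; integrability is immediate from $w\le 1$ and finiteness of $\E_{\pi_\epsilon}|f|$. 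The claim follows by the continuous mapping theorem.

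For part~(\ref{item:limiting-clt}), I would linearise via
\[
\sqrt{n}\,\bigl(E^{(n)}_{\simulationepsilon,\epsilon}(f) - \mu\bigr) = B_n^{-1}\cdot\tfrac{1}{\sqrt{n}}\sum_{k=1}^n \zeta_k,\qquad \zeta_k \defeq w(Y_k)\bigl(f(\Theta_k)-\mu\bigr) = h_{\simulationepsilon,\epsilon}(\Theta_k,Y_k) - \mu w(Y_k),
\]
where $B_n\defeq \tfrac{1}{n}\sum_k U_k^{(\simulationepsilon,\epsilon)}$. Each $\zeta_k$ is centred under $\tilde\pi_{\simulationepsilon}$ with $\var_{\tilde\pi_{\simulationepsilon}}(\zeta) = \E_{\tilde\pi_{\simulationepsilon}}[w^2(f-\mu)^2]$; Assumption~\ref{a:iact} gives $h\in L^2(\tilde\pi_{\simulationepsilon})$ with summable autocorrelations, and since $w$ is bounded the same property transfers to $\zeta = h-\mu w$ by linearity and Cauchy--Schwarz. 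The Markov chain CLT then yields $n^{-1/2}\sum_k\zeta_k\to N(0,\sigma^2)$ in distribution, and Slutsky combined with part~(\ref{item:consistency}) produces the stated limit law for $\sqrt{n}(E^{(n)}_{\simulationepsilon,\epsilon}(f)-\mu)$. The variance estimator is handled analogously: rewriting
\[
n S^{(n)}_{\simulationepsilon,\epsilon}(f) = B_n^{-2}\cdot\tfrac{1}{n}\sum_{k=1}^n \bigl(U_k^{(\simulationepsilon,\epsilon)}\bigr)^2 \bigl(f(\Theta_k) - E^{(n)}_{\simulationepsilon,\epsilon}(f)\bigr)^2
\]
and applying the ergodic theorem to the bounded integrand $w^2(f-\mu)^2$ gives $n S^{(n)}_{\simulationepsilon,\epsilon}(f)\to \E_{\tilde\pi_{\simulationepsilon}}[w^2(f-\mu)^2]/(c_\epsilon/c_{\simulationepsilon})^2 \eqdef v_{\simulationepsilon,\epsilon}(f)$ almost surely.

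The main obstacle will be matching the asymptotic variance $\sigma^2$ of $n^{-1/2}\sum_k\zeta_k$ with the precise product $v_{\simulationepsilon,\epsilon}(f)\,\tau_{\simulationepsilon,\epsilon}(f)$: the former is naturally described via autocovariances of $\zeta = h-\mu w$, while $\tau_{\simulationepsilon,\epsilon}(f)$ is phrased through the autocorrelations of $h$ alone. Reconciling them will require a short covariance manipulation exploiting reversibility of the Metropolis--Hastings kernel together with the bound $w\le 1$, to verify that the cross-series $\sum_k\cov(h_0,w_k)$ and $\sum_k\cov(w_0,w_k)$ are absolutely summable and combine into the claimed form; the rest of the argument is a routine self-normalised importance-sampling and ratio-estimator calculation.
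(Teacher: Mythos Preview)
Your approach is essentially the paper's: recognise Algorithm~\ref{alg:abc-mcmc} as Metropolis--Hastings on $\mathsf{T}\times\mathsf{Y}$ with target $\tilde\pi_{\simulationepsilon}(\theta,y)\propto\pr(\theta)g(y\mid\theta)\phi(d(y,y^*)/\simulationepsilon)$, obtain Harris recurrence from $\varphi$-irreducibility, note $w_{\simulationepsilon,\epsilon}\le 1$, and treat $E^{(n)}_{\simulationepsilon,\epsilon}(f)$ as a self-normalised importance-sampling ratio. The paper's proof is shorter only because for part~(\ref{item:limiting-clt}) it delegates both the CLT and the convergence $nS^{(n)}\to v_{\simulationepsilon,\epsilon}(f)$ wholesale to \citep[Theorems~7(i) and~9]{vihola-helske-franks}, after checking the moment bound $\E_{\tilde\pi_{\simulationepsilon}}[w^2 f^2]\le c_{\simulationepsilon,\epsilon}\,\E_{\pi_\epsilon}[f^2]<\infty$; your linearisation via $\zeta_k=w(Y_k)\bigl(f(\Theta_k)-\mu\bigr)$ is precisely what those cited results unpack. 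The obstacle you single out---that the asymptotic variance is stated as the product $v_{\simulationepsilon,\epsilon}(f)\,\tau_{\simulationepsilon,\epsilon}(f)$ with $\tau$ the integrated autocorrelation of $h=wf$ rather than of the centred $\zeta$---is genuine and is not argued in the paper's own text; it is simply absorbed into the cited reference. One slip: the integrand $w^2(f-\mu)^2$ is not bounded as you write (only $w$ is), but the same change-of-measure step the paper uses for $m_f^{(2)}$ gives its $\tilde\pi_{\simulationepsilon}$-integrability from $w\le 1$ and $\E_{\pi_\epsilon}[f^2]<\infty$.
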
 
Proof of Theorem \ref{thm:est-valid} is given in Appendix.
Inspired by Theorem \ref{thm:est-valid}, we suggest to report the
following approximate confidence intervals for the
suggested estimators:
\begin{algo} 
    \label{alg:ci} 
Suppose $(\Theta_k,T_k)_{k=1,\ldots,n}$ is the output of
\abcmcmc($\simulationepsilon$) and $f:\Theta\to\R$ is a function,
then for any $\epsilon\le \simulationepsilon$:
\begin{enumerate}[(i)]
    \item Calculate $E_{\simulationepsilon,\epsilon}(f)$ and
      $S_{\simulationepsilon,\epsilon}(f)$ 
      as in Definition
      \ref{def:abc-estim} (or in Algorithm \ref{alg:abc-estim}).
    \item \label{item:iact} Calculate $\hat{\tau}_{\simulationepsilon}(f)$, an
      estimate of
      the integrated autocorrelation of $\big(f(\Theta_k)\big)_{k=1,\ldots,n}$. 
    \item \label{item:ci} Report the confidence interval
      \[
          \big[ E_{\simulationepsilon,\epsilon}(f) \pm
          z_q  \big(S_{\simulationepsilon,\epsilon}(f)
          \hat{\tau}_{\simulationepsilon}(f)\big)^{1/2}
          \big],
      \]
      where $z_q>0$ corresponds to the desired normal quantile.
\end{enumerate}
\end{algo} 
The confidence interval in Algorithm \ref{alg:ci} is straightforward
application of Theorem \ref{thm:est-valid}, except for using a common
integrated autocorrelation estimate 
$\hat{\tau}_{\simulationepsilon}(f)$ for all
$\tau_{\simulationepsilon,\epsilon}(f)$. This relies on the
approximation $\tau_{\simulationepsilon,\epsilon}(f) \lessapprox
\tau_{\simulationepsilon}(f)$, which may not always be entirely
accurate, but likely to be reasonable, as illustrated by Theorem
\ref{thm:simple-acf} in Section \ref{sec:theoretical} below.  We
suggest using a common $\hat{\tau}_{\simulationepsilon}(f)$ for all
tolerances because direct estimation of integrated autocorrelation is
computationally demanding, and likely to be unstable for small
$\epsilon$. 

The classical choice for $\hat{\tau}_{\simulationepsilon}(f)$ in Algorithm
\ref{alg:ci}\eqref{item:iact} is windowed
autocorrelation, $\hat{\tau}_{\simulationepsilon}(f) = \sum_{k=-\infty}^\infty
\omega(k)\hat{\rho}_k$, with some $0\le \omega(k)\le 1$, where $\hat{\rho}_k$ is
the sample autocorrelation of $\big(f(\Theta_k)\big)$
\cite[cf.][]{geyer-practical}. We employ this approach in our
experiments with $\omega(k) =
\charfun{|k|\le M}$ where the cut-off lag
$M$ is chosen adaptively as 
the smallest integer such that $M \ge 5\big(
1 + 2 \sum_{i=1}^M \hat{\rho}_k\big)$ \citep{sokal-notes}.
Also more sophisticated techniques
for the calculation of the asymptotic variance have been suggested
\cite[e.g.][]{flegal-jones}.

We remark that, although we focus here on the case of using a common
cut-off $\phi$ for both the \abcmcmc($\simulationepsilon$) and the 
post-correction, one could also use a different cut-off $\phi_s$
in the simulation phase, as considered by
\cite{beaumont-zhang-balding} in the regression context. The extension
to Definition \ref{def:abc-estim} is straightforward, 
setting $U_k^{(\simulationepsilon,\epsilon)} \defeq \phi(
        T_k/\epsilon)\big/\phi_s(T_k/\simulationepsilon)$, 
and Theorem \ref{thm:est-valid} remains valid under a
support condition.


\section{Theoretical justification} 
\label{sec:theoretical} 

The following result, whose proof is given in Appendix, 
gives an expression for the integrated
autocorrelation in case of simple cut-off.

\begin{theorem}
    \label{thm:simple-acf} 
    Suppose Assumption \ref{a:iact} holds and
    $\phi=\phi_\mathrm{simple}$, then
    \[
        \tau_{\simulationepsilon,\epsilon}(f) - 1 
        = \frac{ \big(\check{\tau}_{\simulationepsilon,\epsilon}(f)-1\big)
        \var_{\pi_{\simulationepsilon}}(f_{\simulationepsilon,\epsilon}) +
        2 \int \pi_{\simulationepsilon}(\theta)
        \bar{w}_{\simulationepsilon,\epsilon}(\theta)
        \big(1-\bar{w}_{\simulationepsilon,\epsilon}(\theta)\big)
        \frac{r_{\simulationepsilon}(\theta)}{1-r_{\simulationepsilon}(\theta)} f^2(\theta) \ud \theta}{
        \var_{\pi_{\simulationepsilon}}(f_{\simulationepsilon,\epsilon}) 
        +\int \pi_{\simulationepsilon}(\theta)
        \bar{w}_{\simulationepsilon,\epsilon}(\theta)
        \big(1-\bar{w}_{\simulationepsilon,\epsilon}(\theta)\big)
        f^2(\theta) \ud \theta },
    \]
    where $\bar{w}_{\simulationepsilon,\epsilon}(\theta) \defeq
    L_\epsilon(\theta)/L_{\simulationepsilon}(\theta)$, 
    $f_{\simulationepsilon,\epsilon}(\theta) \defeq f(\theta)
    \bar{w}_{\simulationepsilon,\epsilon}(\theta)$,
    $\check{\tau}_{\simulationepsilon,\epsilon}(f)$ is the integrated autocorrelation 
    of $\{f_{\simulationepsilon,\epsilon}(\Theta_k^{(s)})\}_{k\ge 1}$ and
    $r_{\simulationepsilon}(\theta)$ is the rejection
    probability of the \abcmcmc($\simulationepsilon$) chain at $\theta$.
\end{theorem}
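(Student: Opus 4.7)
The plan is to exploit two features specific to $\phi=\phi_{\mathrm{simple}}$. First, $w_{\simulationepsilon,\epsilon}(\uarg)$ is $\{0,1\}$-valued, so $w^2=w$. Second, when the chain is confined to $\{d(y,y^*)\le\simulationepsilon\}$ (as under stationarity), the Metropolis--Hastings ratio in Algorithm \ref{alg:abc-mcmc} no longer depends on the current $y$: a proposal is accepted iff the standard $\theta$-MH step accepts \emph{and} $d(\tilde Y_k,y^*)\le\simulationepsilon$. Consequently $(\Theta_k^{(s)})$ is itself a Markov chain with state-dependent rejection probability $r_{\simulationepsilon}(\theta)$, and, conditional on an accept at step $k$, the new $Y_k^{(s)}$ is a fresh draw from $g(\uarg\mid\Theta_k^{(s)})\mathbf{1}(d(\uarg,y^*)\le\simulationepsilon)/L_{\simulationepsilon}(\Theta_k^{(s)})$, independent of the past.

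Using these observations and $\E[w_{\simulationepsilon,\epsilon}(Y_0^{(s)})\mid\Theta_0^{(s)}]=\bar{w}_{\simulationepsilon,\epsilon}(\Theta_0^{(s)})$ together with the decomposition $\bar{w}=\bar{w}^2+\bar{w}(1-\bar{w})$, a one-step computation gives
\[
\var(h_{\simulationepsilon,\epsilon})=\var_{\pi_{\simulationepsilon}}(f_{\simulationepsilon,\epsilon})+\int\pi_{\simulationepsilon}(\theta)\bar{w}_{\simulationepsilon,\epsilon}(\theta)\bigl(1-\bar{w}_{\simulationepsilon,\epsilon}(\theta)\bigr)f^2(\theta)\,\ud\theta,
\]
which is exactly the denominator of the claimed identity.

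For the lag-$k$ covariance, introduce the no-move indicator $N_k\defeq\mathbf{1}(\Theta_0^{(s)}=\cdots=\Theta_k^{(s)})$. By the Markov structure above, $\P(N_k=1\mid\Theta_0^{(s)}=\theta)=r_{\simulationepsilon}(\theta)^k$; on $\{N_k=0\}$ the quantities $w_{\simulationepsilon,\epsilon}(Y_0^{(s)})$ and $w_{\simulationepsilon,\epsilon}(Y_k^{(s)})$ are conditionally independent given the $\Theta$-trajectory, while on $\{N_k=1\}$ we have $Y_k^{(s)}=Y_0^{(s)}$ so $w^2=w$ collapses the pair to a single indicator. Separating the two cases and simplifying yields
\[
\cov(h_0,h_k)=\cov\bigl(f_{\simulationepsilon,\epsilon}(\Theta_0^{(s)}),f_{\simulationepsilon,\epsilon}(\Theta_k^{(s)})\bigr)+\E_{\pi_{\simulationepsilon}}\bigl[r_{\simulationepsilon}(\Theta)^k\,\bar{w}_{\simulationepsilon,\epsilon}(\Theta)(1-\bar{w}_{\simulationepsilon,\epsilon}(\Theta))f^2(\Theta)\bigr].
\]

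Summing over $k\ge 1$, using $\sum_{k\ge 1}r^k=r/(1-r)$ and the definition of $\check\tau_{\simulationepsilon,\epsilon}(f)$, gives
\[
2\sum_{k\ge 1}\cov(h_0,h_k)=\bigl(\check\tau_{\simulationepsilon,\epsilon}(f)-1\bigr)\var_{\pi_{\simulationepsilon}}(f_{\simulationepsilon,\epsilon})+2\int\pi_{\simulationepsilon}(\theta)\bar{w}_{\simulationepsilon,\epsilon}(\theta)\bigl(1-\bar{w}_{\simulationepsilon,\epsilon}(\theta)\bigr)\frac{r_{\simulationepsilon}(\theta)}{1-r_{\simulationepsilon}(\theta)}f^2(\theta)\,\ud\theta,
\]
and dividing by $\var(h_{\simulationepsilon,\epsilon})$ yields $\tau_{\simulationepsilon,\epsilon}(f)-1$ in the stated form. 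The main technical step is justifying the conditional description of $(Y_k^{(s)})$ given the $\Theta$-trajectory: it uses $\phi=\phi_{\mathrm{simple}}$ essentially, since for general $\phi$ the acceptance probability depends on the current $y$ and couples the $\Theta$- and $Y$-dynamics, making the identity $w^2=w$ and the geometric sojourn law unavailable.
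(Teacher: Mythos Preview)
Your proof is correct and follows essentially the same line as the paper's: both exploit that under $\phi_{\mathrm{simple}}$ the acceptance probability does not depend on the current $y$ (so $(\Theta_k^{(s)})$ is marginally Markov and $Y$ is refreshed upon acceptance) together with $w_{\simulationepsilon,\epsilon}^2=w_{\simulationepsilon,\epsilon}$ to obtain the variance decomposition and the lag-$k$ covariance identity. The only difference is cosmetic: the paper reaches the $r_{\simulationepsilon}(\theta)^k$ term by iterating the one-step conditional expectation $k$ times, whereas you obtain it directly by conditioning on the no-move event $N_k$.
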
 

We next discuss how this loosely suggests that $\tau_{\simulationepsilon,\epsilon}(f) \lessapprox
\tau_{\simulationepsilon,\simulationepsilon}(f) =
\tau_{\simulationepsilon}(f)$.
The weight $\bar{w}_{\simulationepsilon,\simulationepsilon}\equiv 1$, 
and under suitable
regularity conditions both $\bar{w}_{\simulationepsilon,\epsilon}(\theta)$ and
$\check{\tau}_{\simulationepsilon,\epsilon}(f)$ are continuous with respect to 
$\epsilon$,
and $\bar{w}_{\simulationepsilon,\epsilon}(\theta)\to 0$ as $\epsilon\to 0$.
Then, for $\epsilon\approx\simulationepsilon$, we have
$\bar{w}_{\simulationepsilon,\epsilon}\approx 1$ and therefore
$\tau_{\simulationepsilon,\simulationepsilon}(f)
\approx \tau_{\simulationepsilon,\epsilon}(f)$. For small $\epsilon$,
the terms with
$\var_{\pi_{\simulationepsilon}}(f_{\simulationepsilon,\epsilon})$ are of order
$O(\bar{w}_{\simulationepsilon,\epsilon}^2)$, and are dominated by the other terms 
of order $O(\bar{w}_{\simulationepsilon,\epsilon})$. The remaining ratio may be written as
\begin{align*}
    \frac{
        2 \int \pi_{\simulationepsilon}(\theta)
        \bar{w}_{\simulationepsilon,\epsilon}(\theta)
        \big(1-\bar{w}_{\simulationepsilon,\epsilon}(\theta)\big)
        \frac{r_{\simulationepsilon}(\theta)}{1-r_{\simulationepsilon}(\theta)} f^2(\theta) \ud \theta}{
        \int \pi_{\simulationepsilon}(\theta)
        \bar{w}_{\simulationepsilon,\epsilon}(\theta)
        \big(1-\bar{w}_{\simulationepsilon,\epsilon}(\theta)\big)
        f^2(\theta) \ud \theta }
   &= 2\E_{\pi_{\simulationepsilon}}\Big[ \bar{g}_{\simulationepsilon,\epsilon}^{2}(\Theta)
   \frac{r_{\simulationepsilon}(\Theta)}{1-r_{\simulationepsilon}(\Theta)}\Big] ,
\end{align*}
where $\bar{g}_{\simulationepsilon,\epsilon} \propto
\{\bar{w}_{\simulationepsilon,\epsilon}(1-\bar{w}_{\simulationepsilon,\epsilon})\}^{1/2}
f$ with $\pi_{\simulationepsilon}(\bar{g}_{\simulationepsilon,\epsilon}^2)=1$. If
$r_{\simulationepsilon}(\theta)\le
r_*<1$, then the term is upper bounded by $2 r_*(1-r_*)^{-1}$, and we
believe it to be often less than $\tau_{\simulationepsilon,\simulationepsilon}(f)$,
because the latter expression is similar to the
contribution of rejections to the integrated autocorrelation;
see the proof of Theorem \ref{thm:simple-acf}.

For general $\phi$, it appears to be hard to obtain similar
theoretical result, but we expect the approximation to be still
sensible. Theorem \ref{thm:simple-acf} relies on $Y_k^{(s)}$ being
independent of $(\Theta_k^{(0)}, Y_k^{(0)})$ conditional on
$\Theta_k^{(s)}$, assuming at least single acceptance. This is not
true with other cut-offs, but we believe that the dependence of
$Y_k^{(s)}$ from $(\Theta_0^{(s)},Y_0^{(s)})$ given $\Theta_k^{(s)}$
is generally weaker than dependence of $\Theta_k^{(s)}$ and
$\Theta_0^{(s)}$, suggesting similar behaviour.

We conclude the section with a general (albeit pessimistic) 
upper bound for the asymptotic variance of the post-corrected estimators.
\begin{theorem} 
    \label{thm:peskun-bound} 
  For any $\epsilon\le \simulationepsilon$,
  denote by $\sigma_{\simulationepsilon, \epsilon}^2(f) \defeq
  v_{\simulationepsilon,\epsilon}(f)
  \tau_{\simulationepsilon,\epsilon}(f)$
  the asymptotic variance of the estimator of Definition 
  \ref{def:abc-estim} (see Theorem  \ref{thm:est-valid}\eqref{item:limiting-clt})
  and $\bar{f}(\theta) = f(\theta) -
  \E_{\pi_{\epsilon}}[f(\Theta)]$, then for any $\epsilon \le \simulationepsilon$, 
  \[
      \sigma_{\simulationepsilon, \epsilon}^2(f) \le
      (c_{\simulationepsilon}/c_\epsilon)
      \big\{
      \sigma_{\epsilon}^2(f)
      + \tilde{\pi}_{\epsilon}\big(
      \bar{f}^2(1-w_{\simulationepsilon,\epsilon})
      \big)
      \big\},
  \]
  where $\tilde{\pi}_{\epsilon}$ is the stationary distribution of
  the direct \abcmcmc($\epsilon$) and $\sigma_{\epsilon}^2(f) \defeq
  \sigma_{\epsilon,\epsilon}^2(f)$ its asymptotic variance.
\end{theorem}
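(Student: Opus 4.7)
The plan is to combine the self-normalised importance sampling CLT implicit in Theorem~\ref{thm:est-valid}\eqref{item:limiting-clt} with a Peskun-style comparison of Markov chain asymptotic variances.

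First, with $w = w_{\simulationepsilon,\epsilon}$ and $\bar f = f - \pi_\epsilon(f)$, I apply the delta method to the ratio $E_{\simulationepsilon,\epsilon}(f) = \sum_k w(Y_k)f(\Theta_k)/\sum_k w(Y_k)$ to rewrite
\[
    \sigma_{\simulationepsilon,\epsilon}^2(f) = \Big(\frac{c_{\simulationepsilon}}{c_\epsilon}\Big)^{\!2}\sigma_{P_{\simulationepsilon}}^2(w\bar f),
\]
where $\sigma_{P_{\simulationepsilon}}^2(\cdot)$ is the Markov chain asymptotic variance of the \abcmcmc($\simulationepsilon$) chain (joint stationary distribution $\tilde\pi_{\simulationepsilon}$), and $\tilde\pi_{\simulationepsilon}(w) = c_\epsilon/c_{\simulationepsilon}$. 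Using the identity $\tilde\pi_\epsilon = (c_{\simulationepsilon}/c_\epsilon)\,w\,\tilde\pi_{\simulationepsilon}$, the claim reduces to the ``core'' Markov-chain comparison
\[
    \sigma_{P_{\simulationepsilon}}^2(w\bar f) \le \tilde\pi_{\simulationepsilon}(w)\,\sigma_{P_\epsilon}^2(\bar f) + \tilde\pi_{\simulationepsilon}\!\big(w(1-w)\bar f^2\big),
\]
since multiplying the right-hand side by $(c_{\simulationepsilon}/c_\epsilon)^{2} = 1/\tilde\pi_{\simulationepsilon}(w)^2$ and using the change of measure yields exactly the stated bound.

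Second, to establish this core comparison I use a Peskun-style argument. In the iid analogue (where $\sigma_P^2(g) = \tilde\pi_{\simulationepsilon}(g^2)$) it reduces to the one-line bound $\tilde\pi_{\simulationepsilon}(w^2\bar f^2) \le 2\tilde\pi_{\simulationepsilon}(w\bar f^2) - \tilde\pi_{\simulationepsilon}(w^2\bar f^2)$, which amounts to $\tilde\pi_{\simulationepsilon}(w(1-w)\bar f^2)\ge 0$ and holds since $w\le 1$. For the Markov chain case I would introduce the auxiliary reversible kernel $Q$ on the joint state space obtained by composing $P_{\simulationepsilon}$ with an additional Metropolis step of acceptance ratio $\min\{1, w(\tilde y)/w(y)\}$; reversibility of $P_{\simulationepsilon}$ with respect to $\tilde\pi_{\simulationepsilon}$ together with $w\le 1$ makes $Q$ reversible with respect to $\tilde\pi_\epsilon$, and a short case analysis on the accept/reject probabilities shows $Q$ and $P_\epsilon$ are Peskun-comparable (in fact they coincide on the common support under the simple cut-off). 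A two-level Kipnis--Varadhan/Dirichlet-form decomposition then relates $\sigma_{P_{\simulationepsilon}}^2(w\bar f)$ to $\sigma_Q^2(\bar f)$: the additional self-transitions of $Q$ relative to $P_{\simulationepsilon}$ contribute precisely the dispersion term $\tilde\pi_{\simulationepsilon}(w(1-w)\bar f^2)$, while Peskun's inequality between $Q$ and $P_\epsilon$ delivers the $\sigma_{P_\epsilon}^2(\bar f)$ part.

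The principal obstacle lies in the second step. The iid analogue makes the form of the dispersion correction transparent and the auxiliary kernel $Q$ provides the right reversible object on which to invoke Peskun, but matching the Dirichlet-form/Kipnis--Varadhan expressions across two chains with different stationary distributions---and in particular handling supports of $\tilde\pi_\epsilon$ and $\tilde\pi_{\simulationepsilon}$ that may differ for a general $\phi$ so that Peskun's ordering applies on the common support---is where the technical work concentrates. This flavour of comparison is developed in the authors' related line of work on importance sampling correction of approximate marginal MCMC, which the proof likely invokes or mirrors in the present setting.
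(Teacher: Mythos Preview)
Your proposal is correct and matches the paper's approach: the paper's entire proof is the one-line statement that the theorem ``follows directly from \cite[Corollary 4]{franks-vihola}'', and your outline (delta-method reduction of the self-normalised estimator to a Markov-chain asymptotic variance, followed by a Peskun-type comparison via an auxiliary kernel) is precisely the content of that cited result. You even correctly anticipated that the argument is imported from the authors' companion work on importance-sampling correction of reversible MCMC.
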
 
Theorem \ref{thm:peskun-bound} follows directly from \cite[Corollary 4]{franks-vihola}.
The upper bound guarantees that a
moderate correction, that is, $\epsilon$ close to
$\simulationepsilon$ and $c_{\simulationepsilon}$ close to $c_\epsilon$, 
is nearly as efficient as direct
\abcmcmc($\simulationepsilon$). Indeed, typically $w_{\simulationepsilon,\epsilon}\to 1$ and
$c_\epsilon\to c_{\simulationepsilon}$ as $\epsilon \to
\simulationepsilon$, in which case Theorem \ref{thm:peskun-bound} implies
$\limsup_{\epsilon\to\simulationepsilon}
\sigma_{\simulationepsilon,\epsilon}^2(f) \le
\sigma_{\epsilon}^2(f)$.  However, as $\epsilon\to 0$, the bound
becomes less informative.


\section{Tolerance adaptation}
\label{sec:ta} 

We propose Algorithm \ref{alg:ta} below to adapt the tolerance
$\simulationepsilon$ in \abcmcmc($\simulationepsilon$) during a
burn-in of length $n_b$, in order to obtain a user-specified overall
acceptance rate
$\alpha^*\in (0,1)$. Tolerance optimisation has been suggested earlier
based on quantiles of distances, with parameters simulated from the
prior \citep[e.g.][]{beaumont-zhang-balding,wegmann-leuenberger-excoffier}.
This heuristic might not be satisfactory in the Markov chain Monte
Carlo context, if the prior is relatively uninformative.  We believe
that acceptance rate optimisation is a more natural alternative, and 
\cite{sisson-fan2018} suggested this as well.

Our method requires also a sequence of decreasing positive step sizes 
$(\gamma_k)_{k\ge 1}$. 
We used $\alpha^*=0.1$ and
$\gamma_k=k^{-2/3}$ in our experiments, and discuss these choices later. 

\begin{algo}
    \label{alg:ta}
Suppose $\Theta_0\in\mathsf{T}$ is a starting value with $\pr(\Theta_0)>0$.
Initialise $\simulationepsilon \defeq d(Y_0,y^*)>0$ where $Y_0 \sim
g(\uarg\mid \Theta_0)$. For $k=1,\ldots, n_b	$, iterate:
\begin{enumerate}[(i)]
\item Draw $\tilde{\Theta}_k \sim q( \Theta_{k-1},\uarg)$ and
  $\tilde{Y}_k \sim g(\uarg\mid \tilde{\Theta}_k)$.
\item With probability $A_k = \alpha_{\simulationepsilon_{k-1}}(\Theta_{k-1},Y_{k-1};
    \tilde{\Theta}_k,\tilde{Y}_k)$
  accept and set $(\Theta_k,Y_k) \gets
   (\tilde{\Theta}_k,\tilde{Y}_k)$; otherwise reject and set
    $(\Theta_k,Y_k) \gets
   (\Theta_{k-1},Y_{k-1})$.
\item $\log \simulationepsilon_{k} \leftarrow \log \simulationepsilon_{k-1} + \gamma_{k} (\alpha^* - A_k)$.
\end{enumerate}
\end{algo}

In practice, we use Algorithm \ref{alg:ta}
with a Gaussian symmetric random walk proposal $q_{\Sigma_k}$, where
the covariance parameter $\Sigma_k$ is adapted simultaneously 
\citep{haario-saksman-tamminen,andrieu-moulines} (see Algorithm
\ref{alg:ta-am} of Supplement \ref{app:ta-am}).
We only detail theory for Algorithm
\ref{alg:ta}, but note that similar simultaneous adaptation 
has been discussed earlier
\cite[cf.][]{andrieu-thoms}, and expect that 
our results could be elaborated accordingly.

The following conditions suffice for convergence of the
adaptation:
\begin{assumption}  \label{a:uniform}
  Suppose $\phi=\phi_{\mathrm{simple}}$ and the following hold:
  \begin{enumerate}[(i)]
  \item \label{item:step-size} $\gamma_k\defeq C k^{-r}$ with $r\in (\frac{1}{2}, 1]$  and $C>0$ a constant.
\item The domain $\mathsf{T}\subset\R^{n_\theta}$, $n_\theta\ge 1$, is a nonempty 
  open set and $\pr(\theta)$ is bounded.
\item \label{a:uniform-q}
  The proposal $q$ is bounded and bounded away from 
  zero.
\item The distances $D_\theta = d(Y_\theta,y^*)$ where $Y_\theta \sim
  g(\uarg\mid \theta)$ admit densities which are uniformly bounded in
  $\theta$.
\item $(\simulationepsilon_k)_{k\ge 1}$ stays in a set $[a,b]$ almost surely, where $0<a \le b <+\infty$.
\item $c_\epsilon = \int \pr(\ud \theta) L_\epsilon(\theta)>0$ for all $\epsilon\in [a,b]$.  
  \end{enumerate}
\end{assumption}

\begin{theorem}\label{thm:ta-uniform}
  Under Assumption \ref{a:uniform}, the expected value of the acceptance 
  probability, with respect to the stationary distribution of the chain, 
  converges to $\alpha^*$.
\end{theorem}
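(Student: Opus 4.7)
The plan is to recognise Algorithm \ref{alg:ta} as a Robbins--Monro stochastic approximation with controlled Markovian noise in the sense of \cite{andrieu-moulines}. Setting $X_k \defeq \log \simulationepsilon_k$ and $Z_k \defeq (\Theta_{k-1}, Y_{k-1}, \tilde\Theta_k, \tilde Y_k)$, Step (iii) reads $X_k = X_{k-1} + \gamma_k (\alpha^* - A(X_{k-1}, Z_k))$, where, for each frozen tolerance $\epsilon$, the augmented chain $(Z_k)_{k\ge 1}$ evolves under a kernel $P_\epsilon$ obtained by applying the \abcmcmc$(\epsilon)$ accept/reject step followed by a fresh proposal draw. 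The associated mean field is $h(\epsilon) \defeq \alpha^* - \bar\alpha(\epsilon)$, where $\bar\alpha(\epsilon)$ is the expectation of $A(\epsilon, \uarg)$ under the stationary law of $P_\epsilon$, i.e.\ the stationary expected acceptance probability.

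The core technical step is a uniform-in-$\epsilon \in [a,b]$ Doeblin minorisation for $(P_\epsilon)$. With simple cut-off, on the event $\{d(\tilde y, y^*) \le a\}$ the acceptance reduces to $\min\{1, \pr(\tilde\theta) q(\tilde\theta,\theta)/[\pr(\theta) q(\theta,\tilde\theta)]\}$, which is bounded from below by Assumption \ref{a:uniform}(ii)--(iii); combined with the uniformly bounded density of $D_\theta$ from (iv), this yields on the common small set $C \defeq \{(\theta, y) : d(y, y^*) \le a\}$ a probability measure $\nu$ and $\eta > 0$ such that $P_\epsilon((\theta,y), \cdot) \ge \eta \nu$ for all $\epsilon \in [a, b]$ and $(\theta, y) \in C$. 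This furnishes uniform geometric ergodicity of $(P_\epsilon)_{\epsilon \in [a, b]}$, a uniformly bounded solution $\hat H_\epsilon$ to the Poisson equation for the centred increment $A - \bar\alpha(\epsilon)$, and Lipschitz continuity of $\epsilon \mapsto \hat H_\epsilon$ via dominated convergence applied to $\phi_{\mathrm{simple}}(\cdot / \epsilon)$.

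Equipped with these regularity properties, the step-size condition (i) (giving $\sum_k \gamma_k = \infty$ and $\sum_k \gamma_k^2 < \infty$), and the almost-sure stability assumption (v), I would invoke the controlled-Markov stochastic approximation result of \cite{andrieu-moulines} to obtain almost-sure convergence of $\log \simulationepsilon_k$ to the root set $\mathcal{R} \defeq \{\epsilon \in [a,b] : \bar\alpha(\epsilon) = \alpha^*\}$. A monotone coupling (enlarging $\epsilon$ only enlarges the pointwise acceptance probability) shows $\bar\alpha$ is non-decreasing on $[a,b]$, and continuity of $\bar\alpha$ then delivers $\bar\alpha(\simulationepsilon_k) \to \alpha^*$, provided $\alpha^*$ lies in the range of $\bar\alpha$ on $[a,b]$, as is implicit in (v). The main obstacle is the uniform Doeblin minorisation: the state spaces of the $P_\epsilon$ differ through the support condition $\phi(d(y,y^*)/\epsilon) > 0$, so the minorisation and recurrence must be established on a common region that is viable for all $\epsilon \in [a,b]$, which is precisely the role of the boundedness and density conditions in Assumption \ref{a:uniform}(ii)--(iv).
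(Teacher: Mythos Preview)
Your overall plan---stochastic approximation with controlled Markov noise, a uniform Doeblin minorisation, bounded Poisson solutions, and Lipschitz regularity in the tolerance---is the same skeleton the paper uses; the paper routes through the more general Theorem~\ref{thm:ta-geometric} and invokes \citep[Theorem~2.3]{andrieu-moulines-priouret} rather than \cite{andrieu-moulines}. The substantive structural difference is that the paper does \emph{not} analyse the full $(\theta,y)$ chain. With $\phi=\phi_{\mathrm{simple}}$ the acceptance ratio equals $\dot\alpha(\theta,\tilde\theta)\,\charfun{d(\tilde y,y^*)\le \epsilon}$ and is independent of the current $y$, so the marginal $\theta$-chain $\dot P_\beta$ of \eqref{eq:marginal-kernel} is itself Markov on the $\epsilon$-independent state space $\{\theta:\pr(\theta)>0\}$. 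The paper then works with its proposal augmentation $P_\beta$ on $(\theta,\theta')$. This dissolves the ``main obstacle'' you flag: there is no varying support to reconcile, and the kernel comparison (Lemma~\ref{lem:contractions}) is then routine. Your full-chain route can probably be pushed through, but this marginal reduction is the simplification you are missing.

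Two points in your sketch need repair. First, the acceptance probability is not bounded below pointwise by Assumption~\ref{a:uniform}(ii)--(iii), since $\pr(\tilde\theta)$ may vanish. The paper's minorisation (Lemma~\ref{lem:uniform}) instead absorbs $\pr(\vartheta)$ into the minorising measure, yielding $\dot P_\beta(\theta,\cdot)\ge (\delta_q C_{\min}/C_{\pr})\,\pi_\beta(\cdot)$ from every $\theta$; the whole space is small, and (iv) plays no role here---its role is the Lipschitz bound $|L_{\beta_1}(\theta)-L_{\beta_2}(\theta)|\le C|e^{\beta_1}-e^{\beta_2}|$ in Lemma~\ref{lem:contractions}. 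Second, your ``monotone coupling'' for $\epsilon\mapsto\bar\alpha(\epsilon)$ is a genuine gap: pointwise monotonicity of $\alpha_\epsilon$ in $\epsilon$ does not imply monotonicity of its \emph{stationary} mean, because $\pi_\epsilon$ also changes with $\epsilon$. The paper proves this by differentiating $\pi_\beta'(\alpha_\beta)$ explicitly (Lemma~\ref{lem:monotonicity}); that monotonicity is what makes $w(\beta)=\tfrac12|e^\beta-e^{\beta^*}|^2$ a valid Lyapunov function for the cited stochastic-approximation theorem.
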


Proof of Theorem \ref{thm:ta-uniform} will follow from the more
general Theorem \ref{thm:ta-geometric} of Supplement
\ref{app:ta-geometric}.  

Polynomially decaying step size sequences as in Assumption
\ref{a:uniform} \eqref{item:step-size} are common in adaptation which
is of the stochastic approximation type as our approach
\citep{andrieu-thoms}. Slower decaying step sizes such as $n^{-2/3}$
often behave better with acceptance rate adaptation 
\citep[cf.][Remark 3]{vihola-ram}. 

Simple random walk Metropolis with covariance adaptation 
\citep{haario-saksman-tamminen} typically leads to a limiting
acceptance rate around $0.234$ \citep{roberts-gelman-gilks}. In case
of a pseudo-marginal algorithm such as \abcmcmc($\delta$), the
acceptance rate is lower than this, 
and decreases when $\delta$ is decreased (see Lemma
\ref{lem:monotonicity} of Supplement \ref{app:ta}).
Markov chain Monte Carlo would typically be necessary when rejection
sampling is not possible, that is, when the prior is far 
from the posterior. In such a case, the likelihood approximation
must be accurate enough to provide reasonable approximation 
$\pi_{\simulationepsilon} \approx \pi_{\epsilon}$.
This suggests that the desired acceptance rate should be
taken substantially lower than $0.234$. 

The choice of the desired acceptance rate $\alpha^*$ could also be
motivated by theory developed for pseudo-marginal Markov chain Monte
Carlo algorithms. \cite{doucet-deligiannidis-kohn} 
rely on log-normality of the likelihood
estimators, which is problematic in our context, because the
likelihood estimators take value zero.
\cite{sherlock-thiery-roberts-rosenthal} find the
acceptance rate $0.07$ to be optimal under certain conditions, but also in 
a quite dissimilar context. Indeed, in our context, 
the $0.07$ guideline assumes a fixed tolerance, and informs about 
choosing the number 
of pseudo-data per iteration. As we stick with single pseudo-data per
iteration following \citep{bornn-pillai-smith-woodard}, the $0.07$
guideline cannot be taken too informative. We recommend slightly 
higher $\alpha^*$ such as $0.1$ to ensure sufficient mixing.

 
\section{Post-processing with regression correction} 
\label{sec:regression} 

\cite{beaumont-zhang-balding} suggested similar post-processing as in
Section \ref{sec:method}, applying a further regression correction. 
Namely, in the context of Section \ref{sec:method}, we may consider
a function $\tilde{f}^{(\epsilon)}(\theta,y) = f(\theta) -
 \bar{s}(y)^\transpose b_{\epsilon}$
where $\bar{s}(y) = s(y) - s(y^*)$
and $b_\epsilon$ is a solution of 
\[
    \min_{a_\epsilon,b_\epsilon} 
    \E_{\tilde{\pi}_\epsilon}\big[ \big\{f(\Theta) -
      a_\epsilon - \bar{s}(Y)^\transpose b_\epsilon \big\}^2 \big]
    = \min_{a_\epsilon,b_\epsilon} \E_{\tilde{\pi}_{\simulationepsilon}}\big[
    w_{\simulationepsilon,\epsilon}(Y) \big\{f(\Theta) -
      a_\epsilon - \bar{s}(Y)^\transpose b_\epsilon \big\}^2 \big],
\]
where  $\tilde{\pi}_\simulationepsilon$
is the stationary distribution of
\abcmcmc($\simulationepsilon$), with marginal $\pi_\simulationepsilon$,
given in Appendix.
When the latter expectation is replaced by its empirical version, 
the solution coincides with weighted least squares
$(\hat{a}_{\epsilon},\hat{b}_{\epsilon}^\transpose)^\transpose
\defeq (\mat{M}^\transpose \mat{W}_\epsilon \mat{M})^{-1} \mat{M}^\transpose
\mat{W}_\epsilon v$, with
$v_k = f(\Theta_k)$, $\mat{W}_\epsilon =
\mathrm{diag}(W_1^{(\simulationepsilon,\epsilon)},\ldots,W_n^{(\simulationepsilon,\epsilon)})$
and with matrix $\mat{M}$ having rows $[M]_{k,\uarg}= (1,
\bar{s}(Y_k)^\transpose)$.

We suggest the following confidence 
interval for $a_\epsilon =
E_{\tilde{\pi}_\epsilon}[\tilde{f}^{(\epsilon)}(\Theta,Y)]$ in the spirit of Algorithm \ref{alg:ci}:
\[
    \big[ \hat{a}_\epsilon \pm
      z_q  \big(S_{\simulationepsilon,\epsilon}^{\mathrm{reg}}
      \hat{\tau}^{\mathrm{reg}}_{\simulationepsilon} \big)^{1/2}
      \big],
\]
where $\hat{\tau}^{\mathrm{reg}}_{\simulationepsilon}$ is the integrated
autocorrelation estimate for $(\hat{F}_k^{(\simulationepsilon)})$ where
$\hat{F}_k^{(\simulationepsilon)} = f(\Theta_k) - \bar{s}^T
\hat{b}_\simulationepsilon$ and
$S_{\simulationepsilon,\epsilon}^\mathrm{reg} = [(\mat{M}^\transpose \mat{W}_\epsilon
M)^{-1}]_{1,1} \sum_{k=1}^n (W_k^{(\simulationepsilon,\epsilon)})^2 (
\hat{F}_k^{(\epsilon)} - \hat{a}_\epsilon)^2$, where the first term is
included as an attempt to account for the increased uncertainty due to estimated
$\hat{b}_\epsilon$, analogous to weighted least squares. 
Experimental results show some promise for
this confidence interval, but we stress that we do not have better
theoretical backing for it, and leave further elaboration of the
confidence interval for future research.


\section{Experiments} 
\label{sec:exp} 


We experiment with our methods on two models, a lightweight Gaussian
toy example, and a Lotka-Volterra model. Our experiments focus on
three aspects: can \abcmcmc($\simulationepsilon$) with larger tolerance
$\simulationepsilon$ and post-correction to a desired tolerance
$\epsilon<\simulationepsilon$ deliver more accurate results than direct
\abcmcmc($\epsilon$); does the approximate confidence interval appear
reliable; how well does the tolerance adaptation work in practice.
All the experiments are implemented in Julia \citep{julia}, and the
codes are available in \url{https://bitbucket.org/mvihola/abc-mcmc}.

Because we believe that Markov chain Monte Carlo is most useful when
little is known about the posterior, we apply covariance adaptation 
\citep{haario-saksman-tamminen,andrieu-moulines} throughout 
the simulation in all our experiments, using an identity covariance initially.
When running the covariance adaptation alone, we employ the step size 
$n^{-1}$ as in the original method of \cite{haario-saksman-tamminen},
and in case of tolerance adaptation,
we use step size $n^{-2/3}$.

Regarding our first question, we investigate running
\abcmcmc($\simulationepsilon$) starting near the posterior mode with different
pre-selected tolerances $\simulationepsilon$.
We first attempted to perform the experiments by
initialising the chains from independent samples of the prior
distribution, but in this case, most of the chains failed to accept a
single move during the entire run. In contrast, our experiments with
tolerance adaptation are initialised from the prior,
and both the tolerances and the covariances are adjusted
fully automatically by our algorithm. 


\subsection{One-dimensional Gaussian model} 
\label{sec:gauss-exp} 

Our first model is a toy model with $\pr(\theta) = N(\theta; 0, 30^2)$,
$g(y\mid \theta) = N(y; \theta, 1)$ 
and $d(y,y^*) = |y|$.
The true posterior without approximation is Gaussian.  While this
scenario is clearly academic, the prior is far from the posterior,
making rejection sampling approximate Bayesian computation
inefficient. It is clear that $\pi_\epsilon$ has zero mean for all
$\epsilon$ (by symmetry), and that $\pi_\epsilon$ is more spread for
bigger $\epsilon$. We experiment with both simple cut-off
$\phi_\mathrm{simple}$ and Gaussian cut-off $\phi_{\mathrm{Gauss}}(t)
\defeq e^{-t^2/2}$.

We run the experiments with 10,000 independent chains, each for 11,000
 iterations including 1,000 burn-in. The chains were always started
 from $\theta_0 = 0$. We inspect estimates for the posterior mean 
 $\E_{\pi_\epsilon}[f(\Theta)]$ for $f(\theta) = \theta$ and
 $f(\theta) = |\theta|$. Figure \ref{fig:gauss-simple} (left) shows
 the estimates with their confidence intervals based on a
 single realisation of \abcmcmc(3). Figure \ref{fig:gauss-simple}
 (right) shows box plots of the estimates calculated from each
 \abcmcmc($\simulationepsilon$), with $\simulationepsilon$ indicated
 by colour; the rightmost box plot (blue) corresponds to \abcmcmc(3),
 the second from the right (red) \abcmcmc(2.275) etc. For
 $\epsilon=0.1$, the post-corrected estimates from \abcmcmc($0.825$)
 and \abcmcmc($1.55$) appear slightly more accurate than direct
 \abcmcmc($0.1$).  Similar figure for Gaussian cut-off, with similar
 findings, may be found in the Supplement Figure
 \ref{fig:gauss-gauss}.
\begin{figure} 
\includegraphics[height=5cm]{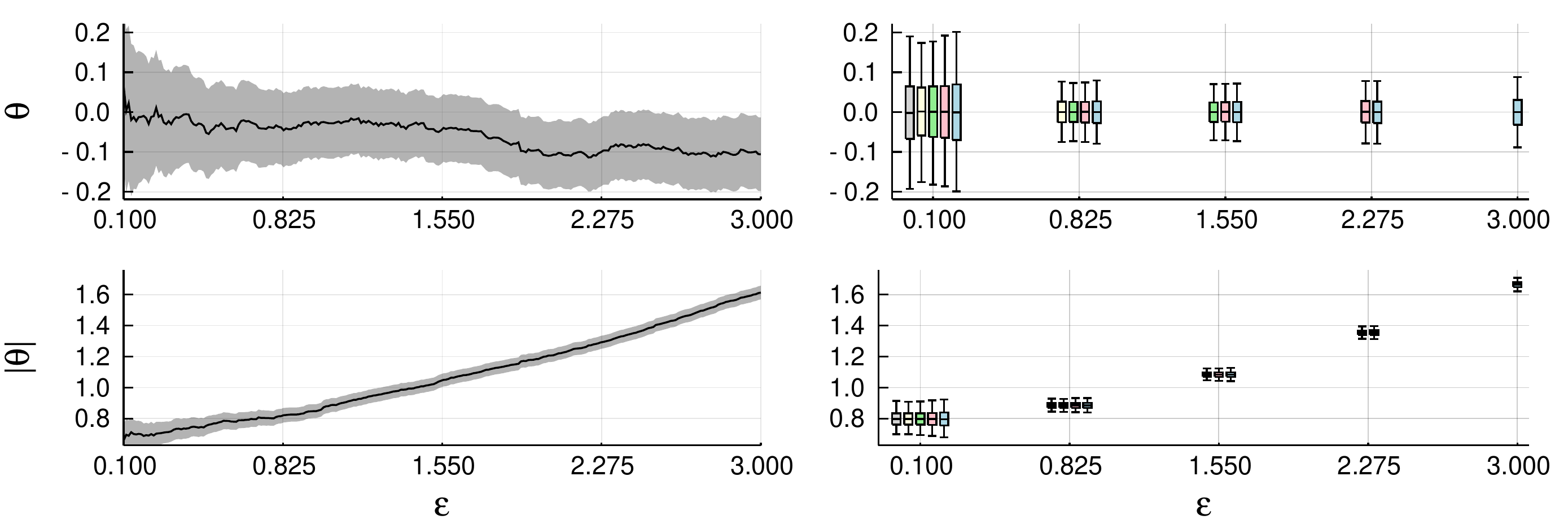} 
\caption{Gaussian model with $\phi_\mathrm{simple}$.
  Estimates from single run of \abcmcmc($3$) (left) and
  estimates from 10,000 replications of
  \abcmcmc($\simulationepsilon$) for $\simulationepsilon\in\{0.1, 0.825, 1.55, 2.275,
    3\}$ indicated by colours.
  } 
\label{fig:gauss-simple}
\end{figure}

Table \ref{tab:gauss-confidence} shows frequencies of the
calculated 95\% confidence intervals containing the `ground truth',
as well as mean acceptance rates.
The ground truth for $\E_{\pi_\epsilon}[f_1(\Theta)]$ is known to be
zero for all $\epsilon$, and the overall mean of all the calculated
estimates is used as the ground truth for
$\E_{\pi_\epsilon}[f_2(\Theta)]$.  The frequencies appear close to
ideal with the post-correction approach, being slightly pessimistic in
case of simple cut-off as anticipated by the theoretical
considerations (cf.~Theorem \ref{thm:simple-acf} and the 
discussion below).

\begin{table} 
    \caption{Frequencies of the 95\% confidence intervals, from
      \abcmcmc($\delta$) to tolerances $\epsilon$, containing
      the ground truth in the Gaussian model.}
    \label{tab:gauss-confidence} 
     \small
    \begin{center}
    \begin{tabular}{llccccccccccc}
        \toprule
        \small
        &&\multicolumn{5}{c}{$f(x)=x$}
        &\multicolumn{5}{c}{$f(x)=|x|$}
        & \raisebox{-1ex}{Acc.}\\
        \cmidrule(lr){3-7}
        \cmidrule(lr){8-12}
        Cut-off&\raisebox{-1.5pt}{$\simulationepsilon$} 
        $\backslash$
        \raisebox{3pt}{$\epsilon$}
        &0.10 & 0.82 & 1.55 & 2.28 & 3.00  
        &0.10 & 0.82 & 1.55 & 2.28 & 3.00  & rate\\
        \midrule
        \multirow{5}{*}{$\phi_{\mathrm{simple}}$}
& 0.1 & 0.93 &      &      &      &      & 0.93 &      &      &      &
   & 0.03\\
   & 0.82 & 0.97 & 0.95 &      &      &      & 0.95 & 0.94 &      &
&      & 0.22\\
& 1.55 & 0.97 & 0.97 & 0.95 &      &      & 0.96 & 0.95 & 0.95 &
&      & 0.33\\
& 2.28 & 0.98 & 0.97 & 0.96 & 0.95 &      & 0.96 & 0.96 & 0.96 & 0.95
&      & 0.4\\
& 3.0 & 0.98 & 0.98 & 0.97 & 0.97 & 0.95 & 0.96 & 0.96 & 0.96 & 0.95 &
0.95 & 0.43\\
 \midrule
        \multirow{5}{*}{$\phi_{\mathrm{Gauss}}$} 
        & 0.1 & 0.93 &      &      &      &      & 0.93 &      &
        &      &      & 0.05\\
        & 0.82 & 0.94 & 0.95 &      &      &      & 0.92 & 0.95 &
        &      &      & 0.29\\
        & 1.55 & 0.94 & 0.94 & 0.95 &      &      & 0.94 & 0.94 & 0.95
        &      &      & 0.38\\
        & 2.28 & 0.95 & 0.95 & 0.95 & 0.95 &      & 0.95 & 0.95 & 0.96
        & 0.95 &      & 0.41\\
        & 3.0 & 0.95 & 0.95 & 0.95 & 0.95 & 0.95 & 0.95 & 0.96 & 0.95
        & 0.95 & 0.95 & 0.42\\
        \bottomrule
\end{tabular}
\end{center}
\end{table}

Figure \ref{fig:NormalAdaptComparison} shows progress of tolerance
adaptations during the burn-in, and histogram of the mean
acceptance rates of the chain after burn-in.
The lines on the left show the median, and the shaded regions indicate the 
50\%, 75\%, 95\% and 99\% quantiles.
The figure suggests concentration, but reveals that the adaptation has not 
fully converged yet. This
is also visible in the mean acceptance rate over all realisations,
which is $0.17$ for simple cut-off and 
$0.12$ for Gaussian cut-off (see Figure
\ref{fig:NormalAdaptComparison-GaussCutoff} in the Supplement).
Table \ref{tab:rmse-normal-adaptation} shows root mean
square errors for target tolerance
$\epsilon=0.1$, with both \abcmcmc($\simulationepsilon$) with $\simulationepsilon$ fixed
as above, and for the tolerance adaptive algorithm.
Here, only the adaptive chains with final tolerance $\ge 0.1$ were 
included (9,998 and
9,993 out of 10,000 chains for $\phi_{\mathrm{simple}}$ and
$\phi_{\mathrm{Gauss}}$, respectively).
Tolerance adaptation (started from prior distribution) appears to 
be competitive
with `optimally' tuned fixed tolerance \abcmcmc($\simulationepsilon$).

\begin{figure} 
\includegraphics[height=3.1875cm]{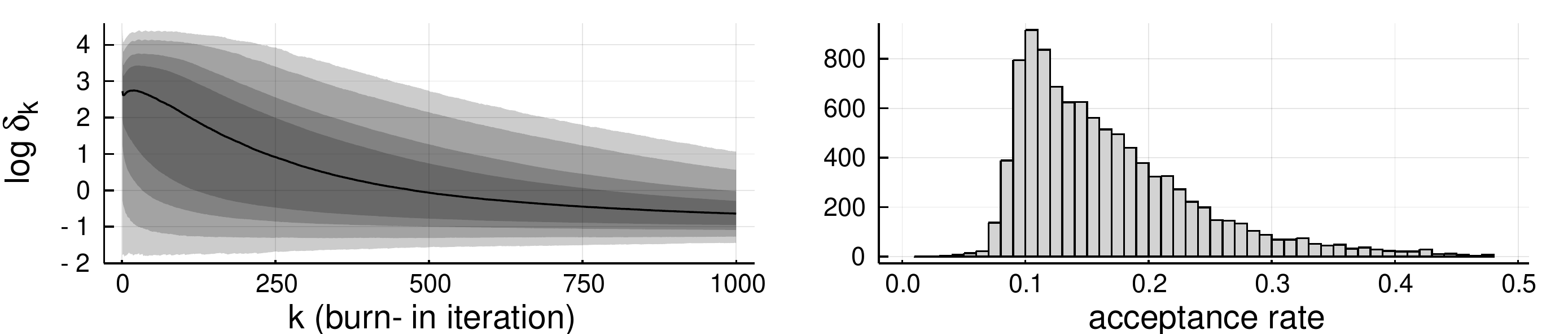} 
\caption{Progress of tolerance adaptation (left) and histogram of acceptance
  rates (right) in the Gaussian model experiment with simple cut-off.} 
\label{fig:NormalAdaptComparison}
\end{figure}

\begin{table}
\caption{Root mean square errors $(\times 10^{-2})$ from
  \abcmcmc($\simulationepsilon$) for tolerance $\epsilon=0.1$ with fixed tolerance 
  and with the adaptive algorithms in the Gaussian model.} 
    \label{tab:rmse-normal-adaptation} 
\small
\begin{center}
\begin{tabular}{lcccccccccccc} 
\toprule
& \multicolumn{6}{c}{$\phi_{\mathrm{simple}}$} 
& \multicolumn{6}{c}{$\phi_{\mathrm{Gauss}}$}  \\
    \cmidrule(lr){2-7} \cmidrule(lr){8-13} 
    & \multicolumn{5}{c}{Fixed tolerance} & Adapt  
    & \multicolumn{5}{c}{Fixed tolerance} & Adapt  \\
    \cmidrule(lr){2-6} \cmidrule(lr){7-7}
    \cmidrule(lr){8-12} \cmidrule(lr){13-13}
$\simulationepsilon$ & 0.1 & 0.82 & 1.55 & 2.28 & 3.0 & 0.64 & 0.1 & 0.82 & 1.55 & 2.28 &
3.0 & 0.28\\
\midrule
$x$ & 9.75 & 8.95 & 9.29 & 9.65 & 10.3 & 9.15 & 7.97 & 7.12 & 7.82 & 8.94
& 9.93 & 7.08\\ 
$|x|$ & 5.49 & 5.35 & 5.51 & 5.81 & 6.24 & 5.38 & 4.47 & 4.22 & 4.68 & 5.26
& 5.95 & 4.15\\ 
\bottomrule
\end{tabular}
\end{center}
\end{table}


\subsection{Lotka-Volterra model} 
\label{sec:lotkavolterra-exp} 

Our second experiment is a Lotka-Volterra model suggested by
\cite{boys-wilkinson-kirkwood},
which was considered in the approximate Bayesian computation context
by \cite{fearnhead-prangle}.
The model is a Markov process $(X_t,Y_t)_{t\ge 0}$ of counts,
corresponding to a reaction network $X\to 2X$ with rate $\theta_1$, 
$X + Y \to 2Y$
with rate $\theta_2$ and $Y \to \emptyset$ with rate $\theta_3$.
The reaction log-rates $(\log\theta_1,\log\theta_2,\log\theta_3)^\transpose$ are
parameters, which we equip with a
uniform prior, $(\log\theta_1, \log\theta_2,\log\theta_3)^\transpose\sim
U([-6,0]^3)$. The data is a simulated trajectory from the model with 
$\theta = (0.5,0.0025,0.3)^\transpose$ until time $40$. The inference is 
based on the Euclidean distances of five-dimensional summary statistics 
of the process observed every 5 time
units ($\tilde{X}_k = X_{5k}$ and $\tilde{Y}_k = Y_{5k}$). The summary
statistics are the sample autocorrelation of $(\tilde{X}_k)$ at lag 2 
multiplied by 100, and the 10\% and 90\% quantiles of $(\tilde{X}_k)$ 
and $(\tilde{Y}_k)$.
The observed summary statistics are $(-51.07, 29, 304, 65, 404)^\transpose$.

We first run comparisons similar to Section \ref{sec:gauss-exp}, but
now with 1,000 independent \abcmcmc($\simulationepsilon$) chains with 
simple cut-off. We investigate
the effect of post-correction, with 20,000 samples, including 10,000
burn-in, for each chain. All chains were started from near the
posterior mode, from $(-0.55, -5.77, -1.09)^\transpose$. Figure
\ref{fig:lotkavolterra-comparison} 
\begin{figure} 
\includegraphics[height=7cm]{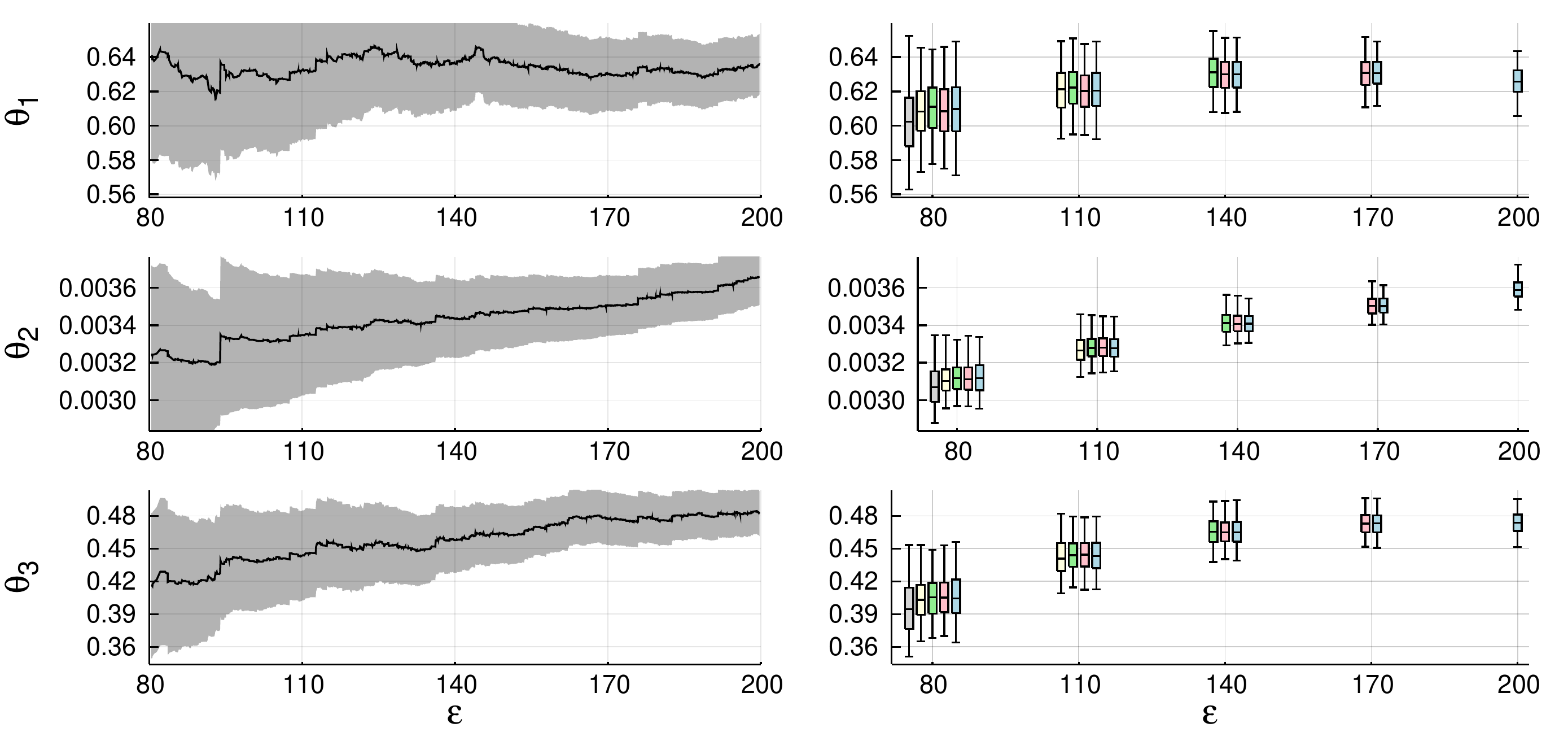}
\caption{Lotka-Volterra model with simple cut-off.
  Estimates from single run of \abcmcmc($200$) (left) and
  estimates from 1,000 replications of
  \abcmcmc($\simulationepsilon$) with $\simulationepsilon\in\{80,110,140,170,200\}$
  indicated by colour.} 
\label{fig:lotkavolterra-comparison}
\end{figure} 
shows similar comparisons as in
Section \ref{sec:gauss-exp}, and Figure
\ref{fig:lotkavolterra-epa-reg-comparison} 
\begin{figure} 
\includegraphics[height=7cm]{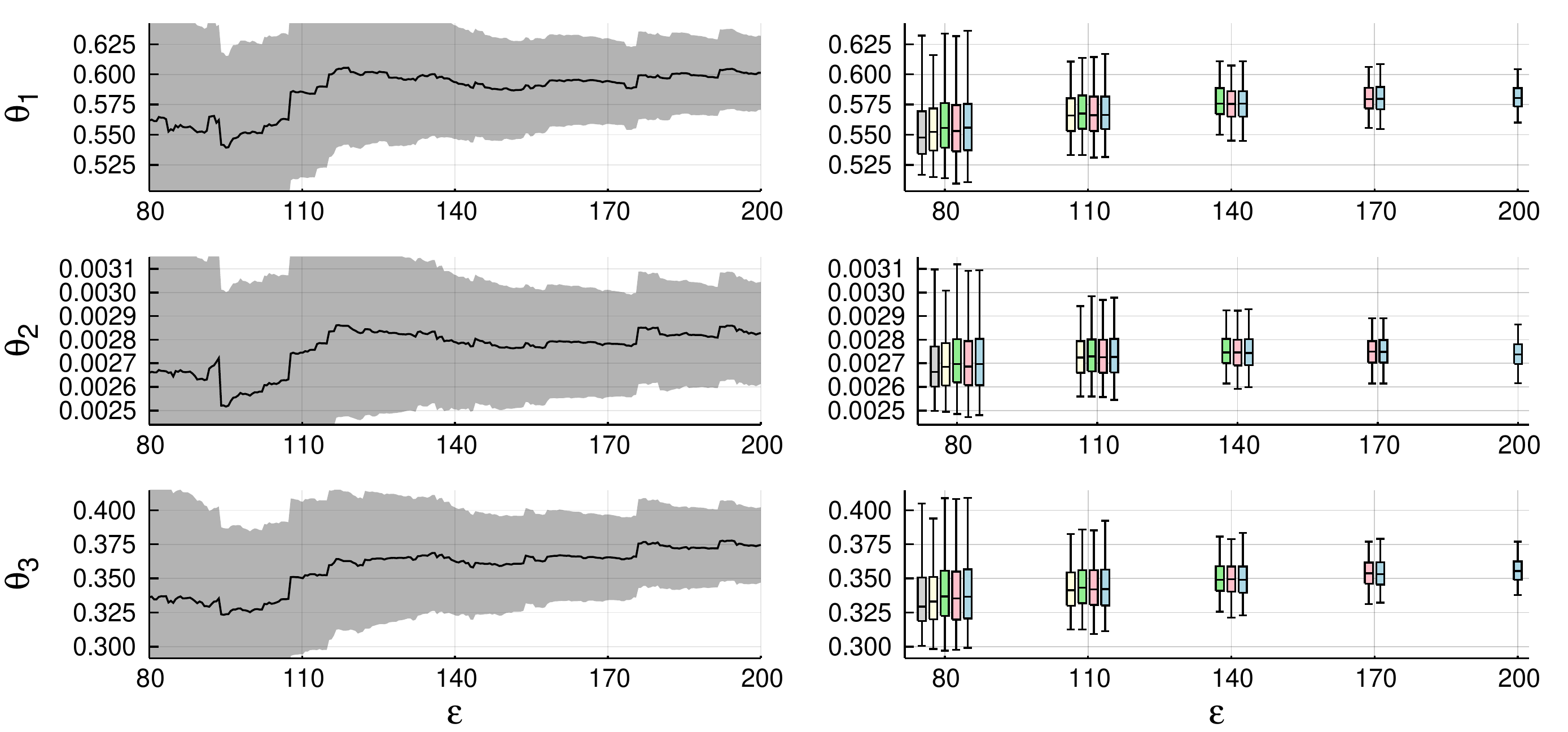}
\caption{Lotka-Volterra model with Epanechnikov cut-off and regression
  correction.
  Estimates from single run of \abcmcmc($200$) (left) and
  estimates from 1,000 replications of
  \abcmcmc($\simulationepsilon$) with $\simulationepsilon\in\{80,110,140,170,200\}$
  indicated by colour.}
\label{fig:lotkavolterra-epa-reg-comparison}
\end{figure} 
shows results for regression
correction with Epanechnikov cut-off $\phi_{\mathrm{Epa}}(t) =
\max\{0,1-t^2\}$
\citep{beaumont-zhang-balding}. 
The results suggest that
post-correction might provide slightly more accurate estimators,
particularly with smaller tolerances. There is also some bias in
\abcmcmc($\simulationepsilon$) with smaller $\simulationepsilon$, when compared to the
ground truth calculated from \abcmcmc($\simulationepsilon$) chain of
ten million iterations.
Table \ref{tab:lotkavolterra-confidence} shows coverages
of confidence intervals.
\begin{table} 
    \setlength{\tabcolsep}{0.7ex}
    \caption{Mean acceptance rates and frequencies 
      of the 95\% confidence intervals, from
      \abcmcmc($\simulationepsilon$) to tolerances $\epsilon$, in the
      Lotka-Volterra model.}
    \label{tab:lotkavolterra-confidence} 
    \small
    \begin{center}
    \begin{tabular}{llcccccccccccccccc}
        \toprule
        \small
        &
        &\multicolumn{5}{c}{$f(\theta)=\theta_1$}
        &\multicolumn{5}{c}{$f(\theta)=\theta_2$}
        &\multicolumn{5}{c}{$f(\theta)=\theta_3$}
        & \raisebox{-1ex}{Acc.}
        \\
        \cmidrule(lr){3-7}
        \cmidrule(lr){8-12}
        \cmidrule(lr){13-17}
& 
\raisebox{-1.5pt}{$\simulationepsilon$} 
        $\backslash$ \raisebox{3pt}{$\epsilon$}
& 80 & 110 & 140 & 170 & 200
& 80 & 110 & 140 & 170 & 200
& 80 & 110 & 140 & 170 & 200
& rate \\
\midrule
\multirow{5}{*}{\rotatebox{90}{$\phi_{\mathrm{simple}}$}}
& 80 & 0.8 &      &      &      &      & 0.73 &      &      &      &
   & 0.74 &      &      &      &      & 0.05\\ 
   & 110 & 0.97 & 0.93 &      &      &      & 0.94 & 0.89 &      &
 &      & 0.94 & 0.9 &      &      &      & 0.07\\ 
 & 140 & 0.99 & 0.97 & 0.93 &      &      & 0.98 & 0.96 & 0.92 &
 &      & 0.98 & 0.96 & 0.94 &      &      & 0.1\\ 
 & 170 & 0.99 & 0.98 & 0.96 & 0.93 &      & 0.98 & 0.97 & 0.96 &
 0.93 &      & 0.99 & 0.98 & 0.96 & 0.95 &      & 0.14\\ 
 & 200 & 1.0 & 0.99 & 0.98 & 0.97 & 0.94 & 0.99 & 0.99 & 0.98 & 0.97
 & 0.92 & 0.99 & 0.98 & 0.98 & 0.96 & 0.94 & 0.17\\
\midrule 
\multirow{5}{*}{\rotatebox{90}{regr. $\phi_{\mathrm{Epa}}$}}
 & 80 & 0.75 &      &      &      &      & 0.76 &      &      &
 &      & 0.68 &      &      &      &      & 0.05\\ 
 & 110 & 0.92 & 0.92 &      &      &      & 0.93 & 0.94 &      &
 &      & 0.87 & 0.91 &      &      &      & 0.07\\ 
 & 140 & 0.93 & 0.94 & 0.94 &      &      & 0.94 & 0.96 & 0.97 &
 &      & 0.9 & 0.92 & 0.94 &      &      & 0.1\\ 
 & 170 & 0.93 & 0.95 & 0.95 & 0.95 &      & 0.96 & 0.97 & 0.97 &
 0.98 &      & 0.92 & 0.94 & 0.94 & 0.95 &      & 0.14\\ 
 & 200 & 0.96 & 0.96 & 0.96 & 0.96 & 0.96 & 0.98 & 0.98 & 0.98 &
 0.98 & 0.98 & 0.95 & 0.96 & 0.95 & 0.96 & 0.96 & 0.17\\
\bottomrule
\end{tabular}
\end{center}
\end{table}

In addition, we experiment with the tolerance adaptation, using also
20,000 samples out of which 10,000 are burn-in. Figure
\ref{fig:lotkavolterraAdaptComparison} 
\begin{figure} 
\includegraphics[height=3.1875cm]{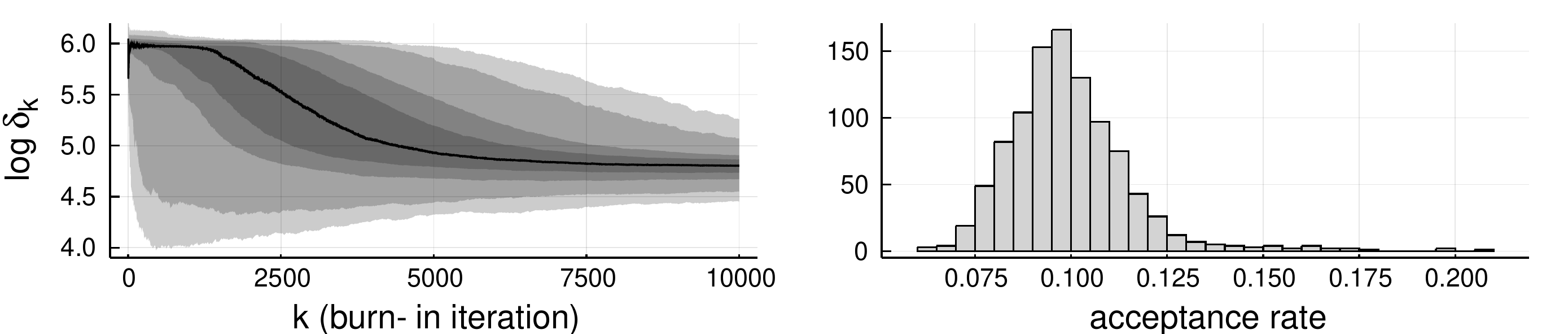}
\caption{Progress of tolerance adaptation (left) and histogram of
  acceptance rates (right) 
  in the Lotka-Volterra experiment.} 
\label{fig:lotkavolterraAdaptComparison}
\end{figure}
shows the progress of the
$\log$-tolerance during the burn-in, and histogram of the realised
mean acceptance rates during the estimation phase.  The realised
acceptance rates are concentrated around the mean $0.10$.
Table \ref{tab:rmse-lotkavolterra-adaptation} 
\begin{table}
    \caption{Root mean square errors of estimators
      from \abcmcmc($\simulationepsilon$)
      for tolerance
      $\epsilon=80$, with fixed tolerance and with
      adaptive tolerance in the Lotka-Volterra model.} 
    \label{tab:rmse-lotkavolterra-adaptation} 
\small
\begin{center}
\begin{tabular}{lcccccccccccc} 
\toprule
    & \multicolumn{6}{c}{Post-correction, simple cut-off} 
    & \multicolumn{6}{c}{Regression, Epanechnikov cut-off} \\
    \cmidrule(lr){2-7} 
    \cmidrule(lr){8-13} 
    & \multicolumn{5}{c}{Fixed tolerance} & Adapt  
    & \multicolumn{5}{c}{Fixed tolerance} & Adapt  \\
    \cmidrule(lr){2-6} \cmidrule(lr){7-7}
    \cmidrule(lr){8-12} \cmidrule(lr){13-13}
$\simulationepsilon$& 80 & 110 & 140 & 170 & 200 & 122.6 & 80 & 110 & 140
& 170 & 200 & 122.6\\ 
\midrule
$\theta_1$ $(\times 10^{-2})$& 2.37 & 1.81 & 1.75 & 1.83 & 1.93 & 1.8 & 3.1 & 2.74 & 3.02 & 3.09 &
3.19 & 2.57\\ 
$\theta_2$ $(\times 10^{-4})$ & 1.32 & 0.99 & 0.93 & 0.96 & 1.06 & 1.04 & 1.52 & 1.39 & 1.54 & 1.61
& 1.63 & 1.28\\ 
$\theta_3$ $(\times 10^{-2})$& 2.94 & 2.26 & 2.11 & 2.14 & 2.37 & 2.34 & 2.77 & 2.53 & 2.76 & 2.85
& 2.91 & 2.34\\
\bottomrule
\end{tabular}
\end{center}
\end{table} 
shows root mean square
errors of the estimators from \abcmcmc($\simulationepsilon$) for $\epsilon=80$
for fixed tolerance and with tolerance
adaptation. Only the adaptive chains with final tolerance $\ge
80.0$ were included (999 out of 1,000 chains).

In this case, the chains run with the tolerance adaptation led to
better results than those run only with the covariance 
adaptation (and
fixed tolerance). This perhaps surprising result may be due to the
initial behaviour of the covariance adaptation, which may be unstable
when there are many rejections.  Different initialisation strategies,
for instance following \citep[Remark 2]{haario-saksman-tamminen},
might lead to more stable behaviour compared to using the adaptation
of \cite{andrieu-moulines} from the start, as we do. The different
step size sequences ($n^{-1}$ and $n^{-2/3}$) could also play a
r\^{o}le. We repeated the experiment for the chains with fixed
tolerances, but now with covariance adaptation step size $n^{-2/3}$.
This led to more accurate estimators for \abcmcmc($\simulationepsilon$) 
with higher $\simulationepsilon$, but worse behaviour with smaller $\simulationepsilon$.
In any case, also here, tolerance adaptation delivered 
competitive results (see Supplement \ref{app:results}).



\section{Discussion} 
\label{sec:discussion} 

We believe that approximate Bayesian computation inference with 
Markov chain Monte Carlo is a useful approach, when the chosen simulation
tolerance allows for good mixing. Our confidence
intervals for post-processing and automatic tuning of simulation
tolerance  may make this approach more appealing in practice. 

A related approach by \cite{bortot-coles-sisson} makes tolerance an
auxiliary variable with a user-specified prior.  This approach avoids
explicit tolerance selection, but the inference is based on 
a pseudo-posterior $\check{\pi}(\theta,\delta)$ 
not directly related to
$\pi_\delta(\theta)$ in \eqref{eq:abc-post}.
\cite{bortot-coles-sisson} also provide tolerance-dependent analysis,
showing parameter means and variances with respect to conditional
distributions of $\check{\pi}(\theta,\delta)$ given 
$\delta\le\epsilon$. We
believe that our approach, where the effect of tolerance in the
expectations with respect $\pi_\epsilon$ can be investigated 
explicitly, can be more immediate to interpret. Our confidence
interval only shows the Monte Carlo uncertainty related to the
posterior mean, and we are currently investigating how the overall
parameter uncertainty could be summarised in a useful manner.

The convergence rates of approximate Bayesian computation has been
investigated by \cite{barber-voss-webster} in terms of cost and bias
with respect to true posterior, and recently by
\cite{li-fearnhead-efficiency,li-fearnhead-regression} in the large
data limit, the latter in the context of regression. It would be
interesting to consider extensions of these results in the Markov
chain Monte Carlo context. In fact, \cite{li-fearnhead-efficiency}
already suggest that the acceptance rate must be lower bounded, which
is in line with our adaptation rule.

Automatic selection of tolerance has been considered earlier in
\cite{ratmannJHS},  who propose an algorithm based on tempering and a
cooling schedule. Based on our experiments, the tolerance adaptation
we present in this paper appears to perform well in practice, and
provides reliable results with post-correction.  
For the adaptation to work efficiently,
the Markov chains must be taken relatively long, rendering the approach
difficult for the most computationally demanding models. 

We conclude with a brief discussion of certain extensions of the suggested
post-correction method; more details are given in Supplement
\ref{app:extensions}. First, in case of non-simple cut-off, the rejected
samples may be `recycled' by using the acceptance probability as weight 
\citep{ceperley-chester-kalos}.
The accuracy
of the post-corrected estimator could be enhanced with smaller values of 
$\epsilon$ by performing
further independent simulations from $g(\uarg\mid \Theta_k)$ 
(which may be calculated in
parallel). The estimator is rather straightforward, but requires some
care because the estimators of the pseudo-likelihood take value zero.
The latter extension, which involves additional simulations as
post-processing, is similar to the `lazy' version of  
\cite{prangle-lazy,prangle-lazier}
incorporating a randomised stopping rule for simulation, 
and to debiased `exact' approach of
\cite{tran-kohn-exact-abc}, which may lead to estimators which get
rid of $\epsilon$-bias entirely.


\section{Acknowledgements} 

This work was supported by Academy of Finland (grants 274740, 284513
and 312605). The authors wish to acknowledge CSC, IT Center for
Science, Finland, for computational resources, and 
thank Christophe Andrieu for useful discussions.




\appendix

\section*{Appendix}

The following algorithm shows that in case of simple (post-correction) cut-off,
$E_{\simulationepsilon,\epsilon}(f)$ and $S_{\simulationepsilon,\epsilon}(f)$ may be
calculated simultaneously for all tolerances efficiently:
\begin{algo}
    \label{alg:abc-estim} 
          Suppose $\phi = \phi_{\mathrm{simple}}$ and 
$(\Theta_k,T_k)_{k=1,\ldots,n}$ is the output of \abcmcmc($\simulationepsilon$).
\begin{enumerate}[(i)]
    \item \label{item:sort} Sort $(\Theta_k,T_k)_{k=1,\ldots,n}$ with respect to $T_k$:
      \begin{itemize}
          \item Find
      indices $I_1,\ldots,I_n$ such that $T_{I_k}\le T_{I_{k+1}}$ for all
        $k=1,\ldots,n-1$.
    \item Denote $(\hat{\Theta}_k, \hat{T}_k) \gets 
      (\Theta_{I_k}, T_{I_k})$.
      \end{itemize}
    \item For all unique values $\epsilon\in\{\hat{T}_1,\ldots,\hat{T}_n\}$, let 
      $m_\epsilon \defeq \max\{ k\ge 1\given \hat{T}_k\le \epsilon\}$, and
      define 
      \[
          E_{\simulationepsilon,\epsilon}(f) \defeq 
          m_\epsilon^{-1}
          \textstyle \sum_{k=1}^{m_\epsilon} 
          f(\hat{\Theta}_k), 
          \qquad
          \text{and}
          \qquad
          S_{\simulationepsilon,\epsilon}(f) \defeq 
          m_\epsilon^{-2} \textstyle \sum_{k=1}^{m_\epsilon}
          \big[ f(\hat{\Theta}_k) - E_{\simulationepsilon,\epsilon}(f)\big]^2.
      \]
      (and for $\hat{T}_k<\epsilon<\hat{T}_{k+1}$, let
      $E_{\simulationepsilon,\epsilon}(f) \defeq E_{\simulationepsilon,\hat{T}_k}(f)$ and 
      $S_{\simulationepsilon,\epsilon}(f) \defeq S_{\simulationepsilon,\hat{T}_k}(f)$.)
\end{enumerate}
\end{algo}
The sorting in Algorithm \ref{alg:abc-estim}\eqref{item:sort} may be
performed in $O(n\log n)$ time, and $E_{\simulationepsilon,\epsilon}(f)$ and
$S_{\simulationepsilon,\epsilon}(f)$ may all be calculated in $O(n)$ time by
forming appropriate cumulative sums.

\begin{proof}[of Theorem \ref{thm:est-valid}] 
Algorithm \ref{alg:abc-mcmc} is a Metropolis--Hastings algorithm with
compound proposal $\tilde{q}(\theta, y;
\theta',y') = q(\theta, \theta')g(y'\mid \theta')$ and with
target 
$\tilde{\pi}_{\epsilon}(\theta,y) \propto
\mathrm{pr}(\theta) g(y\mid \theta) \phi\big(
d(y,y^*)/\epsilon\big)$.
The chain $(\Theta_k,Y_k)_{k\ge 1}$ is Harris-recurrent, as 
a full-dimensional
Metropolis--Hastings which is $\varphi$-irreducible
\citep{roberts-rosenthal-harris}.
Because $\phi$ is monotone and $\epsilon\le \simulationepsilon$, we have 
$\phi\big(d(y,y^*)/\simulationepsilon\big)
    \ge \phi\big(d(y,y^*)/\epsilon\big)$,
and therefore $\tilde{\pi}_\epsilon$ is absolutely continuous with
respect to $\tilde{\pi}_{\simulationepsilon}$, and
$w_{\simulationepsilon,\epsilon}(y)
    = c_{\simulationepsilon,\epsilon}
    \tilde{\pi}_{\epsilon}(\theta,y)/
      \tilde{\pi}_{\simulationepsilon}(\theta,y)$,
where $c_{\simulationepsilon,\epsilon}>0$ is a constant.
If we denote $\xi_k(f) \defeq U_k^{(\simulationepsilon,\epsilon)} f(\Theta_k)$ and 
$\xi_k(\mathbf{1}) \defeq U_k^{(\simulationepsilon,\epsilon)} =
w_{\simulationepsilon,\epsilon}(Y_k)$, 
then
$E_{\simulationepsilon,\epsilon}^{(n)}(f) = 
    \sum_{k=1}^n \xi_k(f) /\sum_{j=1}^n \xi_j(\mathbf{1})
    \to
    \E_{\tilde{\pi}_\epsilon}[f(\Theta)]$ almost surely
by Harris recurrence and $\tilde{\pi}_\epsilon$ invariance
\citep[e.g.][]{vihola-helske-franks}.
The claim  \eqref{item:consistency} follows because $\pi_\epsilon$ is the 
marginal density of 
$\tilde{\pi}_\epsilon$.

The chain $(\Theta_k,Y_k)_{k\ge 1}$ is reversible, so
\eqref{item:limiting-clt} follows by 
\citep[Theorem 7(i)]{vihola-helske-franks},
because $m_{f}^{(2)}(\theta,y) \defeq w_{\simulationepsilon,\epsilon}^2(y) f^2(\theta)$
satisfies
\[
    \E_{\tilde{\pi}_{\simulationepsilon}}[m_{f}^{(2)}(\Theta,Y)] =
    c_{\simulationepsilon,\epsilon}
    \E_{\tilde{\pi}_{\epsilon}}[w_{\simulationepsilon,\epsilon}(Y)f^2(\Theta)]
    \le c_{\simulationepsilon,\epsilon} \E_{\pi_\epsilon}[f^2(\Theta)]<\infty,
\]
and because the asymptotic variance of the function
$h_{\simulationepsilon,\epsilon}$
with respect to $(\Theta_k,Y_k)_{k\ge 1}$ may be expressed 
as $\var_{\tilde{\pi}_{\simulationepsilon}}\big(h_{\simulationepsilon,\epsilon}(\Theta,
Y)\big)\tau_{\simulationepsilon,\epsilon}(f)$, so 
$v_{\simulationepsilon,\epsilon}(f) =
\var_{\tilde{\pi}_{\simulationepsilon}}\big(h_{\simulationepsilon,\epsilon}(\Theta,
Y)\big)/c_{\simulationepsilon,\epsilon}^2$.
The convergence $n S_{\simulationepsilon,\epsilon}^{(n)}(f) \to v_{\simulationepsilon,\epsilon}(f)$ 
follows from \citep[Theorem 9]{vihola-helske-franks}.
\end{proof} 

\begin{proof}[of Theorem \ref{thm:simple-acf}] 
The invariant distribution of \abcmcmc($\delta$) may be written as
$\tilde{\pi}_{\simulationepsilon}(\theta,y) = \pi_{\simulationepsilon}(\theta)
\bar{g}_{\simulationepsilon}(y\mid
\theta)$ where $\bar{g}_{\simulationepsilon}(y\mid \theta) \defeq 
\protect{g(y\mid \theta)}
\charfun{d(y,y^*)\le \simulationepsilon}
/L_{\simulationepsilon}(\theta)$, and 
that $\int \bar{g}_{\simulationepsilon}(y\mid\theta)
w_{\simulationepsilon,\epsilon}^p(y) 
\ud y = \bar{w}_{\simulationepsilon,\epsilon}(\theta)$ for $p\in\{1,2\}$. 
Consequently,
$\tilde{\pi}_{\simulationepsilon}(h_{\simulationepsilon,\epsilon}) = 
\pi_{\simulationepsilon}(f_{\simulationepsilon,\epsilon})$
and $\tilde{\pi}_{\simulationepsilon}(h_{\simulationepsilon,\epsilon}^2) =
\pi_{\simulationepsilon}(f^2 \bar{w}_{\simulationepsilon,\epsilon} )$, so 
$
    \var_{\tilde{\pi}_{\simulationepsilon}}(h_{\simulationepsilon,\epsilon}) 
    =  \var_{\pi_{\simulationepsilon}}(f_{\simulationepsilon,\epsilon}) + 
    \pi_{\simulationepsilon}\big( \bar{w}_{\simulationepsilon,\epsilon}
    (1-\bar{w}_{\simulationepsilon,\epsilon}) f^2
    \big).
$
Hereafter, let $a_{\simulationepsilon,\epsilon}\defeq
\big(\var_{\tilde{\pi}_{\simulationepsilon}}(h_{\simulationepsilon,\epsilon})\big)^{-1/2}$ and denote
$\tilde{h}_{\simulationepsilon,\epsilon} \defeq a_{\simulationepsilon,\epsilon}
h_{\simulationepsilon,\epsilon}$
and $\tilde{f}_{\simulationepsilon,\epsilon} \defeq a_{\simulationepsilon,\epsilon}
f_{\simulationepsilon,\epsilon}$. Clearly, 
$\var_{\tilde{\pi}_{\simulationepsilon}}(\tilde{h}_{\simulationepsilon,\epsilon})=1$ and
\[
\rho_k^{(\simulationepsilon,\epsilon)} = 
    e_k^{(\simulationepsilon,\epsilon)} -
    \big(\pi_{\simulationepsilon}(\tilde{f}_{\simulationepsilon,\epsilon})\big)^2,\qquad\text{where}\qquad
    e_k^{(\simulationepsilon,\epsilon)} \defeq
    \E \big[\tilde{h}_{\simulationepsilon,\epsilon}(\Theta_0^{(s)},Y_0^{(s)})
    \tilde{h}_{\simulationepsilon,\epsilon}(\Theta_k^{(s)}, Y_k^{(s)}) \big].
\]
Note that with $\phi=\phi_{\mathrm{simple}}$, the acceptance ratio
is 
$\alpha_{\simulationepsilon}(\theta,y;\hat{\theta},\hat{y})
    = \dot{\alpha}(\theta,\hat{\theta})
      \charfun{d(\hat{y},y^*)\le \simulationepsilon}$, where 
$\dot{\alpha}(\theta,\hat{\theta}) 
    = \min\big\{1, \mathrm{pr}(\hat{\theta})
        q(\hat{\theta},\theta) \big/\big(\mathrm{pr}(\theta)
        q(\theta,\hat{\theta})\big) \big\},
$
which is independent of $y$, so $(\Theta_k^{(s)})$ is
marginally a
Metropolis--Hastings type chain, with proposal $q$ and acceptance probability
$\alpha(\theta,\hat{\theta}) L_{\simulationepsilon}(\hat{\theta})$, and
\begin{align*}
&\E\big[\tilde{h}_{\simulationepsilon,\epsilon}(\Theta_1^{(s)}, Y_1^{(s)})
\bigmid (\Theta_0^{(s)},Y_0^{(s)})=(\theta,y)\big] - 
r_{\simulationepsilon}(\theta) \tilde{h}_{\simulationepsilon,\epsilon}(\theta,y) \\
&=a_{\simulationepsilon,\epsilon} \int q(\theta,\hat{\theta}) \dot{\alpha}(\theta,\hat{\theta}) 
    g(\hat{y}\mid \hat{\theta}) w_{\simulationepsilon,\epsilon}(\hat{y})
    f(\hat{\theta}) \ud \hat{\theta}\ud \hat{y}
    = \int q(\theta,\hat{\theta})\dot{\alpha}(\theta,\hat{\theta})
     L_{\simulationepsilon}(\hat{\theta}) 
	    \tilde{f}_{\simulationepsilon,\epsilon}(\hat{\theta})
    \ud \hat{\theta}.
\end{align*}
Using this iteratively, we obtain that
\begin{equation*}
    \textstyle
    e_k^{(\simulationepsilon,\epsilon)} = \E \big[
    \tilde{f}_{\simulationepsilon,\epsilon}(\Theta_0^{(s)})
    \tilde{f}_{\simulationepsilon,\epsilon}(\Theta_k^{(s)})
    \big] + 
    \int \tilde{\pi}_{\simulationepsilon}(\theta, y) \big[
    \tilde{h}_{\simulationepsilon,\epsilon}^2(\theta,y) -
    \tilde{f}_{\simulationepsilon,\epsilon}^2(\theta)\big]
    r_{\simulationepsilon}^k(\theta) \ud \theta \ud y,
\end{equation*}
and therefore with $\gamma_k^{(\simulationepsilon,\epsilon)} \defeq
a_{\simulationepsilon,\epsilon}^2 
\cov\big(f_{\simulationepsilon,\epsilon}(\Theta_0^{(s)}),
f_{\simulationepsilon,\epsilon}(\Theta_k^{(s)})\big)$,
\begin{equation*}
    \textstyle
    \sum_{k\ge 1} \rho_k^{(\simulationepsilon,\epsilon)}
    = \sum_{k\ge 1} \gamma_k^{(\simulationepsilon,\epsilon)} 
    + a_{\simulationepsilon,\epsilon}^2 \int \pi_{\simulationepsilon}(\theta) 
    \bar{w}_{\simulationepsilon,\epsilon}(\theta) \big( 1 -
    \bar{w}_{\simulationepsilon,\epsilon}(\theta)\big)
    r_{\simulationepsilon}(\theta)(1-r_{\simulationepsilon}(\theta))^{-1} f^2(\theta) \ud \theta.
\end{equation*}
We conclude by noticing that 
$2 \sum_{k\ge 1}
\gamma_k^{(\simulationepsilon,\epsilon)} = a_{\simulationepsilon,\epsilon}^2
\var_{\pi_{\simulationepsilon}}(f_{\simulationepsilon,\epsilon})
(\check{\tau}_{\simulationepsilon,\epsilon}(f) - 1)$.
\end{proof}


\setcounter{section}{1}
\def\appendixname{Supplement}

\appendix 
\section{Convergence of the tolerance adaptive ABC-MCMC under generalised conditions}
\label{app:ta-geometric} 

This section details a convergence theorem, under weaker assumptions than that of 
Theorem \mref{thm:ta-uniform}, for the tolerance adaptation 
(Algorithm \mref{alg:ta}) of Section \mref{sec:ta}.

For convenience, we
denote the distance distribution here as $T\sim Q_\theta(\uarg)$,
where $T\defeq d(Y,y^*)$ for $Y\sim g(\uarg|\theta)$.  
With this notation, and re-indexing $\Theta_k' = \tilde{\Theta}_{k+1}$, 
we may rewrite 
Algorithm \mref{alg:ta} as follows:
\begin{algo}
    \label{alg:ta-proofs}
Suppose $\Theta_0\in\mathsf{T}$ is a starting value with $\pr(\Theta_0)>0$.
Initialise $\simulationepsilon \defeq T_0 \sim Q_{\Theta_0}(\uarg)$. 
$k=0,\ldots, n_b-1$, iterate:
\begin{enumerate}[(i)]
\item Draw $\Theta_k' \sim q( \Theta_{k-1},\uarg)$ and
  $T_k' \sim Q_{\Theta_k'}(\uarg)$.
\item 
Accept, by setting 
  $(\Theta_{k+1}, T_{k+1}) \leftarrow (\Theta_k', T_k')$, with probability
  \begin{equation}\label{eq:ta-acceptance}
  \alpha_{\simulationepsilon_k}'(\Theta_k,T_k;\Theta_k', T_k') \defeq
  \min \bigg\{1, \frac{\pr(\Theta_k') q(\Theta_k',\Theta_k)
      \phi(T_k'/\simulationepsilon_k)}{\pr(\Theta_k)
      q(\Theta_k,\Theta_k')\phi(T_k/\simulationepsilon_k)}\bigg\} 
  \end{equation}
and otherwise reject, by setting $(\Theta_{k+1},T_{k+1}) \leftarrow (\Theta_k, T_k)$.
\item $\log \simulationepsilon_{k+1} \leftarrow \log
  \simulationepsilon_k + \gamma_{k+1} \big(\alpha^* -
  \alpha_{\simulationepsilon_k}'(\Theta_k,\Theta_k', T_k') \big)$.
\end{enumerate}
\end{algo}

Let us set $\beta\defeq \log \simulationepsilon$, and consider the
proposal-rejection Markov kernel 
  \begin{equation}\label{eq:marginal-kernel}
  \dot{P}_\beta(\theta, \ud \vartheta)
  \defeq
  q(\theta,\ud \vartheta)
  \alpha_\beta(\theta,\vartheta)
  + \bigg(1- \int q(\theta,\ud \vartheta ) \alpha_\beta(\theta,\vartheta)\bigg)\charfun{\theta\in \ud \vartheta},
  \end{equation}
  where
  $
  \alpha_\beta(\theta,\vartheta)
  \defeq
  \dot{\alpha}(\theta,\vartheta) L_\beta(\vartheta),
  $
  $$
  \dot{\alpha}(\theta,\vartheta)
  \defeq
  \min\bigg\{1,\frac{\pr(\vartheta) q(\vartheta,\theta)}{\pr(\theta) q(\theta,\vartheta)}\bigg\},
  \qquad
  \text{and}
  \qquad
L_\beta(\vartheta)
  \defeq
  \int Q_\vartheta(\ud t) \charfun{t\le e^\beta}.
  $$
  Then $\dot{P}_{\beta_k}$ is the transition of the
  $\theta$-coordinate chain of Algorithm \ref{alg:ta-proofs} with simple
  cut-off at iteration $k$, obtained by disregarding the
  $t$-coordinate.  It is easily seen to be reversible with respect to
  the posterior probability $ \pi_\beta(\theta) \propto \pr(\theta)
  L_\beta(\theta) $ given in \meqref{eq:abc-post}, written here in
  terms of $\beta=\log \simulationepsilon$ instead of
  $\simulationepsilon$.

\begin{assumption}
  \label{a:ta}
  Suppose $\phi= \phi_{\mathrm{simple}}$ and the following hold:
    \begin{enumerate}[(i)]
  \item\label{a:ta-step}
    Step sizes $(\gamma_k)_{k\ge 1}$ satisfy $\gamma_k \ge 0$, $\gamma_{k+1}\le \gamma_k$,
  $$
  \sum_{k\ge 1} \gamma_k = \infty,
  \qquad\text{and}\qquad
\sum_{k\ge 1} \gamma_k^2\Big(1 + \abs{\log \gamma_k} + \abs{\log \gamma_k}^2\Big) <\infty.
$$
\item The domain $\T\subset \R^{n_\theta}$, $n_\theta \ge 1$, is a nonempty open set.
    \item\label{a:ta-density}
      $\pr(\uarg)$ and  $q(\theta,\uarg)$ are uniformly bounded densities on $\R^{n_\theta}$ (i.e.~$\exists C>0$ s.t.~$q(\theta,\vartheta)<C$ and $\pr(\theta)<C$ for all $\theta,\,\vartheta\in\R^{n_\theta}$), and $\pr(\theta)=0$ for $\theta\notin \T$.  
  \item\label{a:ta-distance} 
    $Q_\theta(\ud t)$ admits a uniformly bounded density $Q_\theta(t)$.
\item\label{a:ta-compact}
  The values $\{\beta_k\}$ remain in some compact subset $\B\subset\R$ almost surely.
\item\label{a:ta-positive}
$c_\beta>0$ for all $\beta\in \B$, where $c_\beta\defeq \int \pr(\ud \theta) L_\beta(\theta)$.  
\item\label{a:ta-geometric}
There exists $\dot{V}:\T\to[1,\infty)$ such that the Markov transitions $\dot{P}_\beta$ are simultaneously $\dot{V}$-geometrically ergodic: there exist $C>0$ and $\rho\in (0,1)$ s.t.~for all $k\ge 1$ and $f:\T\to\R$ with $\abs{f}\le \dot{V}$, it holds that
$$
\abs{ \dot{P}_\beta^k f(\theta) - \pi_\beta(f) }\le C \dot{V}(\theta) \rho^k.
$$
\item\label{a:ta-variance}
With $\E[\uarg]= \E_{\theta,\beta}[\uarg]$ denoting expectation with
respect to the law of the marginal chain $(\Theta_k)$ of Algorithm
\ref{alg:ta-proofs} started at $\theta\in\T$, $\beta\in\B$, and with
$\dot{V}$ as in Assumption \ref{a:ta}\eqref{a:ta-geometric}, we have,
$$
\sup_{\theta,\beta, k} \E \big[ \dot{V}(\Theta_{k})^2\big] <\infty.
$$
\end{enumerate}
    \end{assumption}

    \begin{theorem}\label{thm:ta-geometric}
      Under Assumption \ref{a:ta}, the expected value of the
      acceptance probability \eqref{eq:ta-acceptance}, taken with
      respect to the stationary measure of the chain, converges to
      $\alpha^*$.
  \end{theorem}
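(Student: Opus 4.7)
The plan is to cast Algorithm \mref{alg:ta-proofs} as a stochastic approximation scheme with controlled Markovian dynamics, then invoke the convergence framework of \citep{andrieu-moulines-priouret, andrieu-moulines}. Writing the recursion in standard form, step (iii) of Algorithm \mref{alg:ta-proofs} becomes $\beta_{k+1} = \beta_k + \gamma_{k+1} H(\beta_k, Z_{k+1})$, where $Z_{k+1} \defeq (\Theta_k, T_k, \Theta_k', T_k')$ takes values in $\mathsf{Z} \defeq \T \times [0,\infty) \times \T \times [0,\infty)$, and $H(\beta, z) \defeq \alpha^* - \alpha_\beta'(\theta, t; \theta', t')$. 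For each frozen $\beta \in \B$, the process $(Z_k)$ is Markov and its $(\Theta, T)$-marginal has transition $\tilde{P}_\beta$ with invariant measure $\tilde{\pi}_\beta$; the corresponding mean field is $h(\beta) = \alpha^* - \bar{\alpha}(\beta)$, where $\bar{\alpha}(\beta)$ denotes the stationary expected acceptance probability of \abcmcmc($e^\beta$). By Lemma \mref{lem:monotonicity} of Supplement \mref{app:ta}, $\bar{\alpha}$ is monotone in $\beta$, and continuity of $\bar{\alpha}$ (from the uniform density bound of Assumption \mref{a:ta}\eqref{a:ta-distance}) implies that $h$ has at most one root $\beta^* \in \B$; the conclusion $\bar{\alpha}(\beta_k) \to \alpha^*$ will follow from almost-sure convergence $\beta_k \to \beta^*$.

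The next step is to verify the standard regularity hypotheses enabling convergence of the stochastic approximation. Two main ingredients are required. First, one must construct a Poisson solution $\hat{g}_\beta$ satisfying $(I - \Pi_\beta) \hat{g}_\beta = H(\beta, \uarg) - h(\beta)$ with $|\hat{g}_\beta(z)| \le C \dot{V}(\theta)$; this follows from the simultaneous $\dot{V}$-geometric ergodicity of Assumption \mref{a:ta}\eqref{a:ta-geometric}, via the telescoping representation $\hat{g}_\beta = \sum_{k \ge 0}\big( \Pi_\beta^k H(\beta, \uarg) - h(\beta)\big)$ and the trivial bound $|H(\beta, \uarg)| \le 1 \le \dot{V}$. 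Second, the maps $\beta \mapsto \Pi_\beta$ and $\beta \mapsto \hat{g}_\beta$ must be Lipschitz in appropriate weighted norms. This reduces to uniform Lipschitz continuity of $L_\beta(\theta) = \int_0^{e^\beta} Q_\theta(t) \ud t$ in $\theta$: by Assumption \mref{a:ta}\eqref{a:ta-distance}, for $\beta, \beta' \in \B$ one has $|L_\beta(\theta) - L_{\beta'}(\theta)| \le \|Q_\theta\|_\infty |e^\beta - e^{\beta'}| \le C |\beta - \beta'|$, which in turn yields $\|(\Pi_\beta - \Pi_{\beta'})f\|_{\dot{V}} \le C |\beta - \beta'| \|f\|_{\dot{V}}$ and hence the required estimate on $\hat{g}_\beta$.

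Combining these ingredients with the step-size conditions of Assumption \mref{a:ta}\eqref{a:ta-step}, the compact confinement of Assumption \mref{a:ta}\eqref{a:ta-compact}, and the second-moment bound of Assumption \mref{a:ta}\eqref{a:ta-variance}, the noise $H(\beta_k, Z_{k+1}) - h(\beta_k)$ decomposes as a martingale difference plus a telescopic remainder controlled by $\hat{g}_\beta$. The Lyapunov function $w(\beta) \defeq (\beta - \beta^*)^2$ has mean-field derivative $2(\beta - \beta^*) h(\beta)$, which is strictly negative off $\beta^*$ by monotonicity of $\bar{\alpha}$, so a standard invocation of, for instance, Theorem 2.3 of \citep{andrieu-moulines-priouret} yields $\beta_k \to \beta^*$ almost surely, and the claim follows by continuity of $\bar{\alpha}$.

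The main obstacle is the second regularity ingredient above: turning the elementary Lipschitz bound on $L_\beta$ into a uniform-in-$\beta$ Lipschitz estimate on the Poisson solution $\hat{g}_\beta$ in the $\dot{V}$-weighted norm. This requires combining the geometric contraction from Assumption \mref{a:ta}\eqref{a:ta-geometric} with a perturbation argument via careful telescoping across powers of $\Pi_\beta$, while tracking that the constants remain uniform over the compact $\B$. This bookkeeping is standard but delicate, analogous to the arguments used in the proof of Proposition 6.1 in \citep{andrieu-moulines}.
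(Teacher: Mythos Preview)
Your overall strategy coincides with the paper's: recast Algorithm \mref{alg:ta-proofs} as a stochastic approximation, verify a Poisson-equation noise control, and invoke Theorem 2.3 of \citep{andrieu-moulines-priouret}. But there is a genuine gap in how you obtain the Poisson solution.

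Assumption \mref{a:ta}\eqref{a:ta-geometric} gives simultaneous $\dot{V}$-geometric ergodicity only for the \emph{marginal} kernel $\dot{P}_\beta$ on $\T$. Your state variable $Z_{k+1}=(\Theta_k,T_k,\Theta_k',T_k')$ lives on $\T\times[0,\infty)\times\T\times[0,\infty)$, and the field $H(\beta,\uarg)$ is not a function of $\Theta_k$ alone, so you cannot write the Poisson solution as $\sum_{k\ge 0}\dot{P}_\beta^k(\cdot)$. You need geometric ergodicity of the kernel that actually drives $H$, and that does not follow ``directly'' from Assumption \mref{a:ta}\eqref{a:ta-geometric}. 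The paper resolves this in two steps you omit: first, with $\phi=\phi_{\mathrm{simple}}$ the acceptance probability factors as $\dot{\alpha}(\theta,\theta')L_\beta(\theta')$, so the $t$-coordinates can be integrated out and one may work on $\T^2$ with $\widehat{H}_\beta(\theta,\theta')$; second, Lemma \mref{lem:proposal} (the proposal augmentation) shows explicitly that $\dot{V}$-geometric ergodicity of $\dot{P}_\beta$ on $\T$ transfers to $V$-geometric ergodicity of the augmented kernel $P_\beta$ on $\T^2$, with $V(\theta,\theta')=\tfrac12(\dot{V}(\theta)+\dot{V}(\theta'))$. Only then does the Poisson solution $g_\beta=\sum_{k\ge 0}P_\beta^k\bar{H}_\beta$ exist with the required $V$-bound. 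This lift is not standard and is the main technical novelty in this part of the paper; without it your Poisson construction is unjustified.

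Two smaller points. The paper takes $w(\beta)=\tfrac12|e^\beta-e^{\beta^*}|^2$ rather than your $(\beta-\beta^*)^2$; either works, but note that Lemma \mref{lem:monotonicity} gives $\bar{\alpha}$ only non-decreasing, so ``$h$ has at most one root'' is not correct---the zero set may be an interval, and the conclusion is that $\bar{\alpha}(\beta_k)\to\alpha^*$, not that $\beta_k$ converges to a unique point. Also, the paper does not prove a full $V$-norm Lipschitz bound on $g_\beta$; the contraction Lemma \mref{lem:contractions} is in the $\sup$-norm only (which suffices since $|\bar{H}_\beta|\le 1$), and the noise control in Lemma \mref{lem:ta-key} uses a \emph{truncated} Poisson solution (truncation level $m_k\asymp|\log\gamma_k|$) precisely to avoid the delicate Lipschitz-of-$g_\beta$ argument you flag as the main obstacle.
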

Proof of Theorem \ref{thm:ta-geometric} can be found in Section
\ref{app:ta}.  It relies heavily on the simple conditions of
\citep[Theorem 2.3]{andrieu-moulines-priouret}, which says that one
must essentially show that the noise in the stochastic approximation
update is asymptotically controlled.

We remark that there are likely extensions of Assumption
\ref{a:ta}\eqref{a:ta-compact} to the general non-compact adaptation
parameter case based on projections
\citep[cf.][]{andrieu-moulines-priouret}.  


\section{Analysis of the tolerance adaptive ABC-MCMC}
\label{app:ta} 
      
In this section we aim to prove generalised convergence (Theorem
\ref{thm:ta-geometric} of Section \ref{app:ta-geometric}) of the
tolerance adaptation, from which Theorem \mref{thm:ta-uniform} of
Section \mref{sec:ta} will follow as a corollary. Throughout, we denote
by $C>0$ a constant which may change from line to line.

\subsection{Proposal augmentation}

Suppose $\dot{\genKernel}$ is a Markov kernel which can be written as
  \begin{equation}\label{eq:proposal-rejection}
\dot{\genKernel}(x, \ud y) = q(x,\ud y) \alpha(x,y) +\bigg(1-\int q(x,\ud y') \alpha(x,y')\bigg)\charfun{x\in \ud y},
  \end{equation}
where $\alpha(x,y)\in [0,1]$ is a jointly measurable function and $q(x,\ud y)$ is a Markov proposal kernel.
With $\bx\defeq (x,x')$, we define the \emph{proposal augmentation} to be the Markov kernel  
\begin{equation}\label{eq:proposal-augmentation}
\genKernel( \breve{x}, \ud \breve{y})
=
\alpha(\bx)\charfun{x'\in \ud y} q(x', \ud y') + \big(1-\alpha(\bx)\big) \charfun{x\in \ud y} q(x,\ud y').
\end{equation}
It is easy to see that $\genKernel$ need not be reversible even if $\dot{\genKernel}$ is reversible.  In this case, however, $\genKernel$ does leave a probability measure invariant.
\begin{lemma}\label{lem:proposal}
Suppose a Markov kernel $\dot{\genKernel}$ of the form given in \eqref{eq:proposal-rejection} is $\dot{\mu}$-reversible.  Let $\genKernel$ be its proposal augmentation.  Then the  following statements hold:
\begin{enumerate}[(i)]
\item \label{lem:proposal-measure}
  $\mu \genKernel = \mu$, where $\mu(\ud x,\ud x') \defeq \dot{\mu}(\ud x)q(x,\ud x')$.
\item\label{lem:proposal-ergodic}
  If $\dot{\genKernel}$ is $\dot{V}$-geometrically ergodic with constants $(\dot{C},\dot{\rho})$, then $\genKernel$ is $V$-geometrically ergodic with constants $(C, \rho)$, where
  $
  C\defeq 2 \dot{C}/\dot{\rho},
$
  $
  \rho \defeq \dot{\rho},
  $
  and
  $
V(\bx) \defeq \frac{1}{2}\big(V(x) + V(x')\big).
  $
  \end{enumerate}
\end{lemma}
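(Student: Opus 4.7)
I plan to prove the two parts separately using the structural decomposition of $L$ given in \eqref{eq:proposal-augmentation}; throughout I read $V(\breve{x})=\tfrac12(\dot V(x)+\dot V(x'))$, interpreting the right-hand side of the $V$-definition as $\dot V$ (the bare $V$ in the lemma statement appears to be a typo).

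For (i), the task is to verify $\mu L = \mu$ directly. Expanding $\mu L f$ for a test function $f(y,y')$ splits into an ``accept'' piece $\int\dot\mu(\ud x)q(x,\ud x')\alpha(x,x')\int q(x',\ud y')f(x',y')$ and a ``reject'' piece $\int\dot\mu(\ud x)q(x,\ud x')(1-\alpha(x,x'))\int q(x,\ud y')f(x,y')$. Using the detailed balance of $\dot L$ with respect to $\dot\mu$, i.e.\ $\dot\mu(\ud x)q(x,\ud x')\alpha(x,x')=\dot\mu(\ud x')q(x',\ud x)\alpha(x',x)$, the accept piece can be integrated over $x$ to give $\int\dot\mu(\ud x')a(x')q(x',\ud y')f(x',y')$ with $a(z)\defeq\int q(z,\ud w)\alpha(z,w)$; the reject piece integrates over $x'$ to $\int\dot\mu(\ud x)(1-a(x))q(x,\ud y')f(x,y')$. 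Adding and relabelling then yields $\int\dot\mu(\ud y)q(y,\ud y')f(y,y')=\mu f$.

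For (ii), the structural observation is that one step of $L$ from $\breve{x}=(x,x')$ produces $(X_1,X_1')$, where $X_1\in\{x,x'\}$ is the accept/reject outcome and $X_1'$ is a fresh draw from $q(X_1,\uarg)$. Introducing $\tilde g(z)\defeq\int q(z,\ud z')g(z,z')$, this gives $Lg(\breve{x})=\alpha(x,x')\tilde g(x')+(1-\alpha(x,x'))\tilde g(x)$, and an induction based on the identity $\int q(z,\ud z')[\alpha(z,z')\tilde g(z')+(1-\alpha(z,z'))\tilde g(z)]=\dot L\tilde g(z)$ extends this to
\[
L^k g(\breve{x})=\alpha(x,x')\dot L^{k-1}\tilde g(x')+(1-\alpha(x,x'))\dot L^{k-1}\tilde g(x),\qquad k\ge 1.
\]
Noting $\mu(g)=\dot\mu(\tilde g)$, applying the $\dot V$-geometric ergodicity of $\dot L$ to $\tilde g$, together with the convex bound $\alpha(x,x')\dot V(x')+(1-\alpha(x,x'))\dot V(x)\le \dot V(x)+\dot V(x')=2V(\breve x)$, yields $|L^k g(\breve x)-\mu(g)|\le 2\dot C\dot\rho^{k-1}V(\breve x)=(2\dot C/\dot\rho)\dot\rho^k V(\breve x)$, with the desired constants $C=2\dot C/\dot\rho$ and $\rho=\dot\rho$.

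The one subtlety I expect to have to address is the bound $|\tilde g|\le\dot V$ needed to legitimately invoke $\dot V$-ergodicity on $\tilde g$: the hypothesis $|g|\le V$ only directly gives $|\tilde g(z)|\le\tfrac12(\dot V(z)+(q\dot V)(z))$, so the argument either relies implicitly on a $q$-stability $q\dot V\le\dot V$ (natural in the adaptive setting of Theorem \ref{thm:ta-geometric}, where $\dot V$ serves as a common Lyapunov function) or uses a slightly enlarged drift $V$. This is the main obstacle; apart from it, the rest is a mechanical induction combined with the reversibility of $\dot L$.
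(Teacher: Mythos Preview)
Your approach is essentially identical to the paper's: part (i) is dispatched there as ``a direct calculation'' (your detailed-balance computation is exactly what is meant), and part (ii) proceeds via the same induction to the formula $L^k g(\breve x)=\alpha(\breve x)\dot L^{k-1}\tilde g(x')+(1-\alpha(\breve x))\dot L^{k-1}\tilde g(x)$, followed by the same convex bound. The subtlety you flag about $|\tilde g|\le\dot V$ is handled in the paper by the one-line remark ``$\lVert\dot f\rVert\le\dot V$ since $\lVert q\rVert_\infty\le 1$'', which, as you correctly observe, is not fully justified for unbounded $\dot V$ without something like $q\dot V\le\dot V$; in the paper's actual applications (Lemma~\ref{lem:uniform}, feeding Theorem~\ref{thm:ta-uniform}) one has $\dot V\equiv 1$, so the point is moot there.
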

Lemma \ref{lem:proposal} extends \citep[Theorem 4]{schuster-klebanov},
who consider the case where $\dot{P}$ is a Metropolis--Hastings chain
\citep[see also][]{delmas-jourdain,rudolf-sprungk18}.  The extension
to the more general class of reversible proposal-rejection chains
allows one to consider, for example, jump and delayed acceptance
chains, as well as the marginal chain \eqref{eq:marginal-kernel} of
Section \ref{app:ta-geometric}, which will be important for our
analysis of the tolerance adaptation.

\begin{proof}[of Lemma \ref{lem:proposal}]
  Part \eqref{lem:proposal-measure} follows by a direct calculation.
  We now consider part \eqref{lem:proposal-ergodic}.  
For $f:\X^2\to \R$, we shall use the notation
  $
\dot{f}(x) \defeq \int f(\bx) q(x,\ud x').
$
For $f:\X^2\to \R$, we have
$$
\int q(x, \ud x') \genKernel\big((x,x'); \ud \by\big) f(\by)
=
\int q(x,\ud x') \alpha(\bx) \dot{f}(x')
+ \int q(x,\ud x') \big(1 -\alpha(\bx)\big) \dot{f}(x)
= \dot{\genKernel}\dot{f}(x),
$$
and then inductively, for $k\ge 1$,
\begin{align*}
\int q(x, \ud x')  \genKernel^k\big( (x,x'); \ud \by\big) f(\by)
 &=
\int q(x,\ud x') \alpha(\bx) q(x',\ud y') L^{k-1}\big((x',y'); \ud \bz\big) f(\bz)
\\ &\qquad+\int q(x,\ud x')\big(1-\alpha(\bx)\big) q(x,\ud y') L^{k-1}\big( (x,y'); \ud \bz) f(\bz)
\\ &=
\int q(x,\ud x') \alpha(\bx) \dot{L}^{k-1} \dot{f}(x')
+\int q(x,\ud x')\big(1-\alpha(\bx)\big) \dot{L}^{k-1}\dot{f}(x)
\\&=
\dot{\genKernel}^k \dot{f}(x).
\end{align*}
We then have the equality,
\begin{align*}
\genKernel^k f(\bx)
&=
 \alpha(\bx)\int q(x', \ud y') \genKernel^{k-1}\big( (x', y'); \ud \bz\big) f(\bz)
+
\big(1 - \alpha(\bx)\big)\int  q(x,\ud y') \genKernel^{k-1}\big( (x,y'); \ud \bz\big) f(\bz)
\\ &=
\alpha(\bx)\dot{\genKernel}^{k-1} \dot{f}(x') + \big(1-\alpha(\bx)\big) \dot{\genKernel}^{k-1} \dot{f}(x).
\end{align*}
For $\norm{f}\le V$, note that $\lVert\dot{f}\rVert\le \dot{V}$ since $\norm{q}_\infty\le 1$, and we conclude  \eqref{lem:proposal-ergodic} from
\begin{align*}
  \abs{\genKernel^k f(\bx) - \mu(f)}
  &\le \alpha(\bx)\abs{\dot{\genKernel}^{k-1}\dot{f}(x') - \dot{\mu}(\dot{f})} + \big(1 - \alpha(\bx)\big) \abs{ \dot{\genKernel}^{k-1} \dot{f}(x) - \dot{\mu}(\dot{f})}
  \\ &\le
  \dot{C} \dot{\rho}^{k-1} \big( \dot{V}(x') + \dot{V}(x)\big).
\qedhere
\end{align*}
\end{proof}
Consider now the $\theta$-coordinate chain $\dot{P}_\beta$ presented
in \eqref{eq:marginal-kernel} of Section \ref{app:ta-geometric}. This
transition $\dot{P}_\beta$ is clearly a reversible proposal-rejection
chain of the form \eqref{eq:proposal-rejection}. We now consider
$P_\beta$, its proposal augmentation. This is the chain $\bTheta_k
\defeq (\Theta_k, \Theta_k') \in {\T}^2$, formed by disregarding the
$t$-parameter as with $\dot{P}_\beta$ before, but now augmenting by
the proposal $\theta'\sim q(\theta,\uarg)$.  Its transitions are of
the form $\btheta = \bTheta_{k}$ goes to $\bvartheta=\bTheta_{k+1}$ in
the ABC-MCMC, with $\bvartheta = (\vartheta,\vartheta')$ and kernel
$$
P_\beta(\btheta, \ud \bvartheta)
\defeq
\alpha_\beta(\btheta) \charfun{\theta'\in \ud \vartheta}  q(  \theta',\ud \vartheta')
+\big(1 - \alpha_\beta(\btheta)\big) \charfun{\theta\in \ud \vartheta } q( \theta,\ud \vartheta')
$$
By Lemma \ref{lem:proposal}\eqref{lem:proposal-measure}, $P_\beta$ leaves
$\pi_\beta' \defeq \pi_{\beta,u}'/c_\beta$ invariant, where
$
\pi_{\beta,u}'(\ud \btheta)
\defeq
\pr(\ud \theta) L_\beta(\theta) q( \theta,\ud \theta')
$
and
$
c_\beta \defeq \int \pr(\ud \theta) L_\beta(\theta).
$

\subsection{Monotonicity properties}
The following result establishes monotonicity of the mean field acceptance rate with increasing tolerance.
\begin{lemma}\label{lem:monotonicity}
  Assume Assumption \ref{a:ta}\eqref{a:ta-density} and \ref{a:ta}\eqref{a:ta-distance} hold.  The mapping
  $
  \beta
  \mapsto
  \pi_\beta'(\alpha_\beta)
  $
  is monotone non-decreasing.
  \end{lemma}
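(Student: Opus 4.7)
The plan is to express the quantity $\pi_\beta'(\alpha_\beta)$ in closed form and differentiate with respect to $\beta$, then reduce the claim to a correlation-type inequality exploiting the symmetry forced by Metropolis--Hastings detailed balance. First I would use the definitions of $\pi_\beta'$ and $\alpha_\beta$ to write
\[
\pi_\beta'(\alpha_\beta) = \frac{N(\beta)}{c_\beta}, \qquad N(\beta) \defeq \iint H(\theta,\theta') L_\beta(\theta) L_\beta(\theta')\,d\theta\,d\theta',
\]
where $H(\theta,\theta') \defeq \pr(\theta) q(\theta,\theta') \dot\alpha(\theta,\theta')$. The crucial structural observation is that $H$ is symmetric in $(\theta,\theta')$ --- this is exactly the Metropolis--Hastings detailed balance identity, $\pr(\theta)q(\theta,\theta')\dot\alpha(\theta,\theta') = \min\{\pr(\theta)q(\theta,\theta'),\,\pr(\theta')q(\theta',\theta)\}$. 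The integrand $L_\beta(\theta)L_\beta(\theta')$ is also symmetric, so the whole measure $H(\theta,\theta')L_\beta(\theta)L_\beta(\theta')\,d\theta\,d\theta'$ is symmetric.

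Next, using Assumption \ref{a:ta}\eqref{a:ta-distance} (uniformly bounded density for $Q_\theta$) to differentiate under the integral, $\partial_\beta L_\beta(\theta) = e^\beta Q_\theta(e^\beta) \eqdef \ell_\beta(\theta) \ge 0$. Applying the symmetry of $H$ to split the two terms coming from the product rule collapses them, yielding $\partial_\beta N(\beta) = 2\iint H(\theta,\theta')\ell_\beta(\theta) L_\beta(\theta')\,d\theta\,d\theta'$, while $\partial_\beta c_\beta = \int \pr(\theta)\ell_\beta(\theta)\,d\theta$. Thus $\partial_\beta \pi_\beta'(\alpha_\beta) \ge 0$ is equivalent to
\[
2\,\mathbb{E}_H\!\big[\ell_\beta(\Theta) L_\beta(\Theta')\big]\,\mathbb{E}_\pr[L_\beta(\Theta)] \;\ge\; \mathbb{E}_H\!\big[L_\beta(\Theta)L_\beta(\Theta')\big]\,\mathbb{E}_\pr[\ell_\beta(\Theta)].
\]
Introducing the mean acceptance function $\alpha_\beta^*(\theta) \defeq \int q(\theta,\theta')\dot\alpha(\theta,\theta')L_\beta(\theta')\,d\theta'$ and the log-derivative $\gamma_\beta(\theta) \defeq \ell_\beta(\theta)/L_\beta(\theta)$, the integrations over $\theta'$ factor through $\alpha_\beta^*$, and the inequality can be rewritten as
\[
2\,\mathbb{E}_{\pi_\beta}\!\big[\gamma_\beta(\Theta)\alpha_\beta^*(\Theta)\big] \;\ge\; \mathbb{E}_{\pi_\beta}\!\big[\gamma_\beta(\Theta)\big]\,\mathbb{E}_{\pi_\beta}\!\big[\alpha_\beta^*(\Theta)\big],
\]
equivalently $\mathbb{E}_{\pi_\beta}[\gamma_\beta\alpha_\beta^*] + \mathrm{Cov}_{\pi_\beta}(\gamma_\beta,\alpha_\beta^*) \ge 0$.

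The hard part is justifying this last inequality: since $\gamma_\beta,\alpha_\beta^* \ge 0$ the first summand is automatic, but controlling the sign of the covariance is the real obstacle. My plan here is to exploit the fact that both $\gamma_\beta(\theta)$ and $\alpha_\beta^*(\theta)$ are built from $L_\beta$ through the same proposal kernel $q$ --- $\alpha_\beta^*$ averages $L_\beta$ around $\theta$, while $\gamma_\beta$ is the normalised derivative --- so the positive first term can be shown to dominate any negative covariance by a Cauchy--Schwarz-type bound against $\mathbb{E}_{\pi_\beta}[\gamma_\beta]\mathbb{E}_{\pi_\beta}[\alpha_\beta^*]$, using the $\phi=\phi_{\mathrm{simple}}$ structure together with the uniform bounds on the densities in Assumption \ref{a:ta}\eqref{a:ta-density}--\eqref{a:ta-distance}. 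I expect this step to be the delicate part of the argument; an alternative, perhaps cleaner, route is a direct pointwise coupling comparison of two chains at $\beta_1<\beta_2$ under common random numbers, where acceptance is monotone in the cut-off $\{T'\le e^\beta\}$ event-wise and the only remaining task is to handle the accompanying shift of $\pi_\beta$ toward larger $L_\beta$, but that ultimately leads back to the same correlation inequality.
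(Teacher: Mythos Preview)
Your setup and reduction are correct and mirror the paper's approach exactly: differentiate $N(\beta)/c_\beta$ with $N(\beta)=\iint H\,L_\beta L_\beta'$, use the detailed-balance symmetry of $H(\theta,\theta')=\pr(\theta)q(\theta,\theta')\dot\alpha(\theta,\theta')$ to collapse the two product-rule terms into one (this is precisely the ``reversibility of Metropolis--Hastings targeting $\pr$'' step the paper invokes), and arrive at the sign condition $2\,\mathbb{E}_{\pi_\beta}[\gamma_\beta\alpha_\beta^*]\ge \mathbb{E}_{\pi_\beta}[\gamma_\beta]\,\mathbb{E}_{\pi_\beta}[\alpha_\beta^*]$.

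The gap is exactly where you say it is, and your proposed fixes do not close it. A generic Cauchy--Schwarz bound cannot force $\mathrm{Cov}_{\pi_\beta}(\gamma_\beta,\alpha_\beta^*)\ge -\mathbb{E}_{\pi_\beta}[\gamma_\beta\alpha_\beta^*]$: consider a posterior with two well-separated modes and a local proposal, one mode with $L_\beta$ close to $1$ and $Q_\theta(e^\beta)\approx 0$ (so $\alpha_\beta^*$ large, $\gamma_\beta$ small) and the other with $L_\beta$ small and $Q_\theta(e^\beta)$ large (so $\alpha_\beta^*$ small, $\gamma_\beta$ large). Then $\gamma_\beta$ and $\alpha_\beta^*$ are strongly negatively associated under $\pi_\beta$ while $\mathbb{E}_{\pi_\beta}[\gamma_\beta\alpha_\beta^*]$ is small, and nothing in Assumption~\ref{a:ta}\eqref{a:ta-density}--\eqref{a:ta-distance} rules this out. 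The coupling idea meets the same obstruction, as you yourself note.

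For comparison, the paper does \emph{not} reorganise into your covariance form. After the reversibility swap it writes the derivative as
\[
\frac{e^\beta}{c_\beta^2}\int \pr(\ud\theta)\,f(\theta)\int q(\theta,\ud\theta')\dot\alpha(\theta,\theta')L_\beta(\theta'),
\qquad f(\theta)=2Q_\theta(e^\beta)c_\beta - L_\beta(\theta)\!\int\!\pr(\ud\tilde\theta)Q_{\tilde\theta}(e^\beta),
\]
and concludes by checking that $\int\pr(\ud\theta)f(\theta)\ge 0$. You should be aware, however, that this final inference in the paper is itself not transparent: the inner integral depends on $\theta$, $f$ is not sign-definite, and $\int\pr f\ge 0$ does not by itself give $\int\pr f\,g\ge 0$ for non-constant $g\ge 0$. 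So your acknowledged gap coincides with a step in the paper that is, at best, highly compressed; if you want a complete argument you will need to exploit more structure than either your sketch or the paper's last line makes explicit.
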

\begin{proof}
  Since $\pr(\theta)$ and $q(\theta,\theta')$ are uniformly bounded 
  (Assumption \ref{a:ta}\eqref{a:ta-density}), and $L_\beta(\theta)\le 1$, differentiation under the integral sign is possible in the following by the dominated convergence theorem.
By the quotient rule, 
\begin{equation}\label{eq:quotient}
\frac{\ud }{ \ud \beta}\Big( \pi_\beta'(\alpha_\beta)\Big)
=\frac{1}{c_\beta^2} \bigg( c_\beta \frac{\ud}{\ud \beta}\Big( \pi_{\beta,u}'(\alpha_\beta)\Big) - \pi_{\beta,u}'(\alpha_\beta) \frac{\ud c_\beta}{\ud \beta} \bigg).
\end{equation}
By reversibility of Metropolis--Hastings targeting $\pr(\theta)$ with proposal $q$, 
$$
\frac{\ud}{\ud \beta}\Big( \pi_{\beta,u}' (\alpha_\beta)\Big)
=
2 e^\beta \int \pr(\ud \theta) L_\beta(\theta) q(\theta, \ud \theta') \dot{\alpha}(\theta,\theta') Q_{\theta'}(e^\beta).
$$
With
$$
f(\theta')
\defeq
2 Q_{\theta'}(e^\beta) \int \pr(\ud \tilde{\theta}) L_\beta(\tilde{\theta}) - L_\beta(\theta')\int \pr(\ud \tilde{\theta}) Q_{\tilde{\theta}}(e^\beta),
$$
we can then write \eqref{eq:quotient} as  
$$
\frac{\ud}{\ud \beta}\Big( \pi_\beta'(\alpha_\beta)\Big)
=
\frac{e^\beta}{c_\beta^2} \int \pr(\ud \theta) L_\beta(\theta) q(\theta,\ud \theta') \dot{\alpha}(\theta,\theta')f(\theta').
$$
By the same reversibility property as before, we can write this again as
  $$
\frac{\ud}{\ud \beta}\Big( \pi_\beta'(\alpha_\beta)\Big)
=
\frac{e^\beta}{c_\beta^2} \int f(\theta) \pr(\ud \theta) \int q(\theta,\ud \theta')  L_\beta(\theta') \dot{\alpha}(\theta,\theta'), 
$$
We then conclude, since
$$
\int f(\theta) \pr(\ud \theta)
=\int Q_\theta(e^\beta) \pr(\ud \theta) \int L_\beta(\tilde{\theta}) \pr(\ud \tilde{\theta})\ge 0.
\qedhere
$$
\end{proof}

\begin{lemma}\label{lem:constant}
  The following statements hold:
  \begin{enumerate}[(i)]
\item \label{lem:constant-constant}
  The function $\beta \mapsto c_\beta$ is monotone non-decreasing on $\R$.
\item \label{lem:constant-compact}
  If Assumption \ref{a:ta}\eqref{a:ta-compact} and
  \ref{a:ta}\eqref{a:ta-positive} hold, then there exist $C_{\min}>0$,
  $C_{\max}>0$ such that $C_{\min} \le c_\beta \le C_{\max}$ for all
  $\beta\in\B$.  
\end{enumerate}
  \end{lemma}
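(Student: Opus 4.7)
The plan is to unpack the definitions and exploit the fact that $L_\beta(\theta)$ is a cumulative distribution function in the variable $e^\beta$, hence monotone in $\beta$, and then use monotonicity plus compactness plus the positivity assumption to get the bounds in part (ii).

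First, for part (i), I would observe that
\[
L_\beta(\theta) = \int Q_\theta(\ud t) \charfun{t\le e^\beta} = Q_\theta\bigl([0,e^\beta]\bigr)
\]
is, for each fixed $\theta$, a non-decreasing function of $\beta$ on $\R$, since the sets $[0,e^\beta]$ are nested in $\beta$. Consequently, for any $\beta_1\le \beta_2$, the pointwise inequality $L_{\beta_1}(\theta)\le L_{\beta_2}(\theta)$ holds, and integrating against $\pr(\ud\theta)$ (which is a nonnegative measure) preserves the inequality, yielding $c_{\beta_1}\le c_{\beta_2}$.

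For part (ii), I would use compactness of $\B$ to pick $\beta_{\min} \defeq \min \B$ and $\beta_{\max}\defeq \max\B$, both of which exist and belong to $\B$. By part (i), any $\beta\in\B$ satisfies $c_{\beta_{\min}}\le c_\beta \le c_{\beta_{\max}}$. For the upper bound I would simply set $C_{\max}\defeq c_{\beta_{\max}}$, and note $c_{\beta_{\max}}\le \int\pr(\ud\theta)=1<\infty$ since $L_\beta\le 1$. For the lower bound, Assumption \ref{a:ta}\eqref{a:ta-positive} applied at $\beta_{\min}\in\B$ gives $c_{\beta_{\min}}>0$, so $C_{\min}\defeq c_{\beta_{\min}}$ works.

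There is essentially no obstacle here: both parts are immediate once one recognises $L_\beta(\theta)$ as a CDF of the distance distribution evaluated at $e^\beta$. The only thing to be mildly careful about is that $\B$ is only assumed to be a compact subset of $\R$ (not necessarily an interval), but monotonicity of $c_\beta$ on all of $\R$ makes this irrelevant since we only need to sandwich values of $c_\beta$ for $\beta$ between $\min\B$ and $\max\B$.
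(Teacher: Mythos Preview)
Your proof is correct and matches the paper's approach essentially line for line: both argue monotonicity of $c_\beta$ via the nested sets $[0,e^\beta]$ in the definition $c_\beta=\int\pr(\ud\theta)Q_\theta([0,e^\beta])$, and both set $C_{\min}=c_{\min(\B)}$, $C_{\max}=c_{\max(\B)}$ using compactness and the positivity assumption. Your extra remarks (that $C_{\max}\le 1$ and that $\B$ need not be an interval) are harmless additions.
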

\begin{proof}
Part \eqref{lem:constant-constant} follows,
for $\beta\le \beta'$, from
  $$
  c_\beta
  = \int \pr(\ud \theta) Q_\theta([0,e^{\beta}])
  \le \int \pr(\ud \theta) Q_\theta([0,e^{\beta'}])
  = c_{\beta'}.
  $$
  Consider part \eqref{lem:constant-compact}.  By part
  \eqref{lem:constant-constant} and compactness of $\B$ (Assumption
  \ref{a:ta}\eqref{a:ta-compact}), we can set $ C_{\min} \defeq
  c_{\min (\B) } $ and $ C_{\max} \defeq c_{\max (\B)}, $ both of
  which are positive by Assumption \ref{a:ta}\eqref{a:ta-positive}.
\end{proof}

\subsection{Stochastic approximation framework}\label{app:ta-sa}
To obtain a form common in the stochastic approximation literature
\citep[cf.][]{andrieu-moulines-priouret}, we write the update in
Algorithm \ref{alg:ta-proofs} as
\begin{align*}
\beta_{k+1}
&=
\beta_k + \gamma_{k+1} H_{\beta_k}(\bTheta_k, T_k')
\\ &=
 \beta_{k} + \gamma_{k+1} h(\beta_{k}) + \gamma_{k+1} \zeta_{k+1}
\end{align*}
where
$
H_{\beta}(\btheta, t')
\defeq
\alpha^* - \alpha_{\beta}'(\btheta, t'), 
$
$$
\alpha_{\beta}'(\btheta, t')
\defeq
\min \bigg\{1, \frac{ \pr(\theta') q(\theta',\theta)}{\pr(\theta) q(\theta,\theta')} \bigg\} \charfun{t' \le e^\beta},
$$
$$
h(\beta)
\defeq
\pi_\beta'(\widehat{H}_\beta)
=
\int \pi_\beta (\ud \theta) q( \theta,\ud \theta') Q_{\theta'}(\ud t') H_{\beta}(\theta,\theta', t'),
$$
noise sequence 
$
\zeta_{k+1}
\defeq
H_{\beta_k}(\bTheta_k, T_k') - h(\beta_k),
$
and conditional expectation
$$
\widehat{H}_\beta (\btheta)
\defeq
\E[ H_\beta(\bTheta,T')| \bTheta=\btheta],
$$
where $T'\sim Q_{\theta'}(\uarg)$.  We also set for convenience
$
\bar{H}_\beta(\btheta)\defeq
\widehat{H}_\beta(\btheta) - \pi_\beta'(\widehat{H}_\beta).
$

\begin{lemma}\label{lem:ergodic}
  Suppose Assumption \ref{a:ta}\eqref{a:ta-geometric} holds.  Then the following statements hold:
  \begin{enumerate}[(i)]
\item\label{lem:ergodic-simultaneous}
  The proposal augmented kernels $(P_\beta)_{\beta\in\B}$ are simultaneously $V$-geometrically ergodic, where $V(\theta,\theta')\defeq \frac{1}{2}\big(\dot{V}(\theta) + \dot{V}(\theta')\big)$, with $\dot{V}$ as in Assumption \ref{a:ta}\eqref{a:ta-geometric}.  
\item\label{lem:ergodic-poisson}
  There exists $C>0$, such that for all $\beta\in\B$, the formal solution  $g_\beta=\sum_{k\ge 0} P_\beta^k \bar{H}_\beta$ to the Poisson equation
  $
  g_\beta - P_\beta g_\beta = \bar{H}_\beta
  $
satisfies
  $
\abs{g_\beta(\btheta)} \le C V(\btheta).
  $
  \end{enumerate}
  \end{lemma}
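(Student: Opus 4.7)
The plan is to deduce both parts directly from Lemma \ref{lem:proposal} together with Assumption \ref{a:ta}\eqref{a:ta-geometric}, after observing that the driving function $\bar{H}_\beta$ is bounded uniformly in $\beta$ and $\breve\theta$.

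For part \eqref{lem:ergodic-simultaneous}, the kernel $\dot{P}_\beta$ of \eqref{eq:marginal-kernel} is reversible with respect to $\pi_\beta$ and has precisely the proposal--rejection form \eqref{eq:proposal-rejection} with acceptance probability $\alpha_\beta(\theta,\vartheta) = \dot{\alpha}(\theta,\vartheta) L_\beta(\vartheta)$. Assumption \ref{a:ta}\eqref{a:ta-geometric} provides constants $(\dot{C},\dot{\rho})$, independent of $\beta \in \mathsf{B}$, witnessing simultaneous $\dot{V}$-geometric ergodicity of $(\dot{P}_\beta)$. I would then invoke Lemma \ref{lem:proposal}\eqref{lem:proposal-ergodic} for each $\beta\in\mathsf{B}$ separately: the proposal augmentation $P_\beta$ of $\dot{P}_\beta$ is $V$-geometrically ergodic with constants $(2\dot{C}/\dot{\rho}, \dot{\rho})$ and Lyapunov function $V(\theta,\theta') = \tfrac{1}{2}(\dot{V}(\theta)+\dot{V}(\theta'))$. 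Since these constants do not depend on $\beta$, simultaneous $V$-geometric ergodicity of $(P_\beta)_{\beta\in\mathsf{B}}$ follows.

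For part \eqref{lem:ergodic-poisson}, the key observation is that $H_\beta(\breve\theta,t') = \alpha^* - \alpha_\beta'(\breve\theta,t') \in [-1,1]$, so $|\widehat{H}_\beta|\le 1$ and $|\pi_\beta'(\widehat{H}_\beta)|\le 1$, whence $|\bar{H}_\beta|\le 2$. Because $\dot{V}\ge 1$ we have $V\ge 1$, and so $|\bar{H}_\beta/2|\le 1\le V$. Writing $C_1 \defeq 2\dot{C}/\dot{\rho}$ and $\rho \defeq \dot{\rho}$ from part \eqref{lem:ergodic-simultaneous}, and noting $\pi_\beta'(\bar{H}_\beta) = 0$ by construction, I would apply simultaneous $V$-geometric ergodicity to $\bar{H}_\beta/2$ to obtain
\[
|P_\beta^k \bar{H}_\beta(\breve\theta)| = 2\bigl|P_\beta^k(\bar{H}_\beta/2)(\breve\theta) - \pi_\beta'(\bar{H}_\beta/2)\bigr| \le 2 C_1 V(\breve\theta)\rho^k,
\]
uniformly in $\beta\in\mathsf{B}$. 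Summing the geometric series then gives the absolute convergence of $g_\beta = \sum_{k\ge 0} P_\beta^k \bar{H}_\beta$ together with
\[
|g_\beta(\breve\theta)| \le \frac{2 C_1}{1-\rho} V(\breve\theta),
\]
which is the desired bound with $C \defeq 2C_1/(1-\rho)$. A routine verification that $g_\beta$ indeed satisfies $g_\beta - P_\beta g_\beta = \bar{H}_\beta$ completes the proof.

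The proof is essentially book-keeping once the right tools are assembled: the only subtle point is that one needs uniformity of all constants across $\beta\in\mathsf{B}$. This uniformity is given directly by Assumption \ref{a:ta}\eqref{a:ta-geometric}, and it is preserved by Lemma \ref{lem:proposal}\eqref{lem:proposal-ergodic} because the transformation of constants there is explicit and does not involve $\beta$. The global boundedness of $\bar{H}_\beta$ then bypasses any need for a more delicate drift argument.
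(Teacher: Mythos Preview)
Your proof is correct and follows essentially the same approach as the paper: part \eqref{lem:ergodic-simultaneous} via the explicit constants in Lemma \ref{lem:proposal}\eqref{lem:proposal-ergodic}, and part \eqref{lem:ergodic-poisson} by summing the geometric bound from \eqref{lem:ergodic-simultaneous} against the uniform boundedness of $\bar{H}_\beta$. One cosmetic point: since $\bar{H}_\beta(\btheta) = \pi_\beta'(\alpha_\beta) - \alpha_\beta(\btheta)$ (the $\alpha^*$ cancels), you actually have $|\bar{H}_\beta|\le 1$ rather than $2$, which is the bound the paper uses; this only changes the final constant.
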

\begin{proof}
  \eqref{lem:ergodic-simultaneous} follows directly from the explicit parametrisation for $(C,\rho)$ given in Lemma \ref{lem:proposal}\eqref{lem:proposal-ergodic}.

Part  \eqref{lem:ergodic-poisson} follows from part \eqref{lem:ergodic-simultaneous} and the bound, since $\abs{\bar{H}_\beta} \le 1\le V$,
$$
\abs{g_\beta(\btheta)}
\le
1+C_\beta \sum_{k\ge 1} \rho_\beta^k V(\btheta)
\le
\bigg(1+\frac{C_\beta}{1-\rho_\beta}\bigg) V(\btheta).
\qedhere
$$
\end{proof}

\subsection{Contractions}\label{app:ta-contractions}
We define for $V:\T \rightarrow [1,\infty)$ and $g:\T\to \R$ the $V$-norm
  $
\norm{g}_V \defeq \sup_{\theta\in \T} \frac{|g(\theta)|}{V(\theta)}.
$
We define for a bounded operator $A$ on a Banach space of bounded functions $f$, the operator norm $\norm{A}_\infty = \sup_f \frac{ \norm{Af}_\infty}{\norm{f}_\infty}$.

\begin{lemma}\label{lem:contractions}
Suppose Assumption \ref{a:ta}\eqref{a:ta-distance}, \ref{a:ta}\eqref{a:ta-compact} and \ref{a:ta}\eqref{a:ta-positive} hold.  The following hold:
  \begin{enumerate}[(i)]
  \item\label{lem:contractions-kernel}
    $\exists C>0$, $\exists C_{\B}^+>0$ s.t.~$\forall \beta_1 \in\B$, $\forall\beta_2 \in \B$, $\forall g:\T^2\to \R$ bounded, we have
    $$
     \norm{(P_{\beta_1} - P_{\beta_2})g}_\infty
    \le
    C \norm{g}_\infty \abs{e^{\beta_1} - e^{\beta_2} }
    \le
    C_{\B}^+ \norm{g}_\infty \abs{\beta_1 - \beta_2}.
    $$
  \item\label{lem:contractions-function} 
$\exists C_{\B}^->0$, $\exists C_{\B}>0$, s.t.~$\forall \beta_1\in \B$, $\forall\beta_2\in\B$, we have
    $$
    \norm{ \bar{H}_{\beta_1} - \bar{H}_{\beta_2} }_\infty
    \le
    C_{\B}^- \abs{e^{\beta_1}-e^{\beta_2} }
    \le
        C_{\B} \abs{ \beta_1 - \beta_2 }.
        $$
      \item
        \label{lem:contractions-measure} 
$\exists C_{\B}^->0$, $\exists C_{\B}>0$, s.t.~$\forall \beta_1\in \B$, $\forall\beta_2\in\B$, $\forall g:\T^2\to \R$ bounded, we have
    $$
    \abs{ \pi_{\beta_1}'(g) - \pi_{\beta_2}'(g) }
    \le
    C_{\B}^- \norm{g}_\infty \abs{e^{\beta_1}-e^{\beta_2} } 
    \le
        C_{\B} \norm{g}_\infty \abs{ \beta_1 - \beta_2 }.
        $$
      \end{enumerate}
  \end{lemma}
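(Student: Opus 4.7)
The plan is to base all three parts on a single Lipschitz estimate for $L_\beta$ in $e^\beta$, and then propagate it through the various quantities. Under Assumption \ref{a:ta}\eqref{a:ta-distance} the distance density is uniformly bounded by some $M_Q \defeq \sup_{\theta,t} Q_\theta(t) < \infty$, so since $L_\beta(\vartheta) = Q_\vartheta([0,e^\beta])$,
\[
|L_{\beta_1}(\vartheta) - L_{\beta_2}(\vartheta)| \le M_Q\,|e^{\beta_1} - e^{\beta_2}|
\]
uniformly in $\vartheta$. Integrating against $\pr$ gives the analogous bound $|c_{\beta_1} - c_{\beta_2}| \le M_Q |e^{\beta_1} - e^{\beta_2}|$. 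To pass from $|e^{\beta_1} - e^{\beta_2}|$ to $|\beta_1 - \beta_2|$ I would apply the mean value theorem together with compactness of $\B$ (Assumption \ref{a:ta}\eqref{a:ta-compact}), giving $|e^{\beta_1} - e^{\beta_2}| \le e^{\sup \B}|\beta_1 - \beta_2|$.

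For part \eqref{lem:contractions-kernel} I would expand
\[
P_\beta g(\btheta) = \alpha_\beta(\btheta)\,A(\btheta) + (1 - \alpha_\beta(\btheta))\,B(\btheta),
\]
where $A(\btheta) \defeq \int q(\theta',\ud \vartheta')g(\theta',\vartheta')$ and $B(\btheta) \defeq \int q(\theta,\ud \vartheta')g(\theta,\vartheta')$ are both bounded by $\norm{g}_\infty$. Subtracting the two values of $\beta$ yields $(P_{\beta_1} - P_{\beta_2})g = (\alpha_{\beta_1} - \alpha_{\beta_2})(A - B)$, and since $\alpha_\beta = \dot\alpha\, L_\beta$ with $\dot\alpha \le 1$, the key estimate above furnishes $\norm{(P_{\beta_1} - P_{\beta_2})g}_\infty \le 2 M_Q \norm{g}_\infty |e^{\beta_1} - e^{\beta_2}|$, with a constant independent of $\B$. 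The second inequality then follows from the mean value step.

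For part \eqref{lem:contractions-measure} I would write
\[
\pi_{\beta_1}'(g) - \pi_{\beta_2}'(g) = \frac{c_{\beta_2}\bigl[\pi_{\beta_1,u}'(g) - \pi_{\beta_2,u}'(g)\bigr] - \pi_{\beta_2,u}'(g)\bigl[c_{\beta_1} - c_{\beta_2}\bigr]}{c_{\beta_1}\,c_{\beta_2}}.
\]
Each bracketed term in the numerator is bounded by $C\norm{g}_\infty |e^{\beta_1} - e^{\beta_2}|$ via the uniform $L_\beta$ estimate (after integrating against $\pr$), while the denominator is bounded below by $C_{\min}^2 > 0$ by Lemma \ref{lem:constant}\eqref{lem:constant-compact}; this is exactly where the $\B$-dependent constant $C_\B^-$ appears. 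Part \eqref{lem:contractions-function} then drops out of the decomposition
\[
\bar H_{\beta_1} - \bar H_{\beta_2} = (\widehat H_{\beta_1} - \widehat H_{\beta_2}) - \pi_{\beta_1}'(\widehat H_{\beta_1} - \widehat H_{\beta_2}) - (\pi_{\beta_1}' - \pi_{\beta_2}')(\widehat H_{\beta_2}),
\]
since $\widehat H_\beta(\btheta) = \alpha^* - \dot\alpha(\theta,\theta')\,L_\beta(\theta')$ yields $\norm{\widehat H_{\beta_1} - \widehat H_{\beta_2}}_\infty \le M_Q |e^{\beta_1} - e^{\beta_2}|$ directly, and $\norm{\widehat H_{\beta_2}}_\infty \le 1$ lets part \eqref{lem:contractions-measure} handle the final term.

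No step presents any real difficulty; the only bookkeeping point is that the $\B$-dependence in the constants enters exclusively through the lower bound $c_\beta \ge C_{\min}$ required in parts \eqref{lem:contractions-function} and \eqref{lem:contractions-measure}, whereas the mean value passage $|e^{\beta_1} - e^{\beta_2}| \le e^{\sup \B}|\beta_1 - \beta_2|$ forces a $\B$-dependence into every $\abs{\beta_1 - \beta_2}$-constant, including $C_\B^+$ in part \eqref{lem:contractions-kernel}.
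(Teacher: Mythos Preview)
Your proposal is correct and follows essentially the same route as the paper: the uniform Lipschitz bound on $L_\beta$ in $e^\beta$ from Assumption \ref{a:ta}\eqref{a:ta-distance}, the mean value theorem on $\beta\mapsto e^\beta$ over compact $\B$, and the lower bound $c_\beta\ge C_{\min}$ from Lemma \ref{lem:constant}. The only organizational difference is that for part \eqref{lem:contractions-function} you first establish \eqref{lem:contractions-measure} and then invoke it to control $(\pi_{\beta_1}'-\pi_{\beta_2}')(\widehat H_{\beta_2})$, whereas the paper bounds $|h(\beta_1)-h(\beta_2)|$ directly via its own quotient decomposition and only afterwards remarks that \eqref{lem:contractions-measure} is simpler; your ordering is arguably cleaner but the content is the same.
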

\begin{proof}
  By Assumption \ref{a:ta}\eqref{a:ta-distance}, we have for all $\beta_1,\, \beta_2\in \B$,
  $$
  \abs{L_{\beta_1}(\theta) - L_{\beta_2}(\theta)}
  =
  \int_{e^{\beta_1\wedge \beta_2}}^{e^{\beta_1\vee \beta_2}} Q_\theta(\ud t)
    \le
    C \abs{e^{\beta_1} - e^{\beta_2} }.
    $$
    We obtain the first inequality for part \eqref{lem:contractions-kernel}, then, from the bound,
    \begin{align*}
    \abs{(P_{\beta_1} - P_{\beta_2})g(\btheta)}
    &=\abs{
     \big( \alpha_{\beta_1}(\btheta) - \alpha_{\beta_2}(\btheta) \big) \dot{g}(\theta')
    + \big( \alpha_{\beta_2}(\btheta) - \alpha_{\beta_1}(\btheta) \big) \dot{g}(\theta)
    }
    \\ &\le
     \dot{\alpha}(\btheta) \abs{ L_{\beta_1}(\theta') - L_{\beta_2}(\theta')}
    \int \Big( q( \theta',\ud \vartheta') \abs{g(\theta', \vartheta')} + q(\theta,\ud \vartheta') \abs{g(\theta, \vartheta')}\Big),
    \end{align*}
The second, Lipschitz bound follows by a mean value theorem argument  for the function $\beta\mapsto e^\beta$, namely
 $$
 \abs{e^{\beta_1} - e^{\beta_2}}
 \le
 \sup_{\beta\in\B} e^{\beta}\, \abs{\beta_1 - \beta_2}
 \le C_{\B}^+\abs{\beta_1 - \beta_2},
 $$
 where the last inequality follows by compactness of $\B$ (Assumption \ref{a:ta}\eqref{a:ta-compact}).
 
 We now consider part \eqref{lem:contractions-function}. We have,
  $$
  \norm{\bar{H}_{\beta_1} - \bar{H}_{\beta_2}}_\infty
  \le 
  \lVert\widehat{H}_{\beta_1} - \widehat{H}_{\beta_2}\rVert_\infty + \abs{h(\beta_1) - h(\beta_2)}.
  $$
  For the first term, by Assumption \ref{a:ta}\eqref{a:ta-distance}, as in \eqref{lem:contractions-kernel}, we have
  $$
  \norm{\widehat{H}_{\beta_1} - \widehat{H}_{\beta_2}}_\infty
  \le \sup_{\btheta} 
  \dot{\alpha}(\btheta)\int_{e^{\beta_1\wedge \beta_2}}^{e^{\beta_1\vee\beta_2}} Q_{\theta'}(\ud t)
  \le C \abs{\beta_1 - \beta_2}.
  $$
  For the other term, we have
  $$
  \abs{h(\beta_1) - h(\beta_2)}
  \le
  \frac{1}{c_{\beta_1}}\abs{\pi_{\beta_1,u}'(\alpha_{\beta_1}) - \pi_{\beta_2,u}'(\alpha_{\beta_2})}
  +
  \pi_{\beta_2,u}'( \alpha_{\beta_2}) \frac{ \abs{ c_{\beta_1} - c_{\beta_2}} }{c_{\beta_1} c_{\beta_2}}.
  $$
  By the triangle inequality, we have
  $$
  \abs{\pi_{\beta_1,u}'(\alpha_{\beta_1}) - \pi_{\beta_2,u}'(\alpha_{\beta_2})}
  \le
  \abs{\pi_{\beta_1,u}'(\alpha_{\beta_1}) - \pi_{\beta_1,u}'(\alpha_{\beta_2})}
  +
  \abs{\pi_{\beta_1,u}'(\alpha_{\beta_2}) - \pi_{\beta_2,u}'(\alpha_{\beta_2})}
  $$
  Each term above is bounded by $C\abs{e^{\beta_1} - e^{\beta_2}}$, as is $\abs{c_{\beta_1} - c_{\beta_2}}$.  Moreover, by Lemma \ref{lem:constant}\eqref{lem:constant-compact}, we have $c_\beta\ge c_{\min}>0$ for all $\beta\in\B$, and the first inequality in part \eqref{lem:contractions-function} follows.
The second inequality follows by a mean value theorem argument as before.  Proof of \eqref{lem:contractions-measure} is simpler.
\end{proof}

\subsection{Control of noise}
We state a simple standard fact used repeatedly in the proof of Lemma \ref{lem:ta-key} below, our key lemma.
\begin{lemma}\label{lem:rv-monotone}
  Suppose $(X_j)_{j\ge 1}$ are random variables with
  $X_j \ge 0$, $X_{j+1}\le X_j$, and
  $
  \lim_{j\to\infty} \E[ X_j ] = 0.
  $
  Then, almost surely,
  $
  \lim_{j\to\infty} X_j = 0.
  $
  \end{lemma}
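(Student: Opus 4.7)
My plan is to combine the pathwise monotonicity of $(X_j)$ with Fatou's lemma. First, I would argue that the almost-sure limit exists: for each $\omega$ outside the (still null) countable union of exceptional sets on which $X_{j+1}(\omega) \le X_j(\omega)$ may fail, the sequence $\{X_j(\omega)\}_{j \ge 1}$ is non-increasing and bounded below by zero, so it converges in $[0,\infty)$. Define
\[
X_\infty \defeq \liminf_{j\to\infty} X_j,
\]
which is a non-negative random variable equal almost surely to $\lim_j X_j$.

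Second, I would apply Fatou's lemma to the non-negative sequence $(X_j)$ to obtain
\[
0 \le \E[X_\infty] = \E\bigl[\liminf_{j\to\infty} X_j\bigr] \le \liminf_{j\to\infty} \E[X_j] = 0.
\]
Hence $\E[X_\infty]=0$, and since $X_\infty \ge 0$ almost surely, this forces $X_\infty = 0$ almost surely, which is exactly the claim.

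There is no real obstacle: the argument is a textbook application of Fatou's lemma to a monotone sequence, and the assumption $X_{j+1}\le X_j$ is used only to guarantee pointwise convergence of $(X_j)$ (so that $\liminf_j X_j$ coincides with $\lim_j X_j$ almost surely). An equally short alternative would be to apply the monotone convergence theorem to the non-negative non-decreasing sequence $(X_1 - X_j)$ once $\E[X_1]$ is known to be finite — which is automatic here, because $\E[X_j]$ is itself non-increasing in $j$ with limit zero, so $\E[X_j]<\infty$ for all sufficiently large $j$ and one may relabel. I would prefer the Fatou version since it requires no case distinction.
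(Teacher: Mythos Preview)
Your argument is correct. The paper itself does not prove this lemma at all; it simply introduces it as ``a simple standard fact used repeatedly'' and moves on. Your Fatou-based proof is exactly the kind of one-line justification the authors presumably had in mind, and your handling of the almost-sure monotonicity (via the countable union of null sets) is careful. There is nothing to compare approaches against, since the paper omits the proof entirely.
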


\begin{lemma}\label{lem:ta-key}
  Suppose Assumption \ref{a:ta} holds.  Then, with
  $
  \term_{j,n} \defeq \sum_{k=j}^n \gamma_k \zeta_k,
  $
  we have
  $$
  \lim_{j\to\infty} \sup_{n\ge j} \big|\term_{j,n} \big| =0,
  \qquad\text{almost surely.}
  $$
  \end{lemma}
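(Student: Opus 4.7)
The plan is to apply the standard Poisson equation decomposition for stochastic approximation noise \citep{andrieu-moulines-priouret}, exploiting the simultaneous $V$-geometric ergodicity of the proposal-augmented kernels $P_\beta$ from Lemma \ref{lem:ergodic}\eqref{lem:ergodic-simultaneous}. Writing $\mathcal{F}_k \defeq \sigma(\bTheta_0, T_0', \ldots, \bTheta_k)$, so that $\beta_k$ and $\bTheta_k$ are $\mathcal{F}_k$-measurable, I first split
\[
\zeta_{k+1} = \big[H_{\beta_k}(\bTheta_k, T_k') - \widehat{H}_{\beta_k}(\bTheta_k)\big] + \bar{H}_{\beta_k}(\bTheta_k).
\]
The first bracket is a bounded martingale difference (since $|H_\beta|\le 1$), so the partial sums $\sum_{k\le n}\gamma_{k+1}[H_{\beta_k}-\widehat{H}_{\beta_k}]$ form an $L^2$-bounded martingale by $\sum\gamma_k^2<\infty$ of Assumption \ref{a:ta}\eqref{a:ta-step}, whose tail maximum is controlled by Doob's inequality.

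For the second bracket I substitute the Poisson equation of Lemma \ref{lem:ergodic}\eqref{lem:ergodic-poisson}, $\bar{H}_{\beta_k}(\bTheta_k) = g_{\beta_k}(\bTheta_k) - P_{\beta_k}g_{\beta_k}(\bTheta_k)$, and further decompose
\begin{align*}
g_{\beta_k}(\bTheta_k) - P_{\beta_k}g_{\beta_k}(\bTheta_k)
&= \big[g_{\beta_k}(\bTheta_k) - g_{\beta_{k+1}}(\bTheta_{k+1})\big]
 + \big[g_{\beta_{k+1}}(\bTheta_{k+1}) - g_{\beta_k}(\bTheta_{k+1})\big] \\
&\quad + \big[g_{\beta_k}(\bTheta_{k+1}) - P_{\beta_k}g_{\beta_k}(\bTheta_k)\big].
\end{align*}
After multiplying by $\gamma_{k+1}$ and summing from $j$ to $n$, the first piece collapses by Abel summation into boundary terms $\gamma_{j+1}g_{\beta_j}(\bTheta_j) - \gamma_{n+1}g_{\beta_{n+1}}(\bTheta_{n+1})$ plus $\sum_{k=j+1}^n(\gamma_{k+1}-\gamma_k)g_{\beta_k}(\bTheta_k)$, whose absolute value is bounded by $C\gamma_{j+1}\sup_{k\ge j}V(\bTheta_k)$ since $(\gamma_k)$ is non-increasing and $|g_\beta|\le CV$. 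The middle piece is handled by a Lipschitz estimate $\|g_{\beta_{k+1}}-g_{\beta_k}\|_V \le C|\beta_{k+1}-\beta_k|\le C\gamma_{k+1}$, producing a contribution dominated by $C\sum_{k\ge j}\gamma_{k+1}^2 V(\bTheta_{k+1})$, which is a.s.\ absolutely summable. The third piece is again a martingale difference, since $P_{\beta_k}g_{\beta_k}(\bTheta_k) = \E[g_{\beta_k}(\bTheta_{k+1})\mid\mathcal{F}_k]$; with $|g_\beta|\le CV$ and the uniform second-moment bound $\sup_k\E[V(\bTheta_k)^2]<\infty$ (inherited from Assumption \ref{a:ta}\eqref{a:ta-variance} via $V(\btheta)\le \tfrac{1}{2}(\dot V(\theta)+\dot V(\theta'))$), it too yields an $L^2$-bounded martingale amenable to Doob.

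Combining these contributions, each produces a tail whose supremum $\sup_{m,n\ge j}|\sum_{k=m}^n\gamma_k\zeta_k^{\text{piece}}|$ is non-negative and non-increasing in $j$, with expectation tending to $0$ as $j\to\infty$; Lemma \ref{lem:rv-monotone} then upgrades this to a.s.\ convergence, and the bound $\sup_{n\ge j}|\term_{j,n}|\le \sup_{m,n\ge j}|\sum_{k=m}^n\gamma_k\zeta_k|$ gives the claim. The main obstacle I anticipate is establishing the Lipschitz bound $\|g_{\beta_1}-g_{\beta_2}\|_V\le C|\beta_1-\beta_2|$ uniformly for $\beta_1,\beta_2\in\B$, which is not recorded explicitly in Lemma \ref{lem:contractions}. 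I would derive it from the series representation $g_\beta=\sum_{k\ge 0}P_\beta^k\bar H_\beta$ together with the resolvent identity $P_{\beta_1}^k-P_{\beta_2}^k=\sum_{\ell=0}^{k-1}P_{\beta_1}^\ell(P_{\beta_1}-P_{\beta_2})P_{\beta_2}^{k-1-\ell}$, combining Lemma \ref{lem:contractions}\eqref{lem:contractions-kernel},\eqref{lem:contractions-function} with the simultaneous geometric ergodicity of Lemma \ref{lem:ergodic}\eqref{lem:ergodic-simultaneous} so that each summand is geometrically small in $k$.
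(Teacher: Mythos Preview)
Your decomposition is the same as the paper's at the top level: split $\zeta_{k+1}$ into the bounded martingale increment $H_{\beta_k}-\widehat H_{\beta_k}$ and the centred piece $\bar H_{\beta_k}$, insert the Poisson equation $g_\beta-P_\beta g_\beta=\bar H_\beta$, and telescope. The martingale pieces, the Abel-summed boundary terms, and the step-size difference term are handled in the paper exactly as you outline (their $\mathcal T^{(1)}$--$\mathcal T^{(4)}$). The divergence is entirely in your ``middle piece'' $g_{\beta_{k+1}}(\bTheta_{k+1})-g_{\beta_k}(\bTheta_{k+1})$.

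Here there is a real gap. You propose $\|g_{\beta_1}-g_{\beta_2}\|_V\le C|\beta_1-\beta_2|$ via the resolvent identity, ``so that each summand is geometrically small in $k$.'' But Lemma~\ref{lem:contractions}\eqref{lem:contractions-kernel} only gives a \emph{sup-norm} Lipschitz bound $\|(P_{\beta_1}-P_{\beta_2})g\|_\infty\le C\|g\|_\infty|\beta_1-\beta_2|$, while the geometric ergodicity of Lemma~\ref{lem:ergodic}\eqref{lem:ergodic-simultaneous} is in the $V$-norm. In the resolvent term $P_{\beta_1}^\ell(P_{\beta_1}-P_{\beta_2})P_{\beta_2}^{k-1-\ell}\bar H_{\beta_1}$, the inner factor $P_{\beta_2}^{k-1-\ell}\bar H_{\beta_1}$ is only $O(V\rho^{k-1-\ell})$ pointwise, not $O(\rho^{k-1-\ell})$ in sup-norm, so you cannot feed it into the sup-norm Lipschitz bound without first knowing something like $q\dot V\le C\dot V$ (equivalently $P_\beta V\le CV$), which is \emph{not} implied by Assumption~\ref{a:ta}. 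Working purely in sup-norm gives $|(P_{\beta_1}^k-P_{\beta_2}^k)\bar H_{\beta_1}|\le Ck|\beta_1-\beta_2|$ with no decay, and the sum over $k$ diverges.

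The paper circumvents exactly this norm mismatch by truncating the Poisson series at $m_k=\lceil|\log\gamma_k|\rceil$ and splitting $P_{\beta_k}g_{\beta_k}-P_{\beta_{k-1}}g_{\beta_{k-1}}$ into four pieces $\mathcal T^{(5)}$--$\mathcal T^{(8)}$: the finite part ($i\le m_k$) is bounded in sup-norm, picking up factors of $m_k$ and $m_k^2$ but no $V$; the tail ($i>m_k$) uses $V$-geometric ergodicity directly and the moment bound of Assumption~\ref{a:ta}\eqref{a:ta-variance}. This is precisely why the step-size condition in Assumption~\ref{a:ta}\eqref{a:ta-step} carries the logarithmic factors $\sum\gamma_k^2(1+|\log\gamma_k|+|\log\gamma_k|^2)<\infty$, which your route would not need if the Lipschitz bound on $g_\beta$ held. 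So either add the truncation device (following the paper, which in turn follows \citealp[Proposition~5.2]{andrieu-moulines-priouret}), or supply an extra hypothesis such as $P_\beta V\le CV$ uniformly in $\beta\in\B$ to make your direct Lipschitz estimate go through.
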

\begin{proof}
Similar to \citep[][Proof of Prop. 5.2]{andrieu-moulines-priouret}, we write $\term_{j,n} \defeq \sum_{i=1}^8 \term_{j,n}^{(j)}$, where
  $$
  \widehat{H}_{\beta_{k-1}}(\bTheta_{k-1})
  =
\E[ H_{\beta_{k-1}} (\bTheta_{k-1}, T')| \mathcal{F}_{k-1}'],
  $$
with $\mathcal{F}_{k-1}'= \sigma(\beta_{k-1}, \Theta_{k-1},\Theta_{k-1}')$  representing the information obtained through running 
Algorithm \ref{alg:ta-proofs} up to and including iteration $k-2$ and then also generating $\Theta_{k-1}'$, and 
\allowdisplaybreaks
\begin{align*}
  \term_{j,n}^{(1)} &\defeq
  \sum_{k=j}^n \gamma_k\Big( H_{\beta_{k-1}}(\bTheta_{k-1},T_{k-1}') - \widehat{H}_{\beta_{k-1}}(\bTheta_{k-1}) \Big),
  \\
  \term_{j,n}^{(2)} &\defeq
\sum_{k=j}^n \gamma_k\Big( g_{\beta_{k-1}}(\bTheta_{k-1}) - P_{\beta_{k-1}} g_{\beta_{k-1}}(\bTheta_{k-2})   \Big),
\\
\term_{j,n}^{(3)} &\defeq
\gamma_{j-1}P_{j-1} g_{\beta_{j-1}}(\bTheta_{j-2}) - \gamma_n P_{\beta_{n}} g_{\beta_n}(\bTheta_{n-1}),
\\
\term_{j,n}^{(4)} &\defeq
\sum_{k=j}^{n} \Big( \gamma_k - \gamma_{k-1}\Big) P_{\beta_{k-1}} g_{\beta_{k-1}} (\bTheta_{k-2}),
  \label{eq:term-cuatro}
  \\
  \term_{j,n}^{(5)}&\defeq
  \sum_{k=j}^n \gamma_k \sum_{i\ge m_k+1} P_{\beta_{k}}^i \bar{H}_{\beta_k}(\bTheta_{k-1}),
  \\
  \term_{j,n}^{(6)}&\defeq
  -\sum_{k=j}^n \gamma_k \sum_{i\ge m_k+1} P_{\beta_{k-1}}^i \bar{H}_{\beta_{k-1}}(\bTheta_{k-1}),
  \\
  \term_{j,n}^{(7)}&\defeq
  \sum_{k=j}^n \gamma_k \sum_{i=1}^{m_k} \Big( P_{\beta_{k}}^i - P_{\beta_{k-1}}^i\Big) \bar{H}_{\beta_k}(\bTheta_{k-1}),
  \\
  \term_{j,n}^{(8)} &\defeq
  \sum_{k=j}^n \gamma_{k} \sum_{i=1}^{m_k} P_{\beta_{k-1}}^i\Big( \bar{H}_{\beta_{k}} - \bar{H}_{\beta_{k-1}}\Big) (\bTheta_{k-1}).
\end{align*}
Here, $g_\beta$ is the Poisson solution defined in Lemma \ref{lem:ergodic}\eqref{lem:ergodic-poisson},
and $m_k \defeq \lceil\abs{\log \gamma_k} \rceil$.
We remind that $\bar{H}_\beta \defeq \widehat{H}_\beta - h(\beta)$ from Section \ref{app:ta-sa}.

We now show
$
\lim_{j\to\infty} \sup_{n\ge j} \big|\term_{j,n}^{(i)} \big| =0
$
for each of the terms $i\in\{1{:}8\}$ individually, which implies the result of the lemma.

\noindent (1) Since for all $n>j$,
$$
\E[ \term_{j,n}^{(1)} - \term_{j,n-1}^{(1)}| \mathcal{F}_{n-1}']
=
0,
$$
we have that $(\term_{j,n}^{(1)})_{n\ge j}$ is a $\mathcal{F}_n'$-martingale for each $j\ge 1$.  By the Burkholder-Davis-Gundy inequality for martingales \citep[cf.][]{burkholder-davis-gundy}, we have
$$
\E[ \sup_{n\ge j} |\term_{j,n}^{(1)}|^2 ]
\le
C \E \Big[ \sum_{k=j}^\infty \gamma_k^2 \big( H_{\beta_{k-1}}(\bTheta_{k-1}, T_{k-1}') - \widehat{H}_{\beta_{k-1}}(\bTheta_{k-1})\big)^2 \Big]\le C \sum_{k=j}^\infty \gamma_k^2,
$$
where in the last inequality we have noted that $|H_\beta - \widehat{H}_\beta|\le 1$.  Since $\sum_{k\ge 1} \gamma_k^2 <\infty$, we get that
$$
\lim_{j\to\infty}
\E[ \sup_{n\ge j} |\term_{j,n}^{(1)}|^2 ]
=0.
$$
Hence, the result follows by Lemma \ref{lem:rv-monotone}.

\noindent(2) For $j\ge 2$, we have for $n>j$,
$$
\E[ \term_{j,n}^{(2)} - \term_{j,n-1}^{(2)}| \mathcal{F}_{n-2}'] =0,
$$
so that $(\term_{j,n}^{(2)})_{n\ge j}$ is a $\mathcal{F}_{n-1}'$-martingale, for $j\ge 2$.  By the Burkholder-Davis-Gundy inequality again,
$$
\E[ \sup_{n\ge j} \abs{\term_{j,n}^{(2)}}^2]
\le C \E\Big[ \sum_{k=j}^\infty \gamma_k^2\big( g_{\beta_{k-1}}(\bTheta_{k-1}) - P_{\beta_{k-1}} g_{\beta_{k-1}}(\bTheta_{k-2})\big)^2\Big].
$$
We then use Lemma \ref{lem:ergodic}\eqref{lem:ergodic-poisson} and $\norm{P_\beta}_\infty\le 1$, to get, after combining terms,
$$
\E[ \sup_{n\ge j} \abs{\term_{j,n}^{(2)}}^2]
\le
C \sum_{k=j-1}^\infty \gamma_k^2  \E\Big[V(\bTheta_{k-1})^2\Big]
\le
C \sum_{k=j-1}^\infty \gamma_k^2,
$$
where we have used Assumption \ref{a:ta}\eqref{a:ta-variance} in the last inequality.  We then conclude by Lemma \ref{lem:rv-monotone} as before.

\noindent (3)
By Lemma \ref{lem:ergodic}\eqref{lem:ergodic-poisson}, the triangle inequality, $\norm{P_\beta}_\infty\le 1$, and the dominated convergence theorem, we obtain
$$
\E[ \sup_{n\ge j} \abs{ \term_{j,n}^{(3)} }
  \le
  C \gamma_{j-1} \E[V(\bTheta_{j-2})] + C \sup_{n\ge j} \gamma_n \E[V(\bTheta_{n-1})].
$$
  We then apply Assumption \ref{a:ta}\eqref{a:ta-variance} and Jensen's inequality, and use that $\gamma_k$ go to zero, since $\sum \gamma_k^2 <\infty$, to get that
  $$
  \lim_{j\to\infty} \E[ \sup_{n\ge j} \abs{ \term_{j,n}^{(3)} }]
    \le
    C \Big(\lim_{j\to\infty} \gamma_{j-1} + \sup_{n\ge j} \gamma_n\Big) 
    =
    0.
  $$
We now may conclude by Lemma \ref{lem:rv-monotone}.
  
\noindent (4)
By Lemma \ref{lem:ergodic}\eqref{lem:ergodic-poisson} and $\gamma_{k}\le \gamma_{k-1}$, we have for $j\ge 2$, 
$$
\E[ \sup_{n\ge j} \abs{\term_{j,n}^{(4)}} ]
\le
C \sup_{n\ge j} \sum_{k=j}^n (\gamma_{k-1} - \gamma_{k}) \E[ V(\bTheta_{k-2})]
\le
C \sup_{n\ge j} \sum_{k=j}^n (\gamma_{k-1} - \gamma_{k})
$$
where we have used lastly Assumption \ref{a:ta}\eqref{a:ta-variance} and Jensen's inequality. Since this is a telescoping sum, we get
$$
\E[ \sup_{n\ge j} \abs{\term_{j,n}^{(4)}} ]
\le
C \sup_{n\ge j}  (\gamma_{j-1} - \gamma_{n}) 
\le
C \gamma_{j-1} 
$$
We then conclude by Lemma \ref{lem:rv-monotone}, since $\gamma_j\to 0$.  

\noindent (5)
By Lemma \ref{lem:ergodic}\eqref{lem:ergodic-simultaneous},
$
\abs{P_{\beta}^i \bar{H}_\beta(\btheta)} \le C\rho^i V(\btheta), 
$
where $C,\,\rho$ do not depend on $\beta\in \B$.  Hence, 
$$
\E[ \abs{\term_{j,n}^{(5)} } ]
\le
C \sum_{k=j}^n \gamma_k \sum_{i \ge m_k +1} \rho^i \E[V(\bTheta_{k-1}) ]
\le
C \sum_{k=j}^n \gamma_k \rho^{m_k},
$$
where we have used lastly Assumption \ref{a:ta}\eqref{a:ta-variance} and Jensen's inequality.  Since $m_k$ was defined to be of order $\abs{\log \gamma_k}$, we have 
$$
\E[ \abs{\term_{j,n}^{(5)} } ]
\le
C \sum_{k=j}^\infty \gamma_k^2 <\infty
$$
By the dominated convergence theorem, we then have 
$$
\E[ \sup_{n\ge j}\abs{\term_{j,n}^{(5)} } ]
\le
C \sum_{k=j}^\infty \gamma_k^2.
$$
Taking the limit $j\to\infty$, we can then conclude by using Lemma \ref{lem:rv-monotone}.

\noindent (6) The proof is essentially the same as for (5).  

\noindent (7)  We write for $i\ge 1$,
\begin{equation*}
P_{\beta_k}^i - P_{\beta_{k-1}}^i
=
\sum_{l=0}^{i-1} P_{\beta_k}^{i-l -1} \big(P_{\beta_k} - P_{\beta_{k-1}}\big) P_{\beta_{k-1}}^{l}.
\end{equation*}
Since $\norm{P_\beta^i}_\infty \le 1$ for all $i\ge 0$, and $\abs{\bar{H}_\beta} \le 1$, by Lemma \ref{lem:contractions}\eqref{lem:contractions-kernel}, we have
$$
\norm{ (P_{\beta_k}^i - P_{\beta_{k-1}}^i) \bar{H}_{\beta_k} }
\le
C \sum_{l=0}^{i-1}
\norm{ P_{\beta_k}^{i-l -1} }_\infty \abs{\beta_{k} - \beta_{k-1}} \norm{P_{\beta_{k-1}}^{l} \bar{H}_{\beta_k}}_\infty
\le
C  \abs{\beta_{k} - \beta_{k-1}} i.
$$
Since $\abs{\beta_{k} - \beta_{k-1}}\le \gamma_k$ from the adaptation
step in Algorithm \ref{alg:ta-proofs}, we have
$$
\abs{\term_{j,n}^{(7)}}
\le
C \sum_{k=j}^n \gamma_k \sum_{i=1}^{m_k} i \gamma_k
\le 
C \sum_{k=j}^\infty \gamma_k^2 m_k( 1+m_k)
<
\infty.
$$
We then take $\sup_{n\ge j}$ on the left, take the expectation, and conclude by Lemma \ref{lem:rv-monotone}.   

\noindent (8)
Since $\norm{P_\beta^i}_\infty \le 1$ and by Lemma \ref{lem:contractions}\eqref{lem:contractions-function}, we have that
$$
\norm{P_{\beta_{k-1}}^i( \bar{H}_{\beta_k} - \bar{H}_{\beta_{k-1}}) }_\infty
\le
\norm{P_{\beta_{k-1}}^i}_\infty \norm{ \bar{H}_{\beta_k} - \bar{H}_{\beta_{k-1}} }_\infty
\le
C \abs{\beta_k - \beta_{k-1}}
$$
Since $\abs{\beta_k - \beta_{k-1}} \le \gamma_k$, we have
$$
\E[ \sup_{n\ge j} \term_{j,n}^{(8)} ]
\le
C \sum_{k=j}^\infty \gamma_k^2 m_k <\infty.
$$
We then conclude by Lemma \ref{lem:rv-monotone}.
\end{proof}

\subsection{Proofs of convergence theorems}

\begin{proof}[of Theorem \ref{thm:ta-geometric}]
 We define our Lyapunov function $w:\R \rightarrow [0,\infty)$ to be the continuously differentiable function $w(\beta) \defeq \frac{1}{2}|e^\beta - e^{\beta^*}|^2$.
   We also have that $h(\beta)\defeq \pi_\beta'(\widehat{H}_\beta)$ is continuous, which follows from Lemma \ref{lem:contractions}\eqref{lem:contractions-measure}.
   One can then check that Assumption \ref{a:ta} and Lemma \ref{lem:ta-key} imply that the assumptions of \citep[Theorem 2.3]{andrieu-moulines-priouret} hold.  The latter result implies
   $\lim \abs{\beta_k - \beta^*} \rightarrow 0$, for some $\beta^*\in\B$ satisfying $\pi_{\beta^*}'(\alpha_{\beta^*})=\alpha^*$, as desired.
  \end{proof}

\begin{lemma}\label{lem:uniform}
Suppose Assumption \mref{a:uniform} holds.  Then both $(\dot{P}_\beta)_{\beta\in\B}$ and $(P_\beta)_{\beta\in\B}$ are simultaneously $1$-geometrically ergodic (i.e.~uniformly ergodic). 
  \end{lemma}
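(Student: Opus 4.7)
The plan is to establish simultaneous uniform ergodicity of the marginal chains $(\dot{P}_\beta)_{\beta\in\B}$ via a Doeblin minorisation whose constants are independent of $\beta$, and then transfer this to the proposal-augmented chains $(P_\beta)_{\beta\in\B}$ by applying Lemma \ref{lem:proposal}\eqref{lem:proposal-ergodic} with $\dot V\equiv 1$.

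First, I would derive a uniform pointwise lower bound on the density of the accepted move. Writing $\pr_{\max}\defeq \sup_\theta \pr(\theta)<\infty$ and $q_{\min}\defeq \inf_{\theta,\vartheta}q(\theta,\vartheta)>0$, both finite and positive by Assumption \ref{a:uniform}, and using the elementary identity $x\min\{1,y/x\}=\min\{x,y\}$, I obtain
$$
q(\theta,\vartheta)\alpha_\beta(\theta,\vartheta) = \min\bigg\{q(\theta,\vartheta),\frac{\pr(\vartheta)q(\vartheta,\theta)}{\pr(\theta)}\bigg\} L_\beta(\vartheta) \ge \frac{q_{\min}\pr(\vartheta) L_a(\vartheta)}{\pr_{\max}},
$$
valid for every $\theta$ with $\pr(\theta)>0$ and every $\beta\in\B$. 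The final inequality rests on $q(\vartheta,\theta)\ge q_{\min}$, $\pr(\theta)\le \pr_{\max}$, and on the monotonicity of $\beta\mapsto L_\beta$ combined with $\beta\ge \log a$. Integrating over a measurable set $A$ and recalling $c_a=\int\pr(\vartheta)L_a(\vartheta)\,\ud\vartheta$ yields the Doeblin-type minorisation
$$
\dot{P}_\beta(\theta,A) \ge \epsilon_0\,\pi_a(A),\qquad \epsilon_0\defeq \frac{q_{\min}c_a}{\pr_{\max}}>0,
$$
where positivity of $\epsilon_0$ is exactly the content of $c_a>0$ from Assumption \ref{a:uniform}.

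From this uniform minorisation, a standard coupling argument delivers $\|\dot{P}_\beta^k(\theta,\uarg)-\pi_\beta\|_{\mathrm{tv}}\le (1-\epsilon_0)^k$ uniformly in $\theta$ and in $\beta\in\B$, which is precisely simultaneous $\dot V\equiv 1$-geometric ergodicity of $(\dot{P}_\beta)$ with constants $(\dot C,\dot\rho)=(1,1-\epsilon_0)$. Invoking Lemma \ref{lem:proposal}\eqref{lem:proposal-ergodic} with $\dot V\equiv 1$ then produces $V(\bx)\equiv 1$ and transfers the same qualitative property to $(P_\beta)$ with constants $\big(2/(1-\epsilon_0),\,1-\epsilon_0\big)$, still independent of $\beta$. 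The main obstacle is the $\beta$-uniform lower bound for the one-step kernel: one has to peel off the $\beta$-dependent factor $L_\beta$ simultaneously over the compact $\B$, and to identify a common reference probability measure; the monotonicity of $L_\beta$ in $\beta$ together with $c_a>0$ handles both issues cleanly, while the boundedness of $\pr$ absorbs the ratio $\pr(\vartheta)/\pr(\theta)$ into a $\theta$-free expression proportional to $\pi_a$.
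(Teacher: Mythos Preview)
Your argument is correct and follows essentially the same route as the paper: a one-step Doeblin minorisation for $\dot P_\beta$ with $\beta$-independent constant, followed by transfer to $P_\beta$ via Lemma~\ref{lem:proposal}\eqref{lem:proposal-ergodic}. The only noteworthy variation is in the minorising measure: the paper keeps $L_\beta$ and minorises by $\delta_{\dot P}\,\pi_\beta$, using Lemma~\ref{lem:constant}\eqref{lem:constant-compact} to bound $c_\beta\ge C_{\min}$ uniformly, whereas you replace $L_\beta$ by $L_{\log a}$ via monotonicity and minorise by a single fixed measure $\pi_{\log a}$, needing only $c_{\log a}>0$. Both choices give the same conclusion; your version avoids appealing to the uniform lower bound on $c_\beta$ at the cost of a slightly smaller minorisation constant. (Mind the notation: since $L_\beta$ is indexed by $\beta=\log\delta$, what you call $L_a$ and $\pi_a$ should strictly be $L_{\log a}$ and $\pi_{\log a}$.)
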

\begin{proof}
  We have $\pr(\theta) \le C_{\pr}$ some $C_{\pr}>0$, and also $0<\delta_q\le q(\theta,\vartheta)$, for all $\theta,\,\vartheta\in\T$.  Hence, for $A\subset \T$,
$$
  \dot{P}_\beta(\theta, A)
  \ge
  \int \delta_q \min\bigg\{ 1, \frac{\pr(\vartheta)}{\pr(\theta)} \bigg\} L_\beta(\vartheta) \charfun{\vartheta\in A}
  \ge
  \int \delta_q \frac{\pr(\vartheta)}{C_{\pr}} L_\beta(\vartheta) \charfun{\vartheta\in A}
  $$
  By Lemma \ref{lem:constant}\eqref{lem:constant-compact}, it holds $c_\beta\ge C_{\min}$ for some $C_{\min}>0$ for all $\beta\in \B$.  Therefore,
  $$
\dot{P}_\beta(\theta,A)\ge  \delta \pi_\beta(A),
  $$
  where $\delta_{\dot{P}} \defeq  \delta_q C_{\min}/C_{\pr} >0$ is independent of $\beta$.
As in Nummelin's split chain construction \citep[cf{.}][]{meyn-tweedie}, we can then define the Markov kernel 
$
R_\beta(\theta, A)
\defeq
(1-\delta_{\dot{P}})^{-1}\big( \dot{P}_\beta(\theta,A) -
\delta_{\dot{P}} \pi_\beta(A)\big)
$
with $\pi_\beta R_\beta = \pi_\beta$.  Set $\Pi_\beta(\theta,A) \defeq \pi_\beta(A)$.  
For any $f\le 1$, $\beta \in \B$, and $k\ge 1$, we have
\begin{align*}
  \norm{\dot{P}_\beta^k f - \pi_\beta(f)}_\infty
  &=
  (1-\delta_{\dot{P}})\norm{(R_\beta - \Pi_\beta) \dot{P}_\beta^{k-1} f}_\infty
  =
  (1-\delta_{\dot{P}}) \norm{ R_\beta \dot{P}_\beta^{k-1}\big(f - \pi_\beta(f)\big)}_\infty
  \\ &\le
  (1-\delta_{\dot{P}}) \norm{\dot{P}_\beta^{k-1}\big(f - \pi_\beta(f)\big)}_\infty
  =
  (1-\delta_{\dot{P}}) \norm{\dot{P}_\beta^{k-1}f - \pi_\beta(f)}_\infty
  \\& \le
  \ldots
  \le
  (1-\delta_{\dot{P}})^k \norm{ f- \pi_\beta(f)}_\infty
  \le
  2 (1-\delta_{\dot{P}})^k\norm{f}_\infty,
\end{align*}
where we have used $\norm{R_\beta}_\infty\le 1$ in the first inequality.
Hence, $(\dot{P}_\beta)_{\beta\in\B}$ are simultaneously $1$-geometrically ergodic, and thus so are $(P_\beta)_{\beta\in\B}$ by Lemma \ref{lem:ergodic}\eqref{lem:ergodic-simultaneous}.  
\end{proof}

\begin{proof}[of Theorem \mref{thm:ta-uniform}] 
  Since $(\dot{P}_\beta)_{\beta\in\B}$ are simultaneously
  $1$-geometric ergodic by Lemma \ref{lem:uniform}, it is direct to see that Assumption \mref{a:uniform} implies Assumption \ref{a:ta}.
    We conclude by Theorem \ref{thm:ta-geometric}.
\end{proof}

\section{Simultaneous tolerance and covariance adaptation}
\label{app:ta-am}

\begin{algo}[TA-AM($n_b, \alpha^*$)]
  \label{alg:ta-am}
Suppose $\Theta_0\in\mathsf{T}\subset \R^{n_\theta}$ is a starting value 
with $\pr(\Theta_0)>0$ and
$\Gamma_0=\mathbf{1}_{n_\theta\times n_\theta}$ is the identity matrix.
\begin{enumerate}[1.]
\item Initialise $\simulationepsilon \defeq T_0$ where $T_0 \sim Q_{\Theta_0}(\uarg)$ and $T_0>0$.  Set $\mu_0 \defeq \Theta_0$.
\item
For $k=0,\ldots, n_b-1$, iterate:
\begin{enumerate}[(i)]
\item Draw $\Theta_k' \sim N(\Theta_k, (2.38^2/n_\theta) \Gamma_k)$
\item Draw $T_k'\sim Q_{\Theta_k'}(\uarg)$.
\item Accept, by setting 
  $(\Theta_{k+1}, T_{k+1}) \leftarrow (\Theta_k', T_k')$, with probability 
  \[
  \alpha_{\simulationepsilon_k}(\Theta_k,T_k;\Theta_k', T_k') \defeq
  \min \bigg\{1, \frac{\pr(\Theta_k') 
      \phi(T_k'/\simulationepsilon_k)}{\pr(\Theta_k)
      \phi(T_k/\simulationepsilon_k)}\bigg\}.
  \]
  Otherwise reject, by setting $(\Theta_{k+1},T_{k+1}) \leftarrow (\Theta_k, T_k)$.
\item
  $
\log \simulationepsilon_{k+1} \leftarrow \log \simulationepsilon_k +
\gamma_{k+1} \big(\alpha^* -   \alpha_{\simulationepsilon_k}'(\Theta_k,\Theta_k', T_k') \big).
$
\item
  $
\mu_{k+1} \leftarrow \mu_k + \gamma_{k+1}\big(\Theta_{k+1} - \mu_k\big).
$
\item 
$
\Gamma_{k+1} \leftarrow \Gamma_k + \gamma_{k+1}\big( (\Theta_{k+1} -\mu_{k})(\Theta_{k+1} - \mu_k)^\transpose - \Gamma_k \big).
$
\end{enumerate}
\item Output $(\Theta_{n_b}, \simulationepsilon_{n_b})$.
\end{enumerate}
\end{algo}


\section{Details of extensions in Section \mref{sec:discussion}}
\label{app:extensions} 

In case of non-simple cut-off, the rejected
samples may be `recycled' the rejected samples in
the estimator \citep{ceperley-chester-kalos}. 
This may improve the accuracy (but can also reduce
accuracy in certain pathological cases; see \cite{delmas-jourdain}). 
The `waste recycling' estimator is 
\[
    E^\mathrm{WR}_{\simulationepsilon,\epsilon}(f) 
    \defeq \sum_{k=1}^n W_k^{(\simulationepsilon,\epsilon)}\big[
    \alpha_{\simulationepsilon}(\Theta_k,Y_k;\tilde{\Theta}_{k+1},\tilde{Y}_{k+1})f(\tilde{\Theta}_{k+1})
    + [1-\alpha_{\simulationepsilon}(\Theta_k, Y_k; \tilde{\Theta}_{k+1},\tilde{Y}_{k+1})]
    f(\Theta_k)\big].
\]
When $E_{\simulationepsilon,\epsilon}(f)$ is consistent under Theorem
\mref{thm:est-valid}, this is also a 
consistent estimator. Namely, as in the proof of Theorem
\mref{thm:est-valid}, we find that
$(\Theta_k,Y_k,\tilde{\Theta}_{k+1},Y_{k+1})_{k\ge 1}$ is a Harris
recurrent Markov chain with invariant distribution \[
\hat{\pi}_{\simulationepsilon}(\theta,y,\tilde{\theta},\tilde{y}) = 
\tilde{\pi}_{\simulationepsilon} (\theta, y) \tilde{q}(\theta, y;
\tilde{\theta},\tilde{y}), \] and
$\hat{\pi}_{\epsilon}(\theta,y,\tilde{\theta},\tilde{y})/
\hat{\pi}_{\simulationepsilon}(\theta,y,\tilde{\theta},\tilde{y}) = c_\epsilon
w_{\simulationepsilon,\epsilon}(y)$, where $\tilde{q}(\theta, y; \theta',y') =
q(\theta, \theta')g(y'\mid \theta')$. Therefore,
$E^\mathrm{WR}_{\simulationepsilon,\epsilon}(f)$ is a strongly consistent
estimator of \[
\E_{\hat{\pi}_{\epsilon}}\big[\alpha_{\simulationepsilon}(\Theta,Y;
\tilde{\Theta},\tilde{Y})f(\tilde{\Theta}) + [1-
\alpha_{\simulationepsilon}(\Theta,Y; \tilde{\Theta},\tilde{Y})]f(\Theta)\big]
= \E_{\pi_\epsilon}[f(\Theta)]. \] See
\citep[][]{rudolf-sprungk18,schuster-klebanov} for alternative waste
recycling estimators based on importance sampling analogues.  

A refined estimator may be formed as 
\[
    \hat{E}_{\simulationepsilon,\epsilon}(f) 
    = \textstyle \sum_{k=1}^n \sum_{j=0}^{m} \hat{U}_{k,j}^{(\simulationepsilon,\epsilon)}
    f(\Theta_k)
    \big/
    \sum_{\ell=1}^n \sum_{i=0}^{m} \hat{U}_{\ell,i}^{(\simulationepsilon,\epsilon)},
\]
where 
$\hat{U}_{k,0}^{(\simulationepsilon,\epsilon)} \defeq U_k^{(\simulationepsilon,\epsilon)}$
and $\hat{U}_{k,j}^{(\simulationepsilon,\epsilon)} \defeq
    \hat{N}_k\phi(\hat{T}_{k,j}/\epsilon)\big/\phi(T_k/\simulationepsilon)$,
for $j\ge 1$, and 
where $\hat{N}_k$ is the number of independent random variables
$\hat{Z}_1,\hat{Z}_2,\ldots\sim g(\uarg\mid \Theta_k)$ generated before
observing $\phi(\hat{T}_{k,\hat{N}_k}/\simulationepsilon)>0$.
The variables $\hat{T}_{k,j}\defeq d(\hat{Z}_j,y^*)$, and
$\hat{T}_k \defeq d(\hat{Y}_k,y^*)$ with independent
$\hat{Y}_k\sim g(\uarg\mid \Theta_k)$. 
This ensures that 
\[
    \E[\hat{N}_k\phi(\hat{T}_{k,j}/\epsilon)\mid \Theta_k=\theta, Y_k=y]
      = \frac{L_\epsilon(\theta)}{\P_{g(\uarg\mid
          \theta)}\big(\phi\big(d(Y,y^*)/\simulationepsilon\big)>0\big)},
      \]
which is sufficient to ensure that 
$\xi_{k,j}(f) \defeq \hat{U}_{k,j}^{(\simulationepsilon,\epsilon)} f(\Theta_k)$ is 
a proper weighting scheme from $\tilde{\pi}_{\simulationepsilon}$ to 
$\pi_\epsilon$; see 
\citep[Proposition 17(ii)]{vihola-helske-franks}, and
consequently the average
$\xi_k(f) \defeq (m+1)^{-1}\sum_{j=0}^m \xi_{k,j}(f)$ 
is a proper weighting.


\section{Supplementary results}
\label{app:results} 

\begin{figure}[H] 
\includegraphics[height=5cm]{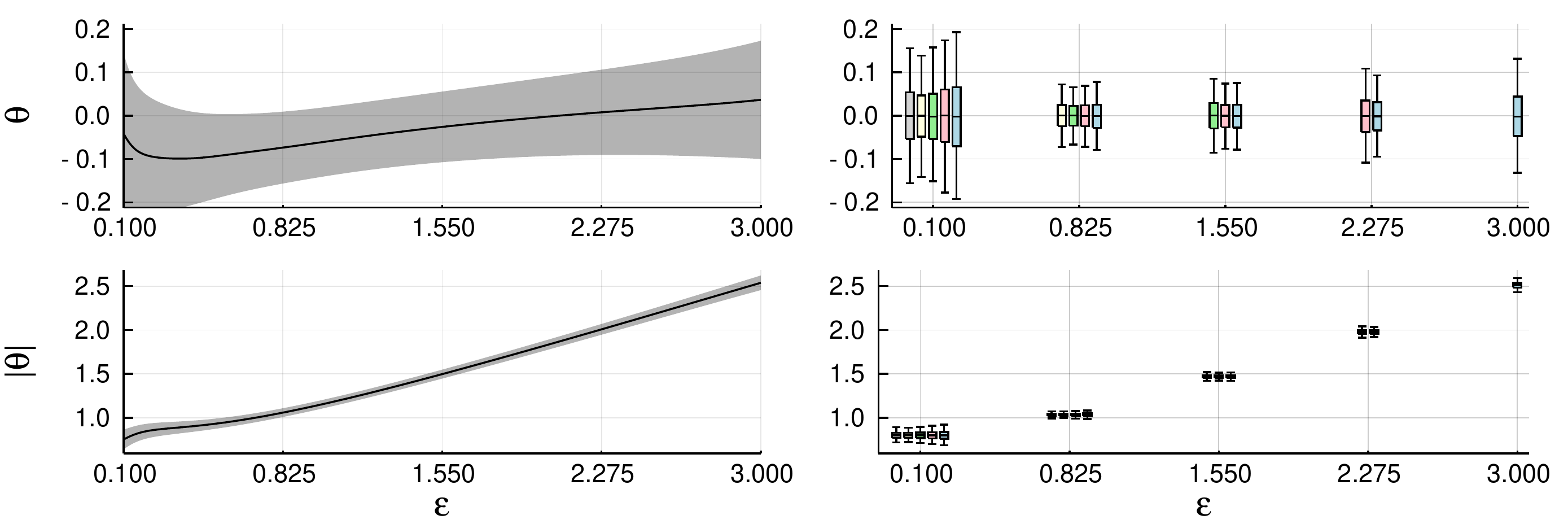}
\caption{Gaussian model with Gaussian cut-off.
  Estimates from single run of \abcmcmc($3$) (left) and
  estimates from 10,000 replications of
  \abcmcmc($\simulationepsilon$) for 
  $\simulationepsilon\in\{0.1, 0.825, 1.55, 2.275,
    3\}$ indicated by colours.} 
\label{fig:gauss-gauss}
\end{figure} 

\begin{figure}[H] 
\includegraphics[height=3.1875cm]{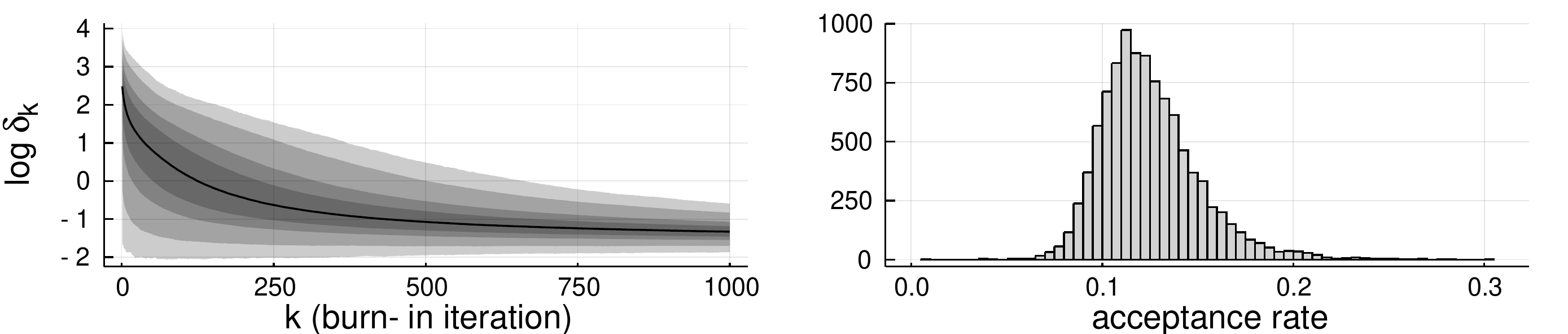}
\caption{Progress of tolerance adaptation (left) and histogram of acceptance
  rates (right) in the Gaussian model experiment with Gaussian
  cut-off.} 
\label{fig:NormalAdaptComparison-GaussCutoff}
\end{figure}

\begin{table}[H]
    \caption{Root mean square errors $(\times 10^{-2})$ from
  \abcmcmc($\simulationepsilon$) for tolerances $\epsilon$
  in the Gaussian mode with $\phi_{\mathrm{simple}}$.}
    \label{tab:gauss-rmse} 
    \small
    \begin{center}
    \begin{tabular}{lcccccccccccc}
        \toprule
        \small
        &&\multicolumn{5}{c}{$f(x)=x$}
        &\multicolumn{5}{c}{$f(x)=|x|$}
        & \raisebox{-1ex}{Acc.}\\
        \cmidrule(lr){3-7}
        \cmidrule(lr){8-12}
        Cut-off&\raisebox{-1.5pt}{$\simulationepsilon$} 
        $\backslash$
        \raisebox{3pt}{$\epsilon$}
        &0.10 & 0.82 & 1.55 & 2.28 & 3.00  
        &0.10 & 0.82 & 1.55 & 2.28 & 3.00  & rate\\
        \midrule
        \multirow{5}{*}{$\phi_{\mathrm{simple}}$}
& 0.1 & 9.68 &      &      &      &      & 5.54 &      &      &      &
   & 0.03\\
   & 0.82 & 8.99 & 3.81 &      &      &      & 5.38 & 2.14 &      &
&      & 0.22\\
& 1.55 & 9.21 & 3.66 & 3.59 &      &      & 5.5 & 2.17 & 1.96 &      &
   & 0.33\\
   & 2.28 & 9.67 & 3.86 & 3.6 & 3.97 &      & 5.85 & 2.28 & 2.02 &
   2.08 &      & 0.4\\
   & 3.0 & 10.36 & 4.03 & 3.71 & 3.98 & 4.51 & 6.21 & 2.42 & 2.12 &
   2.16 & 2.26 & 0.43\\
 \midrule
        \multirow{5}{*}{$\phi_{\mathrm{Gauss}}$} 
& 0.1 & 7.97 &      &      &      &      & 4.47 &      &      &      &
   & 0.05\\
   & 0.82 & 7.12 & 3.67 &      &      &      & 4.22 & 2.08 &      &
&      & 0.29\\
& 1.55 & 7.82 & 3.39 & 4.35 &      &      & 4.68 & 1.99 & 2.52 &
&      & 0.38\\
& 2.28 & 8.94 & 3.59 & 3.81 & 5.52 &      & 5.26 & 2.2 & 2.29 & 3.29 &
   & 0.41\\
   & 3.0 & 9.93 & 4.01 & 3.97 & 4.81 & 6.76 & 5.95 & 2.44 & 2.44 &
   2.92 & 4.1 & 0.42\\
\bottomrule
\end{tabular}
\end{center}
\end{table}


\begin{table}[H]
    \caption{Frequencies of the 95\% confidence intervals for the adaptive
      algorithm in the Gaussian model, for
      tolerance $\epsilon=0.1$.} 
      \small
\begin{center}
\begin{tabular}{lccccccccccccc} 
\toprule
& \multicolumn{6}{c}{$\phi_{\mathrm{simple}}$} 
& \multicolumn{6}{c}{$\phi_{\mathrm{Gauss}}$}  \\
    \cmidrule(lr){2-7} \cmidrule(lr){8-13} 
    & \multicolumn{5}{c}{Fixed tolerance} & Adapt  
    & \multicolumn{5}{c}{Fixed tolerance} & Adapt  \\
    \cmidrule(lr){2-6} \cmidrule(lr){7-7}
    \cmidrule(lr){8-12} \cmidrule(lr){13-13}
$\simulationepsilon$ & 0.1 & 0.82 & 1.55 & 2.28 & 3.0 & 0.64 & 0.1 & 0.82 & 1.55 & 2.28 &
3.0 & 0.28\\
\midrule
$x$& 0.93 & 0.97 & 0.97 & 0.98 & 0.98 & 0.96 & 0.93 & 0.94 & 0.94 & 0.95
& 0.95 & 0.93\\ 
$|x|$& 0.92 & 0.95 & 0.96 & 0.96 & 0.96 & 0.96 & 0.93 & 0.92 & 0.94 & 0.95
& 0.95 & 0.92\\
\bottomrule
\end{tabular}
\end{center}
\end{table}

\begin{table}[H]
    \setlength{\tabcolsep}{0.6ex}
    \caption{Root mean square errors and acceptance rates in the
      Lotka-Volterra experiment.}
    \label{tab:lotkavolterra-covadapt-rmse} 
    \small
    \begin{center}
    \begin{tabular}{llcccccccccccccccc}
        \toprule
        \small
        &
        &\multicolumn{5}{c}{$f(\theta)=\theta_1$}
        &\multicolumn{5}{c}{$f(\theta)=\theta_2$}
        &\multicolumn{5}{c}{$f(\theta)=\theta_3$}
        & \raisebox{-1ex}{Acc.}
        \\
        \cmidrule(lr){3-7}
        \cmidrule(lr){8-12}
        \cmidrule(lr){13-17}
& 
\raisebox{-1.5pt}{$\simulationepsilon$} 
        $\backslash$ \raisebox{3pt}{$\epsilon$}
& 80 & 110 & 140 & 170 & 200
& 80 & 110 & 140 & 170 & 200
& 80 & 110 & 140 & 170 & 200
& rate \\
\midrule
\multirow{5}{*}{\rotatebox{90}{$\phi_{\mathrm{simple}}$}}
& 80 & 2.37 &      &      &      &      & 1.32 &      &      &
&      & 2.94 &      &      &      &      & 0.05\\
& 110 & 1.81 & 1.48 &      &      &      & 0.99 & 0.86 &      &
&      & 2.26 & 1.88 &      &      &      & 0.07\\
& 140 & 1.75 & 1.41 & 1.22 &      &      & 0.93 & 0.77 & 0.68 &
&      & 2.11 & 1.69 & 1.4 &      &      & 0.1\\
& 170 & 1.83 & 1.35 & 1.14 & 1.05 &      & 0.96 & 0.75 & 0.64 & 0.6
&      & 2.14 & 1.65 & 1.33 & 1.15 &      & 0.14\\
& 200 & 1.93 & 1.41 & 1.11 & 0.97 & 0.95 & 1.06 & 0.75 & 0.61 & 0.56
& 0.6 & 2.37 & 1.74 & 1.36 & 1.16 & 1.09 & 0.17\\
\midrule 
\multirow{5}{*}{\rotatebox{90}{regr. $\phi_{\mathrm{Epa}}$}}
& 80 & 3.1 &      &      &      &      & 1.52 &      &      &      &
   & 2.77 &      &      &      &      & 0.05\\
   & 110 & 2.74 & 1.99 &      &      &      & 1.39 & 1.0 &      &
&      & 2.53 & 1.81 &      &      &      & 0.07\\
& 140 & 3.02 & 2.08 & 1.56 &      &      & 1.54 & 1.05 & 0.79 &
&      & 2.76 & 1.9 & 1.39 &      &      & 0.1\\
& 170 & 3.09 & 2.13 & 1.6 & 1.31 &      & 1.61 & 1.09 & 0.83 & 0.69
&      & 2.85 & 1.95 & 1.46 & 1.16 &      & 0.14\\
& 200 & 3.19 & 2.2 & 1.68 & 1.36 & 1.15 & 1.63 & 1.1 & 0.84 & 0.71 &
0.63 & 2.91 & 2.04 & 1.52 & 1.21 & 1.01 & 0.17\\
\bottomrule
\end{tabular}
\end{center}
\end{table}

\begin{figure}[H] 
\includegraphics[height=7.5cm]{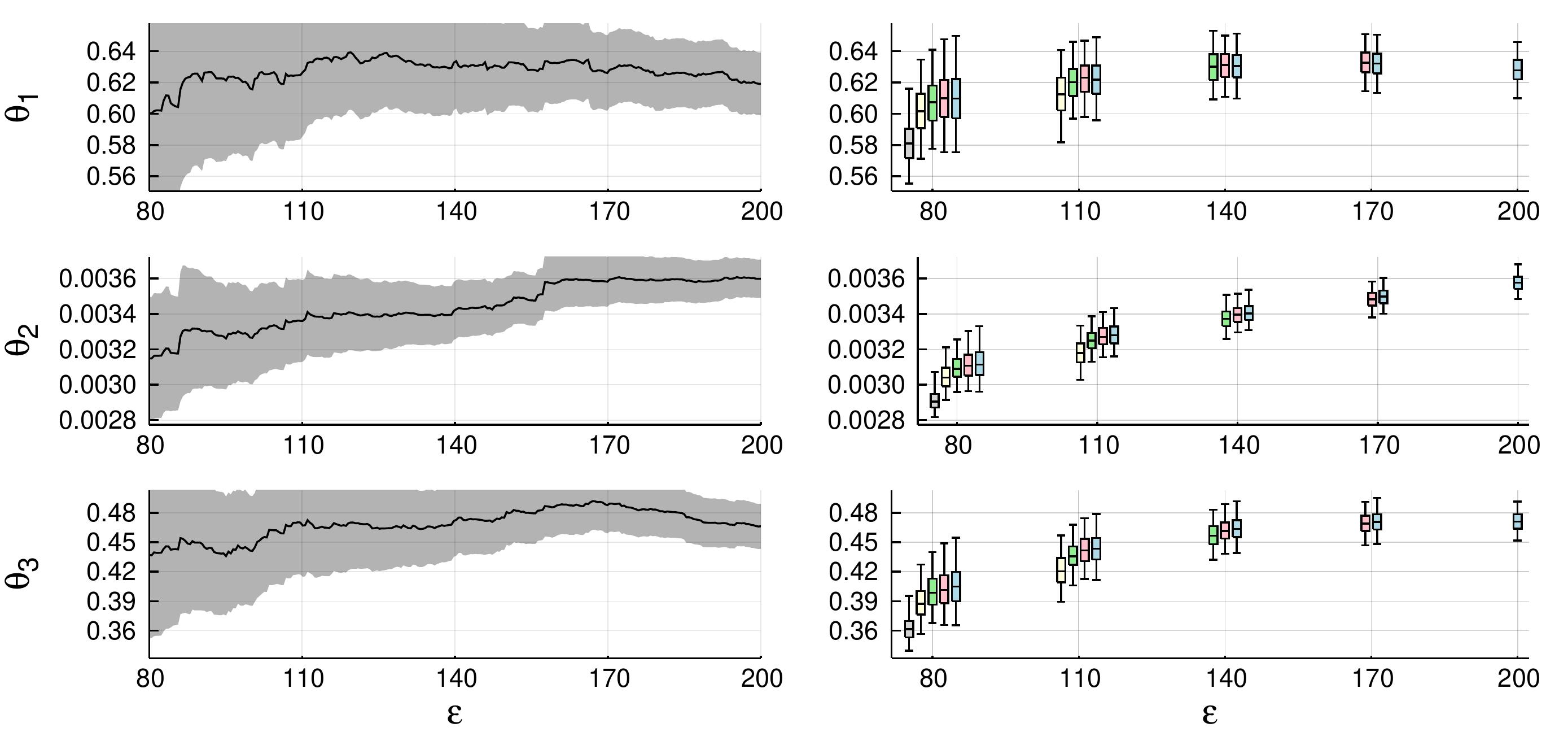}
\caption{Lotka-Volterra model with simple cut-off and step size
  $n^{-2/3}$.
Estimates from single run of \abcmcmc($200$) (left) and
  estimates from 1,000 replications of
  \abcmcmc($\simulationepsilon$) for 
  $\simulationepsilon\in\{80, 110, 140, 170,
    200\}$ indicated by colours.} 
\label{fig:lotkavolterra-stepsize-comparison}
\end{figure}

\begin{table}
    \caption{Coverages of confidence intervals 
      from \abcmcmc($\simulationepsilon$)
      for tolerance
      $\epsilon=80$, with fixed tolerance and with
      adaptive tolerance in the Lotka-Volterra model.} 
      \small
\begin{center}
 \begin{tabular}{lcccccccccccc} 
\toprule
    & \multicolumn{6}{c}{Post-correction, simple cut-off} 
    & \multicolumn{6}{c}{Regression, Epanechnikov cut-off} \\
    \cmidrule(lr){2-7} 
    \cmidrule(lr){8-13} 
    & \multicolumn{5}{c}{Fixed tolerance} & Adapt  
    & \multicolumn{5}{c}{Fixed tolerance} & Adapt  \\
    \cmidrule(lr){2-6} \cmidrule(lr){7-7}
    \cmidrule(lr){8-12} \cmidrule(lr){13-13}
$\simulationepsilon$ & 80.0 & 110.0 & 140.0 & 170.0 & 200.0 & 122.6 & 80.0 & 110.0 & 140.0
& 170.0 & 200.0 & 122.6\\
\midrule
$\theta_1$ & 0.8 & 0.97 & 0.99 & 0.99 & 1.0 & 0.93 & 0.75 & 0.92 & 0.93 & 0.93 &
0.96 & 0.9\\
$\theta_2$ & 0.73 & 0.94 & 0.98 & 0.98 & 0.99 & 0.84 & 0.76 & 0.93 & 0.94 & 0.96
& 0.98 & 0.9\\
$\theta_3$ & 0.74 & 0.94 & 0.98 & 0.99 & 0.99 & 0.86 & 0.68 & 0.87 & 0.9 & 0.92 &
0.95 & 0.83\\
\bottomrule
\end{tabular}
\end{center}
\end{table}


\begin{table}
    \setlength{\tabcolsep}{0.7ex}
    \caption{Frequencies of the 95\% confidence intervals and mean
      acceptance rates in the
      Lotka-Volterra experiment with step size $n^{-2/3}$.}
    \label{tab:lotkavolterra-stepsize-confidence} 
    \small
    \begin{center}
    \begin{tabular}{llcccccccccccccccc}
        \toprule
        \small
        &
        &\multicolumn{5}{c}{$f(\theta)=\theta_1$}
        &\multicolumn{5}{c}{$f(\theta)=\theta_2$}
        &\multicolumn{5}{c}{$f(\theta)=\theta_3$}
        & \raisebox{-1ex}{Acc.}
        \\
        \cmidrule(lr){3-7}
        \cmidrule(lr){8-12}
        \cmidrule(lr){13-17}
& 
\raisebox{-1.5pt}{$\simulationepsilon$} 
        $\backslash$ \raisebox{3pt}{$\epsilon$}
& 80 & 110 & 140 & 170 & 200
& 80 & 110 & 140 & 170 & 200
& 80 & 110 & 140 & 170 & 200
& rate \\
\midrule
\multirow{5}{*}{\rotatebox{90}{$\phi_{\mathrm{simple}}$}}
& 80 & 0.32 &      &      &      &      & 0.11 &      &      &
&      & 0.11 &      &      &      &      & 0.07\\
& 110 & 0.91 & 0.78 &      &      &      & 0.76 & 0.52 &      &
&      & 0.79 & 0.56 &      &      &      & 0.09\\
& 140 & 0.97 & 0.96 & 0.91 &      &      & 0.95 & 0.88 & 0.8 &
&      & 0.96 & 0.9 & 0.86 &      &      & 0.12\\
& 170 & 0.98 & 0.98 & 0.97 & 0.94 &      & 0.97 & 0.97 & 0.93 & 0.87
&      & 0.97 & 0.97 & 0.95 & 0.91 &      & 0.15\\
& 200 & 0.99 & 0.98 & 0.98 & 0.96 & 0.93 & 0.99 & 0.99 & 0.98 & 0.94
& 0.87 & 0.98 & 0.98 & 0.97 & 0.94 & 0.92 & 0.18\\
\midrule 
\multirow{5}{*}{\rotatebox{90}{regr. $\phi_{\mathrm{Epa}}$}}
& 80 & 0.34 &      &      &      &      & 0.41 &      &      &
&      & 0.25 &      &      &      &     & 0.07\\
& 110 & 0.86 & 0.81 &      &      &      & 0.88 & 0.87 &      &
&      & 0.81 & 0.82 &      &      &     & 0.09\\
& 140 & 0.94 & 0.94 & 0.92 &      &      & 0.95 & 0.95 & 0.96 &
&      & 0.9 & 0.91 & 0.91 &      &     & 0.12\\
& 170 & 0.95 & 0.95 & 0.95 & 0.95 &      & 0.96 & 0.97 & 0.97 & 0.97
&      & 0.92 & 0.94 & 0.94 & 0.95 &     & 0.15\\
& 200 & 0.95 & 0.95 & 0.95 & 0.95 & 0.94 & 0.96 & 0.97 & 0.97 & 0.98
& 0.98 & 0.92 & 0.94 & 0.95 & 0.95 & 0.95 & 0.18\\
\bottomrule
\end{tabular}
\end{center}
\end{table}

\begin{table}
    \setlength{\tabcolsep}{0.6ex}
    \caption{Root mean square errors and acceptance rates in the 
      Lotka-Volterra experiment with step size $n^{-2/3}$.}
    \label{tab:lotkavolterra-stepsize-rmse} 
    \small
    \begin{center}
    \begin{tabular}{llcccccccccccccccc}
        \toprule
        \small
        &
        &\multicolumn{5}{c}{$f(\theta)=\theta_1$}
        &\multicolumn{5}{c}{$f(\theta)=\theta_2$}
        &\multicolumn{5}{c}{$f(\theta)=\theta_3$}
        & \raisebox{-1ex}{Acc.}
        \\
        \cmidrule(lr){3-7}
        \cmidrule(lr){8-12}
        \cmidrule(lr){13-17}
& 
\raisebox{-1.5pt}{$\simulationepsilon$} 
        $\backslash$ \raisebox{3pt}{$\epsilon$}
& 80 & 110 & 140 & 170 & 200
& 80 & 110 & 140 & 170 & 200
& 80 & 110 & 140 & 170 & 200
& rate \\
\midrule
\multirow{5}{*}{\rotatebox{90}{$\phi_{\mathrm{simple}}$}}
& 80 & 3.24 &      &      &      &      & 2.2 &      &      &      &
   & 4.67 &      &      &      &      & 0.07\\
   & 110 & 2.12 & 2.14 &      &      &      & 1.14 & 1.38 &      &
 &      & 2.69 & 3.17 &      &      &      & 0.09\\
 & 140 & 1.87 & 1.56 & 1.49 &      &      & 0.89 & 0.81 & 0.79 &
 &      & 2.1 & 1.82 & 1.63 &      &      & 0.12\\
 & 170 & 1.77 & 1.27 & 1.05 & 0.96 &      & 0.87 & 0.68 & 0.59 &
 0.59 &      & 2.14 & 1.6 & 1.31 & 1.19 &      & 0.15\\
 & 200 & 1.94 & 1.45 & 1.2 & 1.11 & 1.08 & 0.95 & 0.69 & 0.59 & 0.54
 & 0.57 & 2.44 & 1.95 & 1.68 & 1.58 & 1.52 & 0.18\\
\midrule 
\multirow{5}{*}{\rotatebox{90}{regr. $\phi_{\mathrm{Epa}}$}}
 & 80 & 2.67 &      &      &      &      & 1.14 &      &      &
 &      & 2.17 &      &      &      &      & 0.07\\
 & 110 & 2.88 & 2.18 &      &      &      & 1.27 & 0.9 &      &
 &      & 2.36 & 1.76 &      &      &      & 0.09\\
 & 140 & 2.67 & 1.98 & 1.61 &      &      & 1.38 & 1.02 & 0.83 &
 &      & 2.57 & 1.91 & 1.54 &      &      & 0.12\\
 & 170 & 2.89 & 1.98 & 1.49 & 1.21 &      & 1.46 & 0.98 & 0.74 &
 0.61 &      & 2.63 & 1.79 & 1.34 & 1.08 &      & 0.15\\
 & 200 & 3.57 & 2.85 & 2.46 & 4.93 & 1.2 & 1.82 & 1.41 & 1.21 & 1.42
 & 0.63 & 3.11 & 2.32 & 1.88 & 1.81 & 1.22 & 0.18\\
 \bottomrule
\end{tabular}
\end{center}
\end{table}

\begin{table}
    \caption{Root mean square errors of estimators
      from \abcmcmc($\simulationepsilon$)
      for tolerance
      $\epsilon=80$, with fixed tolerance and with
      adaptive tolerance in the Lotka-Volterra model with step size
      $n^{-2/3}$.} 
    \label{tab:rmse-lotkavolterra-adaptation-vs-stepsize} 
\small
\begin{center}
\begin{tabular}{lcccccccccccc} 
\toprule
    & \multicolumn{6}{c}{Post-correction, simple cut-off} 
    & \multicolumn{6}{c}{Regression, Epanechnikov cut-off} \\
    \cmidrule(lr){2-7} 
    \cmidrule(lr){8-13} 
    & \multicolumn{5}{c}{Fixed tolerance} & Adapt  
    & \multicolumn{5}{c}{Fixed tolerance} & Adapt  \\
    \cmidrule(lr){2-6} \cmidrule(lr){7-7}
    \cmidrule(lr){8-12} \cmidrule(lr){13-13}
$\simulationepsilon$& 80 & 110 & 140 & 170 & 200 & 122.6 & 80 & 110 & 140
& 170 & 200 & 122.6\\
\midrule 
$\theta_1$ $(\times 10^{-2})$& 3.24 & 2.12 & 1.87 & 1.77 & 1.94 & 1.8 & 2.67 & 2.88 & 2.67 & 2.89 &
3.57 & 2.57\\
$\theta_2$ $(\times 10^{-4})$& 2.2 & 1.14 & 0.89 & 0.87 & 0.95 & 1.04 & 1.14 & 1.27 & 1.38 & 1.46 &
1.82 & 1.28\\
$\theta_3$ $(\times 10^{-2})$& 4.67 & 2.69 & 2.1 & 2.14 & 2.44 & 2.34 & 2.17 & 2.36 & 2.57 & 2.63 &
3.11 & 2.34\\
\bottomrule
\end{tabular}
\end{center}
\end{table}


\end{document}